\let\csname equation*\endcsname\relax
\let\csname endequation*\endcsname\relax
\newcommand{\utilde}[1]{\underaccent{\tilde}{#1}}
\numberwithin{equation}{section}
\numberwithin{figure}{section}
\newcommand{\mainmatter}{%
  \setcounter{footnote}{0}%
  \patchcmd{\@makefntext}{\fnsymbol}{\arabic}{}{}%
  \patchcmd{\@thefnmark}{\fnsymbol}{\arabic}{}{}%
  \def\@makefnmark{\textsuperscript{\arabic{footnote}}}%
}
\theoremstyle{plain}%
\newtheorem{theorem}{Theorem}[section]
\newtheorem{proposition}[theorem]{Proposition}%
\newtheorem{lemma}[theorem]{Lemma}
\theoremstyle{remark}%
\newtheorem{remark}[theorem]{Remark}%
\theoremstyle{definition}%
\newtheorem{definition}[theorem]{Definition}%
\begin{document}

\title[Massless single-particle bundles]{Bundle Theoretic Descriptions of Massless Single-Particle State Spaces; How do we perceive a moving quantum particle}

\author{Heon Lee}

\address{Department of Mathematical Sciences, Seoul National University, 1, Gwanak-ro, Gwanak-gu, Seoul, 08826, Republic of Korea}
\ead{heoney93@gmail.com}
\vspace{10pt}
\begin{indented}
\item[]February 2023
\end{indented}

\begin{abstract}
Recently, a bundle theoretic description of massive single-particle state spaces, which is better suited for Relativistic Quantum Information Theory than the ordinary Hilbert space description, has been suggested. However, the mathematical framework presented in that work does not apply to massless particles. It is because, unlike massive particles, massless particles cannot assume the zero momentum state and hence the mass shell associated with massless particles has non-trivial cohomology. To overcome this difficulty, this paper suggests a new framework that can be applied to massless particles. Applications to the cases of massless particles with spin-1 and 2, namely photon and graviton, will reveal that the field equations, the gauge conditions, and the gauge freedoms of Electromagnetism and General Relativity naturally arise as manifestations of an inertial observer's perception of the internal quantum states of a photon and a graviton, respectively. Finally, we show that gauge freedom is exhibited by all massless particles, except those with spin-0 and 1/2.
\end{abstract}

%
%
\submitto{\jpa}
%
%
%

\mainmatter

\section{Introduction}
\label{sec:1}

How do we perceive the internal quantum states of a moving particle? This issue of relativistic perception\footnote{For a precise definition of this terminology, see Sect.~2.5 of \cite{lee2022b}.} in Quantum Mechanics (QM) did not seem to have been taken seriously by the physics community before the dawn of Relativistic Quantum Information Theory (RQI).

In a generic communication scenario of RQI, one wants to use the internal quantum system of a single particle, which is usually finite-dimensional, as an information carrier. For example, in a scenario where an electron is exploited, the information sender encodes a qubit of information into the spin degrees of freedom of the electron, which is a 2-dimensional quantum Hilbert space, and sends it to a receiver, whose relativistic motion state may not be known to the sender. So, how the internal quantum state of the electron prepared by the sender is perceived by a different inertial observer becomes an essential question for RQI communication protocols.

However, a difficulty arises. Namely, a quantum particle cannot assume a definite momentum state. Hence, the best that the sender can do in the above scenario, for example, is to prepare a superposition of spin states of an electron corresponding to each possible momentum state that the electron can assume in such a way that the information that the sender wants to communicate is correctly perceived by an information receiver who might be in an arbitrary inertial motion state.

In this strategy, it is very important to take into account the fact that the internal quantum system of a moving particle might be perceived differently by a fixed inertial observer depending on each possible motion state of the particle. The most natural mathematical language which can accommodate this possibility is bundle theory. In this language, one can construct a vector bundle over the set of all possible motion states of a particle in whose fibers the moving particle's internal quantum states as perceived by a fixed inertial observer are encoded.

The recent paper \cite{lee2022b} considered two kinds of vector bundles to test whether one can construct a vector bundle over the set of all possible motion states of a \textit{massive} particle in such a way that the fibers of the bundle correctly reflects the perception of a fixed inertial observer. The first one, which was called \textit{boosting bundle}, was shown to be inadequate for this purpose even though the approaches based on this bundle have been the standard way to describe single-particle state spaces in the context of RQI. In contrast, the second one, which was called \textit{perception bundle}, was shown to fulfill this requirement. Moreover, using this bundle, \cite{lee2022b} showed that the Dirac equation and the Proca equations, which are fundamental equations of Quantum Field Theory (QFT) obeyed by massive particles with spin-1/2 and 1, respectively, emerge as manifestations of an inertial observer's perception of the internal quantum states of massive particles with spin-1/2 and 1, respectively.

However, one limitation of the work \cite{lee2022b} is that the theory applies only to massive particles. Massless particles need a separate treatment since for one thing, since massless particles cannot assume the zero momentum state, the bundles responsible for the description of massless particles are in general non-trivial, and for another, prominent massless particles such as photon and graviton exhibit \textit{parity inversion symmetry}, which must be taken into account in order to describe accurately these kinds of particles.

The present paper is for this treatment of massless particles. More precisely, this paper aims to develop a mathematical framework that will enable us to construct massless analogues of the perception bundles of \cite{lee2022b}, which should be vector bundles over the set of all possible motion states of massless particles, the fibers of which correctly reflect the perception of inertial observers. After the construction of the bundles, we are going to look at some of the theoretical implications that they entail. Specifically, by applying the massless perception bundle construction to the case of photon and graviton, which are examples of massless particles, we will see that Maxwell's equations in vacuum, the Lorentz gauge condition, and the gauge freedom of Electromagnetism (EM) emerge as manifestations of an inertial observer's perception of the internal quantum states of a photon, and Einstein's field equations in vacuum, the traceless-transverse gauge condition for symmetric 2-tensors, and the gauge freedom of linearized gravity in General Relativity (GR) emerge as manifestations of an inertial observer's perception of the internal quantum states of a graviton, respectively.

This paper is organized as follows. In Sect.~\ref{sec:2}, we identify all massless single-particle state spaces among the irreducible representations of the group $G:= \mathbb{R}^4 \ltimes SL(2,\mathbb{C})$ and show that they are classified by one numerical invariant called \textit{helicity}. Sect.~\ref{sec:3} explores some of the peculiarities of massless particles which prevent us from applying the same framework developed in \cite{lee2022b} for massive particles to the massless case, while deferring most of the proofs into Appendix. In particular, we will see that the first Chern classes of the bundles responsible for the description of massless particles are equal to the helicities of the corresponding particles and the direct sum of two massless particle representations with opposite helicities is an irreducible unitary representation of the group $\tilde{G}$, which is obtained by adjoining an involutive group element called \textit{parity inversion} to the group $G$. The latter fact is an essential ingredient for the definition of the \textit{spin} of a massless particle.

Sect.~\ref{sec:4} gives the boosting bundle description for massless particles, which is only a minor modification of that for massive particles. Based on this description, we present a brief survey on the early development of the RQI of massless particles. Sect.~\ref{sec:5} is about the perception bundle construction for massless particles. It gives rise to the concept of \textit{gauge freedom} and suggests a viewpoint on how one should interpret the wave functions representing massless particles. This construction and interpretation are applied to photon in Sect.~\ref{sec:6} and to graviton in Sect.~\ref{sec:7}, where one finds that the gauge freedom defined in Sect.~\ref{sec:5} becomes precisely those of EM and GR, respectively. The theoretical implications of the perception bundle description mentioned above are also investigated in these two sections. Finally, in Sect.~\ref{sec:8}, we apply the construction of Sect.~\ref{sec:5} to massless particles with arbitrary spin and show that the gauge freedom is exhibited by all massless particles except the ones with spin-0 and 1/2.

\paragraph{Notations and conventions}

\hfill

Before we begin, we summarize here some notations and conventions that will be used throughout the paper.

As the present paper is a continuation of the study \cite{lee2022b}, we will have many occasions to refer to the results of the latter. So, when referring to one of them, we will indicate it by prefixing "\cite{lee2022b}" to it. For example, "\cite{lee2022b}~Theorem~5.5" refers to Theorem~5.5 of \cite{lee2022b}.

Of course, we use all the notations adopted in \cite{lee2022b} so that for example, the Minkowski metric $\eta$ on $\mathbb{R}^4$ is denoted as
\begin{equation}\label{eq:1.1}
\langle x , y \rangle := \eta_{\mu \nu} x^\mu y^\nu = x_\mu y^\mu = x^0 y^0 - x^1 y^1 - x^2 y^2 - x^3 y^3
\end{equation}
and for $x \in \mathbb{R}^4$, we write
\begin{subequations}\label{eq:1.2}
\begin{eqnarray}
\tilde{x} &=& \begin{pmatrix} x^0 + x^3 & x^1 - i x^2 \\ x^1 + i x^2 & x^0 - x^3 \end{pmatrix}
 \\
\utilde{x} &=& \begin{pmatrix} x_0 + x_3 & x_1 - i x_2 \\ x_1 + i x_2 & x_0 - x_3  \end{pmatrix}
\end{eqnarray}
\end{subequations}
and the map $\kappa : SL(2, \mathbb{C}) \rightarrow SO^\uparrow (1,3)$ defined by
\begin{subequations}\label{eq:1.3}
\begin{eqnarray}
\left(\kappa (\Lambda) x \right)^\sim &=& \Lambda \tilde{x} \Lambda^\dagger  \\
\left( \kappa (\Lambda) x \right)_\sim &=& \Lambda^{\dagger -1} \utilde{x} \Lambda^{-1}
\end{eqnarray}
\end{subequations}
is a double covering homomorphism, and we write $\Lambda x := \kappa(\Lambda) x$ for $\Lambda \in SL(2, \mathbb{C})$ and $x \in \mathbb{R}^4$, etc. (For more details, see \cite{lee2022b}~Sect.~2.)

In addition to these, we write
\begin{equation}\label{eq:1.4}
| + \rangle = \begin{pmatrix} 1 \\ 0 \end{pmatrix}, \quad | - \rangle = \begin{pmatrix} 0 \\ 1 \end{pmatrix} \in \mathbb{C}^2
\end{equation}
to denote the helicity up/down states of a massless particle, respectively (cf. Sect.~\ref{sec:6.1}).

In contrast to Eq.~(\ref{eq:1.1}), the standard Hermitian dot product on $\mathbb{C}^n$ will be denoted as $\cdot$ for all $2 \leq n \in \mathbb{N}$, i.e., for $v,w \in \mathbb{C}^n$,
\begin{equation}\label{eq:1.5}
v \cdot w = \sum_{j=1} ^n \overline{v_j} w_j = v^\dagger w.
\end{equation}

\section{The single-particle state spaces for massless particles}\label{sec:2}

In \cite{lee2022b}~Sect.~3, we defined the \textit{single-particle state spaces} as the irreducible representation spaces of the group $G:= \mathbb{R}^4 \ltimes SL(2, \mathbb{C})$ (cf. \cite{wigner}), identified among them \textit{massive particles}, which are the irreducible representations of $G$ associated with the orbits $X_m ^\pm$ (cf. \cite{lee2022b}~Proposition~4.3), and observed that they are classified by two numerical invariants called \textit{mass} and \textit{spin}.

In this section, we continue the classification process that was set in \cite{lee2022b}~Remark~4.4 by analyzing the representations associated with the orbits $X_0 ^\pm$, which are called \textit{massless particles} (For a motivation for this terminology, see the remarks preceding \cite{lee2022b}~Definition~4.5). We will see that they are classified by one numerical invariant called \textit{helicity}.

\begin{proposition}\label{proposition:2.1}
Denote $H:= SL(2, \mathbb{C})$. Then, the little group for $p_0 ^{\pm}$ is
\begin{equation}\label{eq:2.1}
    K:=H_{p_0 ^{\pm}} = \left\{  \begin{pmatrix}z & b \\ 0 & \overline{z} \end{pmatrix} : z \in \mathbb{T}, b \in \mathbb{C} \right\}
\end{equation}
and the map $\Lambda : K \rightarrow E(2) := \mathbb{C} \ltimes U(1) $ defined by
\begin{equation}\label{eq:2.2}
\Lambda \begin{pmatrix} z & b \\ 0 & \overline{z} \end{pmatrix} = (z b , z^2)
\end{equation}
is a double covering homomorphism. Here, $\mathbb{T}$ is the unit circle.
\end{proposition}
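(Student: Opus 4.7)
The plan is to proceed in three stages: first compute the stabilizer $K = H_{p_0^\pm}$ directly from the double cover $\kappa$ of~(\ref{eq:1.3}), then verify by matrix multiplication that $\Lambda$ is a group homomorphism, and finally show that it is $2$-to-$1$ onto $E(2)$.

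For the first stage, I would take the standard reference points $p_0^\pm = \pm(1,0,0,1)$, for which~(\ref{eq:1.2}) gives $\tilde{p}_0^\pm = \pm\begin{pmatrix} 2 & 0 \\ 0 & 0 \end{pmatrix}$. Writing a general element of $SL(2,\mathbb{C})$ as $\begin{pmatrix} a & b \\ c & d\end{pmatrix}$ and expanding the stabilizer condition $\Lambda \tilde{p}_0^\pm \Lambda^\dagger = \tilde{p}_0^\pm$ coming from~(\ref{eq:1.3}), one gets $2\begin{pmatrix} |a|^2 & a\bar{c} \\ c\bar{a} & |c|^2\end{pmatrix} = \pm\tilde{p}_0^\pm$, which forces $|a|=1$ and $c=0$. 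The unit determinant condition then yields $d=\bar{a}$, so setting $a=z\in\mathbb{T}$ reproduces exactly the form in~(\ref{eq:2.1}). The overall sign drops out, so the same stabilizer serves both $p_0^+$ and $p_0^-$.

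For the second stage, I would multiply two elements of $K$; the product is again upper-triangular and its entries can be read off directly. After feeding it through $\Lambda$ and simplifying with $z\bar{z}=1$ for $z\in\mathbb{T}$, the result matches the semidirect-product law in $E(2) = \mathbb{C}\rtimes U(1)$, with $U(1)$ acting on $\mathbb{C}$ by ordinary scalar multiplication. Both sides land on $(z_1 b_1 + z_1^2 z_2 b_2,\, z_1^2 z_2^2)$. This is really the only place where one needs to pin down a convention — namely that for the semidirect product — and it is the one mild subtlety in the argument.

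For the final stage, the kernel equation $\Lambda(M)=(0,1)$ forces $z^2=1$ and $zb=0$, i.e.\ $z=\pm 1$ and $b=0$, so $\ker\Lambda = \{\pm I\}$. Surjectivity is immediate: given $(v,w)\in\mathbb{C}\times U(1)$, pick either square root $z\in\mathbb{T}$ of $w$ and set $b=v/z$. Together these identify $\Lambda$ as a $2$-to-$1$ surjective homomorphism, which is precisely the double-cover property claimed. I do not anticipate any genuine obstacle: once the form of the $\kappa$ action and the semidirect-product convention are fixed, the whole proposition reduces to routine algebraic bookkeeping.
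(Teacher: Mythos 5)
Your proposal is correct and is exactly the ``straightforward calculation using the definitions'' that the paper invokes without writing out: computing the stabilizer condition $\Lambda\,\tilde{p}_0^\pm\,\Lambda^\dagger=\tilde{p}_0^\pm$ to get the upper-triangular form, checking the homomorphism property against the semidirect-product law on $E(2)$, and identifying $\ker\Lambda=\{\pm I\}$ together with surjectivity. No gaps; the only point you rightly flag as convention-dependent (the $E(2)$ product) is handled consistently with Eq.~(\ref{eq:2.2}).
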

\begin{proof}
The proof is a straightforward calculation using the definitions (cf. Eq.~(\ref{eq:1.3})). 
\end{proof}

Again, a version of Mackey machine can be used to find all irreducible representations of $K$. Those who are uninterested in the proof can jump right into Theorem~\ref{theorem:2.4}. We need the following two theorems which together are the summation of Theorems~6.37--6.41 of \cite{folland2015}.

\begin{theorem}\label{theorem:2.2}
Let $G$ be a Lie group. Suppose $N \trianglelefteq G$ is a closed abelian normal subgroup of $G$ and $G$ acts regularly on $\hat{N}$ (cf. \cite{lee2022b}~Sect.~4). Then, the followings hold.
\begin{enumerate}
\item If $\nu \in \hat{N}$ and $\sigma$ is an irreducible representation of $G_\nu$ such that $\sigma(n) = \nu (n) I_{\mathcal{H}_\sigma}$ for ${}^\forall n \in N$, then $\textup{Ind}_{G_\nu} ^G (\sigma )$ is an irreducible representation of $G$.

\item Every irreducible representation of $G$ is equivalent to one of this form.

\item $\textup{Ind}_{G_\nu} ^G (\nu \sigma) $ and $\textup{Ind}_{G_{\nu'}} ^G (\nu' \sigma') $ are equivalent if and only if $\nu$ and $\nu'$ belong to the same orbit, say $\nu' = x \nu$, and $h \mapsto \sigma(h)$ and $h \mapsto \sigma'(xhx^{-1})$ are equivalent representations of $H_\nu$.  
\end{enumerate}
\end{theorem}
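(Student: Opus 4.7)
The plan is to exploit the abelian normal subgroup $N$ via the SNAG (Stone--Naimark--Ambrose--Godement) theorem, which decomposes any unitary representation of $N$ as a direct integral $\int_{\hat N}^\oplus \nu \, dE(\nu)$ against a projection-valued measure $E$ on $\hat N$. For an irreducible representation $\pi$ of $G$, conjugation by $\pi(g)$ intertwines $E$ with its pushforward under the $G$-action on $\hat N$, so the support of $E$ is $G$-invariant; regularity of the action then forces this support to be a single orbit $G \cdot \nu$, and the relationship between the restriction $\pi|_N$ and a representation built over $G/G_\nu$ is captured by Mackey's imprimitivity theorem.

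For part (i), I would establish irreducibility of $\textup{Ind}_{G_\nu}^G(\sigma)$ using Mackey's irreducibility criterion: an induced representation from a closed subgroup $H$ is irreducible provided there is no nontrivial intertwiner between $\sigma$ and any of its conjugates $\sigma^g$ for $g \notin H$. Under our hypothesis $\sigma|_N = \nu \cdot I_{\mathcal{H}_\sigma}$, the restriction $\sigma^g|_N$ acts as the character $g \cdot \nu$, which differs from $\nu$ precisely when $g \notin G_\nu$; a standard Schur-type argument then forces the intertwiner space to be trivial, so the irreducibility of $\sigma$ promotes to the induced representation.

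For part (ii), given an irreducible $\pi$ of $G$, restrict it to $N$ and apply SNAG; by the argument above the spectral measure is supported on a single orbit $G \cdot \nu$. The imprimitivity theorem then furnishes a unitary equivalence between $\pi$ and $\textup{Ind}_{G_\nu}^G(\sigma)$ for some representation $\sigma$ of $G_\nu$ whose restriction to $N$ must be a multiple of $\nu$ (because the measure is concentrated at $\nu$ in the fiber over the base point). Irreducibility of $\pi$ forces $\sigma$ to be irreducible because, in the regular setting, induction implements a $*$-isomorphism between the intertwiner algebras of $\sigma$ and of $\textup{Ind}_{G_\nu}^G(\sigma)$.

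Part (iii) then follows from the functorial behavior of induction under conjugation: replacing $\nu$ by $\nu' = x \cdot \nu$ conjugates the stabilizer, $G_{\nu'} = x G_\nu x^{-1}$, and transports any representation $\sigma'$ of $G_{\nu'}$ to the representation $h \mapsto \sigma'(x h x^{-1})$ of $G_\nu$, whereupon induced representations from conjugate data are unitarily equivalent; conversely, the orbit of $\nu$ is an invariant of $\pi$ since it is the support of the spectral measure of $\pi|_N$, forcing $\nu$ and $\nu'$ to lie on the same orbit whenever the induced representations are equivalent. The main technical obstacle throughout is the regularity hypothesis on the action of $G$ on $\hat N$: without it, the orbit space fails to be a standard Borel space, the clean decomposition into orbit-induced representations breaks down, and one is pushed into the much more delicate setting of virtual subgroups and measured groupoids, where the trichotomy of (i)--(iii) no longer holds in the stated form.
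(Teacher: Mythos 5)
The paper offers no proof of this statement at all: it is imported verbatim as "the summation of Theorems~6.37--6.41 of \cite{folland2015}", so there is nothing internal to compare against. Your sketch --- SNAG decomposition of $\pi|_N$, ergodicity plus regularity concentrating the spectral measure on a single orbit, the imprimitivity theorem to realize $\pi$ as $\textup{Ind}_{G_\nu}^G(\sigma)$ with the commutant isomorphism giving irreducibility and the equivalence criterion, and conjugation-functoriality of induction for part (iii) --- is a correct outline of exactly the standard argument in that cited reference, so you have reproduced the intended proof rather than diverged from it.
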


\begin{theorem}\label{theorem:2.3}
Suppose $\nu \in \hat{N}$ can be extended to a unitary representation $\tilde{\nu} : G_\nu \rightarrow \mathbb{T}$ (note that $N \subseteq G_\nu$ by the definition of the $G$-action on $\hat{N}$). If $\rho : G_\nu /N \rightarrow U(\mathcal{H}_\rho)$ is an irreducible unitary representation, then the representation $\sigma : G_\nu \rightarrow U(\mathcal{H}_\rho)$ defined by
\begin{equation}\label{eq:2.3}
\sigma(y) = \tilde{\nu}(y) \rho (yN), \quad y \in G_\nu
\end{equation}
is an irreducible representation of $G_\nu$ satisfying $\sigma(n) = \nu (n) I_{\mathcal{H}_\rho}$. Furthermore, every irreducible representation of $G_\nu$ satisfying Eq.~(\ref{eq:2.3}) arises in this way. 
\end{theorem}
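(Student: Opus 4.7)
The plan is to verify the four assertions of the theorem in sequence, with the central observation being that $\tilde{\nu}$ takes values in $\mathbb{T}$, hence acts by scalars and commutes with everything.

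First I would check that the formula $\sigma(y) = \tilde{\nu}(y)\rho(yN)$ really defines a (strongly continuous) unitary representation. Continuity and unitarity are immediate from the corresponding properties of $\tilde{\nu}$ and $\rho$, and the homomorphism property is the computation
\begin{equation*}
\sigma(y_1 y_2) = \tilde{\nu}(y_1)\tilde{\nu}(y_2)\rho(y_1 N)\rho(y_2 N) = \tilde{\nu}(y_1)\rho(y_1 N)\tilde{\nu}(y_2)\rho(y_2 N) = \sigma(y_1)\sigma(y_2),
\end{equation*}
where the crucial middle equality uses that $\tilde{\nu}(y_2)$ is a scalar and therefore commutes with $\rho(y_1 N)$.

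Second, the condition $\sigma(n) = \nu(n) I_{\mathcal{H}_\rho}$ for $n \in N$ is a one-line check: since $\tilde{\nu}$ extends $\nu$ and $nN$ is the identity coset in $G_\nu/N$, we have $\sigma(n) = \tilde{\nu}(n)\rho(eN) = \nu(n) I_{\mathcal{H}_\rho}$. Third, for irreducibility I would use Schur's lemma. If $T \in B(\mathcal{H}_\rho)$ commutes with every $\sigma(y)$, then dividing by the scalar $\tilde{\nu}(y)$ shows that $T$ commutes with every $\rho(yN)$, and since $yN$ ranges over all of $G_\nu/N$, irreducibility of $\rho$ forces $T$ to be a scalar. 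Hence $\sigma$ is irreducible.

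Finally, for the converse statement I would reverse the construction. Given any irreducible unitary $\sigma$ on $\mathcal{H}_\sigma$ with $\sigma(n) = \nu(n) I_{\mathcal{H}_\sigma}$, I define
\begin{equation*}
\rho(yN) := \tilde{\nu}(y)^{-1}\sigma(y), \qquad y \in G_\nu.
\end{equation*}
Well-definedness is the only slightly delicate point and is the step I would expect to be the main (minor) obstacle: for $y' = yn$ with $n \in N$, I compute $\tilde{\nu}(yn)^{-1}\sigma(yn) = \tilde{\nu}(y)^{-1}\tilde{\nu}(n)^{-1}\sigma(y)\sigma(n) = \tilde{\nu}(y)^{-1}\sigma(y)$, using $\tilde{\nu}(n) = \nu(n) = \sigma(n)$ together with the fact that $\sigma(n)$ is scalar and so commutes past $\sigma(y)$. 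Once $\rho$ is well defined, continuity and the homomorphism property on $G_\nu/N$ follow from those of $\sigma$ and $\tilde{\nu}$ (again exploiting scalarity of $\tilde{\nu}$), and irreducibility of $\rho$ follows from that of $\sigma$ by the same Schur-lemma argument as above run in reverse. By construction $\sigma(y) = \tilde{\nu}(y)\rho(yN)$, so every such $\sigma$ indeed arises as in the theorem.
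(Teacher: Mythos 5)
Your proof is correct and is the standard direct verification; the paper itself does not prove this statement but imports it from Folland (Theorems 6.37--6.41 of the cited reference), where essentially the same scalar-commutation argument appears. One cosmetic point: the theorem's final clause literally says "satisfying Eq.~(2.3)" where it plainly means "satisfying $\sigma(n)=\nu(n)I_{\mathcal{H}_\rho}$ for all $n\in N$" -- you silently adopted the correct reading in your converse step, which is the right call.
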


Let's try to apply these two theorems to the group $K$. Identify the closed abelian normal subgroup $M := \left\{\begin{pmatrix} 1 & c \\ 0 & 1 \end{pmatrix} : c \in \mathbb{C} \right\} \trianglelefteq K$. There is a topological group isomorphism $\mathbb{R}^2 \xrightarrow[\cong]{ } \hat{M}$ given by $b \mapsto \xi_b $ where $\xi_b : M \rightarrow \mathbb{T}$ is defined as $ \xi_b \left( \begin{pmatrix} 1 & c \\ 0 & 1 \end{pmatrix} \right):= e^{i b \cdot c}$ (here $\cdot$ denotes the Euclidean dot product in $\mathbb{R}^2$). The natural action of $K$ on $\hat{M} $ defined in \cite{lee2022b}~Sect.~4 is calculated as
\begin{align*}
\left[ \begin{pmatrix} z & a \\ 0 & \overline{z} \end{pmatrix} \xi_b \right]  \left( \begin{pmatrix} 1 &c \\ 0 & 1 \end{pmatrix} \right) = \xi_b \left(  \begin{pmatrix} z & a \\ 0 & \overline{z} \end{pmatrix}^{-1} \begin{pmatrix} 1 & c \\ 0 & 1 \end{pmatrix} \begin{pmatrix} z & a \\ 0 & \overline{z} \end{pmatrix} \right)
 = \xi_b \left( \begin{pmatrix} 1 & \overline{z}^2 c \\ 0 & 1 \end{pmatrix} \right) \\ = e^{ i b \cdot (\overline{z}^2 c)} =e^{i (z^2 b) \cdot c } = \xi_{z^2 b} \left( \begin{pmatrix} 1 & c \\ 0 & 1 \end{pmatrix} \right)
\end{align*}
for $z \in \mathbb{T} \cong SO(2) $ and $a,b, c \in \mathbb{C} \cong \mathbb{R}^2 $. So, under the identification $\mathbb{R}^2 \cong \hat{M}$, we see that the action of $K$ on $\hat{M} $ is translated into the action of $K$ on $\mathbb{R}^2$ given by
\begin{equation}\label{eq:2.4}
\begin{pmatrix} z & a \\ 0 & \overline{z} \end{pmatrix} \cdot b = z^2 b.
\end{equation} 

Therefore, for each $0 \neq b \in \mathbb{R}^2$, the isotropy group is given by
\begin{equation}\label{eq:2.5}
    K_b = \left\{ \begin{pmatrix} \pm 1 & a \\ 0 & \pm 1  \end{pmatrix} : a \in \mathbb{C} \right\}
\end{equation}
and of course $K_0 = K$. Note that for each $0 \neq b \in \mathbb{R}^2$, the map $\xi_b \in \hat{M}$ can be extended to $ K_b \rightarrow \mathbb{T}$ as
\begin{equation}\label{eq:2.6}
   \xi_b \left( \begin{pmatrix} \pm 1 & a \\ 0 & \pm 1 \end{pmatrix} \right) = e^{\pm i b \cdot a}
\end{equation}
and obviously, the trivial representation corresponding to $b = 0 $ extends to the trivial representation $K_0 = K \rightarrow \mathbb{T}$.

So, we can apply Theorems~\ref{theorem:2.2}--\ref{theorem:2.3} to the group $K$ to obtain all irreducible representations of $K$. Notice that $K_b/M \cong \mathbb{Z}/2 \mathbb{Z} $ for $ 0\neq b \in \mathbb{R}$.

\begin{theorem}\label{theorem:2.4}
There are two classes of irreducible representations of $K$. The first class arises from the trivial orbit $\{0\}$. They are the irreducible representations of the group \begin{equation*}
K_0 /M \cong \left\{ \begin{pmatrix} z & 0 \\ 0 & \overline{z} \end{pmatrix} : z \in \mathbb{T} \right \} \cong \mathbb{T}
\end{equation*}
lifted to $K$, i.e., the representations $\eta_{n/2}  : K \rightarrow U(1)$ given by
\begin{equation}\label{eq:2.7}
    \eta_{n/2} \begin{pmatrix} z & b \\ 0 & \overline{z} \end{pmatrix} = z^n \quad ( n \in \mathbb{Z}),
\end{equation}
which are distinct for each $n \in \mathbb{Z}$.

The second class arises from the orbits $K \cdot b = |b| \mathbb{T}$ for $0 \neq b \in \mathbb{R}^2$. They are the representations obtained by taking $\textup{Ind}_{K_b} ^K $ to the following representations of $K_b$
\begin{subequations}\label{eq:2.8}
\begin{align}
&\begin{pmatrix} \pm 1 & a \\ 0 & \pm 1 \end{pmatrix} \mapsto  \exp ( \pm i b \cdot a ) \\
&\begin{pmatrix} \pm 1 & a \\ 0 & \pm 1 \end{pmatrix} \mapsto \pm \exp ( \pm i b \cdot a ).
\end{align}
\end{subequations}

Representations arising from different circles are inequivalent.
\end{theorem}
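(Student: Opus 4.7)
The plan is to feed the Mackey machine from Theorems~\ref{theorem:2.2}--\ref{theorem:2.3} the data just assembled for $K$ with its abelian normal subgroup $M$. The preceding calculation has translated $\hat M$ into $\mathbb{R}^2$ and the $K$-action on it into $b\mapsto z^2 b$, so the orbits are exactly $\{0\}$ and the circles $|b|\mathbb{T}$ for $b\neq 0$. Before invoking Theorem~\ref{theorem:2.2}, I would note that the action is regular: the non-negative $x$-axis meets each orbit exactly once and constitutes a Borel cross-section.

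For the trivial orbit $\{0\}$, the isotropy is all of $K$, and $\xi_0\equiv 1$ extends trivially. Theorem~\ref{theorem:2.3} then identifies the irreducible unitary representations $\sigma$ of $K$ with $\sigma|_M = I$ with the irreducible unitary representations of the quotient $K/M\cong\mathbb{T}$, which are the characters $z\mapsto z^n$, $n\in\mathbb{Z}$. Pulled back via the quotient map, these are precisely the $\eta_{n/2}$ of Eq.~(\ref{eq:2.7}); since $\textup{Ind}_K^K$ is the identity functor, the first family is already complete.

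For a nontrivial orbit, pick the representative $b\neq 0$; the isotropy is as in Eq.~(\ref{eq:2.5}), and $K_b/M\cong\mathbb{Z}/2\mathbb{Z}$. The main small computation is to verify that Eq.~(\ref{eq:2.6}) defines a unitary character $\tilde\xi_b : K_b\to\mathbb{T}$ extending $\xi_b$; this follows by expanding exponents after observing
\begin{equation*}
\begin{pmatrix}\epsilon_1 & a_1 \\ 0 & \epsilon_1\end{pmatrix}\begin{pmatrix}\epsilon_2 & a_2 \\ 0 & \epsilon_2\end{pmatrix} = \begin{pmatrix}\epsilon_1\epsilon_2 & \epsilon_1 a_2 + \epsilon_2 a_1 \\ 0 & \epsilon_1\epsilon_2\end{pmatrix}.
\end{equation*}
The two characters of $K_b/M$, trivial and sign, combined with $\tilde\xi_b$ via Theorem~\ref{theorem:2.3}, yield exactly the two representations of $K_b$ displayed in Eq.~(\ref{eq:2.8}); Theorem~\ref{theorem:2.2}(i)--(ii) then guarantees that inducing these up to $K$ produces irreducible representations and that every irreducible representation of $K$ is equivalent to one of the two constructions above.

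Finally, the inequivalence assertion is Theorem~\ref{theorem:2.2}(iii) applied verbatim: representatives of different circles are not $K$-translates of each other, so the induced representations cannot be equivalent; and within the same circle, the trivial and sign characters of $K_b/M$ are not conjugate by any element of $K_b$, so the two induced representations are also inequivalent. I anticipate no serious conceptual obstacle here --- once the orbit decomposition is recorded, the argument is essentially bookkeeping, the only computation of substance being the check that $\tilde\xi_b$ of Eq.~(\ref{eq:2.6}) really is a homomorphism.
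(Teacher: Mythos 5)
Your proposal is correct and follows essentially the same route as the paper, which likewise obtains Theorem~\ref{theorem:2.4} by feeding the orbit decomposition of $\hat M\cong\mathbb{R}^2$ under $b\mapsto z^2b$, the isotropy groups $K_0=K$ and $K_b$, and the extensions in Eq.~(\ref{eq:2.6}) into Theorems~\ref{theorem:2.2}--\ref{theorem:2.3}. Your added checks (regularity via the Borel cross-section and the verification that Eq.~(\ref{eq:2.6}) is a homomorphism) are details the paper leaves implicit, and they are correct.
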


In \cite{folland2008}, Folland states that the representations of the second class (that is, the ones in \eqref{eq:2.8}) do not correspond to any known physical particles. Weinberg explains this by saying "Massless particles are not observed to have any continuous degree of freedom (\cite{weinberg})" whereas the irreducible representations of the second class do exhibit a continuous degree of freedom by giving a continuum of representations.\footnote{They are called "continuous spin" in the physics literature.} So, from now on, we will exclude the second class, Eq.~(\ref{eq:2.8}), from our discussions.

Let's investigate the physical meaning of the constant $n \in \mathbb{Z}$ in Eq.~(\ref{eq:2.7}). Observe
\begin{equation}\label{eq:2.9}
\eta_{n/2} (e^{i \theta J^3} ) = \eta_{n/2} \begin{pmatrix} e^{ - \frac{i}{2} \theta} & 0 \\ 0 & e^{ \frac{i}{2} \theta} \end{pmatrix} = e^{- i \frac{n}{2} \theta}
\end{equation}
(cf. \cite{lee2022b}~Eq.~(4.19)) and hence
\begin{equation}\label{eq:2.10}
(\eta_{n/2} )_* (J^3) =  \left. \frac{d} {d \theta} \right|_{\theta = 0} (e^{- i \frac{ n}{2}  \theta} ) =  - i \frac{1}{2} n \in \mathfrak{gl}(\mathbb{C}) = \mathbb{C}.
\end{equation}

So, in particular, if we define $\hat{J}^3 := i (\eta_{n/2})_* (J^3)  $ as in \cite{lee2022b}~Eq.~(4.22), we obtain
\begin{equation}\label{eq:2.11}
\hat{J}^3 = \frac{1}{2} n \in \mathfrak{gl}(\mathbb{C}) = \mathbb{C}
\end{equation}
which shows that each vector in the irreducible representation space of $\eta_{n/2} $ (which is $\mathbb{C}$) is an eigenvector of the $x^3$-angular momentum operator $\hat{J}^3 \in \mathfrak{gl}(\mathbb{C})$ with eigenvalue $\frac{1}{2} n$ (this is why we indexed $\eta$ by $n/2$ rather than $n$).

Fix $ s \in \frac{1}{2} \mathbb{Z}$. Then,
\begin{equation}\label{eq:2.12}
\eta_s ( -I) = \eta_s (e^{2 \pi J^3} ) = e^{ 2 \pi (\eta_s)_* (J^3)} = e^{- 2 \pi i \hat{J}^3 } = (-1)^{2s}.
\end{equation}

Define
\begin{equation}\label{eq:2.13}
\pi_{0,  s} ^{\pm} := \pi_{p_0 ^{\pm} , \eta_{s}}
\end{equation}
for $s \in \frac{1}{2} \mathbb{Z}$ following the procedure of \cite{lee2022b}~Remark~4.4.

\begin{remark}\label{remark:2.5}
When a massless particle is moving through the $x^3$-axis (i.e., parallel to $p_0 ^\pm $), the angular momentum $x^3$-component is called the \textit{helicity} of the particle.
\end{remark}

This suggests the following definition.

\begin{definition}\label{definition:2.6}
The value $ s\in \frac{1}{2} \mathbb{Z}$ for the irreducible representation $\pi_{0, s} ^{\pm}$ is called the \textit{helicity} of the particles associated with this representation.
\end{definition}

Now, using \cite{lee2022b}~Theorem~4.2, we conclude

\begin{theorem}\label{theorem:2.7}
Excluding the unphysical class Eq.~(\ref{eq:2.8}), the irreducible representations associated with the orbits $X_0 ^{\pm}$ are classified by \textbf{helicity}. I.e., they are precisely given by
\begin{equation}\label{eq:2.14}
\left\{\pi_{0,s} ^{\pm} : s = 0 , \pm \frac{1}{2}, \pm 1, \pm \frac{3}{2}, \cdots  \right\}
\end{equation}
and they descend to projective representations of the group $\mathbb{R}^4 \ltimes SO^\uparrow (1,3)$ as in \cite{lee2022b}~Theorem~2.9 (cf. \cite{lee2022b}~Remark~2.11). In fact, $\pi_{0, s} ^\pm (-I) = (-1)^{2s} $.
\end{theorem}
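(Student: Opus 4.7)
The classification statement reduces to assembling three ingredients that are already in place. First, \cite{lee2022b}~Theorem~4.2 sets up the Mackey-type correspondence that every irreducible unitary representation of $G = \mathbb{R}^4 \ltimes SL(2,\mathbb{C})$ arising from a given $H$-orbit in $\hat{\mathbb{R}^4}$ is of the form $\pi_{p,\sigma}$ for $p$ a chosen base point on the orbit and $\sigma$ an irreducible unitary representation of the little group $H_p$. Specialising to the orbits $X_0^{\pm}$ with base points $p_0^{\pm}$, Proposition~\ref{proposition:2.1} identifies the little group as $K$. The classification of irreducibles of $G$ associated to $X_0^{\pm}$ is therefore in bijection with $\hat{K}$, and Theorem~\ref{theorem:2.4} partitions $\hat{K}$ into the two classes described there. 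Discarding the second (continuous-spin) class as the text has just motivated, the surviving irreducibles are exactly the $\pi_{0,s}^{\pm} = \pi_{p_0^{\pm},\eta_s}$ for $s \in \tfrac{1}{2}\mathbb{Z}$, which is the first half of the theorem.

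For the scalar value $\pi_{0,s}^{\pm}(-I)$, I would exploit that $-I$ lies in the centre of $SL(2,\mathbb{C})$ and in particular fixes $p_0^{\pm}$, so $-I \in K$. In the induced-representation realisation of $\pi_{0,s}^{\pm}$, a central element of $H$ that also lies in the little group acts, in every fibre of the induction bundle, by the scalar by which the inducing representation $\eta_s$ represents it; equivalently, one checks that conjugation by any coset representative leaves $-I$ invariant, so the cocycle data pull back to multiplication by $\eta_s(-I)$ on the whole $L^2$-section space. Combined with Eq.~(\ref{eq:2.12}), which gives $\eta_s(-I) = (-1)^{2s}$, this yields the stated identity $\pi_{0,s}^{\pm}(-I) = (-1)^{2s}$.

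Finally, the projective-representation clause is an immediate application of \cite{lee2022b}~Theorem~2.9 (cf.\ \cite{lee2022b}~Remark~2.11): that theorem converts any unitary representation of $\mathbb{R}^4 \ltimes SL(2,\mathbb{C})$ on which the kernel $\{\pm I\}$ of the double cover $\kappa$ acts by a scalar into a projective unitary representation of $\mathbb{R}^4 \ltimes SO^{\uparrow}(1,3)$. Since the previous paragraph shows that $-I$ acts by the scalar $(-1)^{2s}$ under $\pi_{0,s}^{\pm}$, the hypothesis of that theorem is met and the descent is automatic.

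Since every ingredient is essentially quoted from the cited results, there is no genuinely hard step; the one point deserving care is the computation of $\pi_{0,s}^{\pm}(-I)$, where one must check that the centrality of $-I$ really does make the induction factor trivially so that the induced representation inherits the scalar $\eta_s(-I)$. Everything else is bookkeeping and citation.
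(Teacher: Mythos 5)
Your proposal is correct and follows essentially the same route as the paper: the paper's proof simply defers to the proof of \cite{lee2022b}~Theorem~4.10 (the massive analogue), which consists of exactly the steps you spell out — the Mackey correspondence between irreducibles over $X_0^\pm$ and $\hat{K}$, the exclusion of the continuous-spin class, the observation that the central element $-I \in K$ acts fibrewise by $\eta_s(-I)=(-1)^{2s}$ from Eq.~(\ref{eq:2.12}), and the descent via \cite{lee2022b}~Theorem~2.9. Your explicit verification that centrality of $-I$ forces the induced representation to act by the scalar $\eta_s(-I)$ is the one point the paper leaves implicit, and it is handled correctly.
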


\begin{proof}
The proof is exactly the same as the proof of \cite{lee2022b}~Theorem~4.10. The only difference is that we use Eq.~(\ref{eq:2.12}) to show the last assertion.
\end{proof}

\section{Some peculiarities of massless particles}\label{sec:3}

In this section, we discuss some of the peculiarities of massless particles which prevent us from applying the same framework developed in \cite{lee2022b} for massive particles to the massless case. The first one is that massless particles cannot assume the momentum zero state and hence the vector bundles responsible for the description of massless particles are not trivial in general. The second one concerns the subtle concept of \textit{parity inversion symmetry} which is exhibited by prominent massless particles such as photon and graviton.

\subsection{The Chern classes of the massless particle bundles}\label{sec:3.1}

As in \cite{lee2022b}, we want to construct the boosting and perception bundles for massless particles. However, the definitions of the two bundles as presented in \cite{lee2022b}~Appendix~A cannot be directly applied to the massless case. To see why, let's try to apply it to this case. First, we need to fix a $G$-invariant measure on $H/K \cong X_0 ^\pm$.

\begin{proposition}\label{proposition:3.1}
The following is a $G$-invariant measure on the orbit $X_0 ^\pm \cong \mathbb{R}^3 \setminus \{0\}$.
\begin{equation}\label{eq:3.1}
    d\mu (p) \cong \frac{ d^3 \mathbf{p}}{ |p^0| }
\end{equation}

Here the identification $\mathbb{R}^3 \setminus \{0\} \cong X_0 ^\pm$ is given by
\begin{equation}\label{eq:3.2}
\mathbf{p} \mapsto (\pm | \mathbf{p}|, \mathbf{p}).
\end{equation}
\end{proposition}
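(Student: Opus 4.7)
The plan is to adapt the standard construction of the Lorentz-invariant measure on the mass shell to the massless case, following essentially the same argument used for the orbits $X_m^\pm$ in \cite{lee2022b}. Since the translation factor $\mathbb{R}^4 \subseteq G$ acts trivially on momentum space, it suffices to verify invariance under the $SL(2,\mathbb{C})$-action, which by Eq.~(\ref{eq:1.3}) factors through the action of $SO^\uparrow(1,3)$ on $\mathbb{R}^4$.

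The key step is to realize $\frac{d^3\mathbf{p}}{|p^0|}$ as a scalar multiple of the restriction to $X_0^\pm$ of a manifestly Lorentz-invariant measure on the full null cone. Concretely, I would consider the distribution
\begin{equation*}
d\nu(p) := \delta(\langle p, p\rangle)\, \chi_\pm(p)\, d^4 p,
\end{equation*}
where $\chi_\pm(p) := 1$ if $\pm p^0 > 0$ and $0$ otherwise. The Lebesgue measure $d^4 p$ is $SO^\uparrow(1,3)$-invariant because $|\det \Lambda| = 1$ for any Lorentz transformation; $\langle p, p\rangle$ is a Lorentz invariant by definition; and $\chi_\pm$ is invariant because the orthochronous component preserves the sign of $p^0$. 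Hence $d\nu$ is $G$-invariant. Applying the elementary identity
\begin{equation*}
\delta((p^0)^2 - |\mathbf{p}|^2) = \frac{\delta(p^0 - |\mathbf{p}|) + \delta(p^0 + |\mathbf{p}|)}{2|\mathbf{p}|}
\end{equation*}
and integrating out $p^0$ using $\chi_\pm$ along the parameterization~(\ref{eq:3.2}), one obtains $d\nu = \frac{1}{2|p^0|}\,d^3\mathbf{p}$ on $X_0^\pm$, so the measure~(\ref{eq:3.1}) is $G$-invariant up to the overall positive factor of $2$.

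I expect no serious obstacle beyond this routine computation, since it is in essence the classical derivation of the Lorentz-invariant measure on the light cone. The one thing to check is that nothing degenerates at the apex $\mathbf{p} = 0$, but since $X_0^\pm$ explicitly excludes the origin, the measure is well-defined and locally finite on the orbit. As a distribution-free alternative, one can perform a direct change of variables: given $\Lambda \in SL(2,\mathbb{C})$, write $\Lambda p = (p'^{0}, \mathbf{p}')$, differentiate the null constraint $(p'^{0})^2 = |\mathbf{p}'|^2$ to compute the Jacobian $\det(\partial \mathbf{p}'/\partial \mathbf{p})$, and verify that the resulting factor $|p'^{0}|/|p^0|$ exactly cancels the weight in~(\ref{eq:3.1}), yielding the desired invariance.
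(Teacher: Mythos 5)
Your argument is correct and is essentially the same as the one the paper relies on: the paper's proof simply defers to Ch.~1 of Folland's QFT text, which derives the invariant measure on the mass shell/light cone via exactly this $\delta(\langle p,p\rangle)\,\chi_\pm(p)\,d^4p$ construction. The factor of $2$ you flag is harmless since invariant measures on the orbit are unique only up to a positive scalar, and your remark about the excluded apex correctly addresses the only massless-specific subtlety.
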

\begin{proof}
For a proof, see Ch.~1 of \cite{folland2008}.  
\end{proof}

Consider the principal $K$-bundle $H \rightarrow H/K$. Fix $s \in \frac{1}{2} \mathbb{Z}$ and consider the bundle $\mathcal{E}_s$ associated with the representation $\eta_s :H \rightarrow U(1)$ (cf. \cite{lee2022b}~Proposition~A.2). All of the discussions in \cite{lee2022b}~Appendix~A, except those regarding the perception and boosting bundles, can be applied to this setting. In particular, \cite{lee2022b}~Lemma~A.9 shows that $\mathcal{E}_{s}$ is an Hermitian $G$-bundle, called the \textit{primitive bundle associated with $\eta_s$}, with the metric
\begin{equation}\label{eq:3.3}
g\big( [B, \zeta_1] , [B, \zeta_2 ] ) = \overline{\zeta_1 } \zeta_2
\end{equation}
where $B \in H = SL( 2, \mathbb{C})$ and $\zeta_1, \zeta_2 \in \mathbb{C}$, and the $G$-action
\begin{equation}\label{eq:3.4}
\Lambda (a, A) [B , \zeta] = [AB , e^{-i \langle A B p_0 ^\pm , a \rangle } \zeta]
\end{equation}
where $A, B \in H$, $a \in \mathbb{R}^4$, and $\zeta \in \mathbb{C}$.

Also, \cite{lee2022b}~Lemma~A.10 yields the following theorem.

\begin{theorem}\label{theorem:3.2}
The irreducible representation $\pi_{0,s} ^\pm$, which represents massless particles with helicity-$s$, is equivalent to the induced representation $u : G \rightarrow U\big(L^2 (H/K, \mathcal{E}_{s}; \mu , g)\big)$ defined as, for $(a, A) \in G$ and $\psi \in L^2(H/K , \mathcal{E}_{s} ; \mu , g)$,
\begin{equation}\label{eq:3.5}
u (a,A) \psi = \Lambda(a,A) \circ \psi \circ l_A ^{-1}.
\end{equation} 
\end{theorem}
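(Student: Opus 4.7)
The plan is to reduce the statement to the general induction machinery developed in \cite{lee2022b}~Appendix~A. The only substantive differences between the massive case treated there and the massless case here are the identity of the little group (now $K$ in place of the $SU(2)$-type subgroup) and the geometry of the orbit $X_0^\pm \cong \mathbb{R}^3 \setminus \{0\}$; the formal translation from induced representations to $L^2$-sections of an associated Hermitian bundle is oblivious to these changes, provided a $G$-invariant measure on the orbit is in hand, which is exactly what Proposition~\ref{proposition:3.1} supplies.

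First, I would unwind the Mackey construction of $\pi_{0,s}^\pm = \pi_{p_0^\pm, \eta_s}$ following the procedure of \cite{lee2022b}~Remark~4.4: it is $\mathrm{Ind}_{G_{p_0^\pm}}^G(\tilde{\eta}_s)$, where $G_{p_0^\pm} = \mathbb{R}^4 \ltimes K$ and $\tilde{\eta}_s(a,A) = e^{-i\langle p_0^\pm, a\rangle}\eta_s(A)$. Concretely this is realized on measurable maps $f : G \to \mathbb{C}$ satisfying the equivariance $f((a,A)(b,B)) = \tilde{\eta}_s(b,B)^{-1} f(a,A)$ for $(b,B) \in G_{p_0^\pm}$, with $G$ acting by left translation and the $L^2$-norm computed against the invariant measure $\mu$ of Proposition~\ref{proposition:3.1}, pulled back along the projection $G \to H/K$.

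Next, I would set up the standard dictionary between such equivariant functions and sections of the associated line bundle $\mathcal{E}_s$: to each equivariant $f$ assign the section $\psi_f(BK) := [B, f(0,B)]$ of $\mathcal{E}_s$. Using the Hermitian metric $g$ of Eq.~(\ref{eq:3.3}) and the measure $\mu$, one checks that this is an isometry between the induction Hilbert space and $L^2(H/K, \mathcal{E}_s; \mu, g)$. Then a direct computation, applying $(a,A)$ by left translation on $f$ and pushing the result through the dictionary, produces exactly the action of Eq.~(\ref{eq:3.5}): the base point is translated by $l_A^{-1}$, the bundle transport takes $[B,\zeta]$ to $[AB,\zeta]$, and the character from the $\mathbb{R}^4$-factor contributes the phase $e^{-i\langle ABp_0^\pm, a\rangle}$ of Eq.~(\ref{eq:3.4}).

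The only point where the massless case might be expected to cause trouble, and hence the main obstacle to watch, is the fact that $X_0^\pm$ is punctured at the origin: the zero-momentum state is excised, so the base of the bundle is non-compact and topologically non-trivial. However, since $\mu$ is a genuine $G$-invariant Borel measure on $X_0^\pm$ (Proposition~\ref{proposition:3.1}) and the Mackey construction is purely measure-theoretic, the puncture creates no integrability issue and the argument of \cite{lee2022b}~Lemma~A.10 transcribes verbatim, delivering the stated unitary equivalence. The nontriviality of $\mathcal{E}_s$ as a bundle (reflected later in its first Chern class) is therefore invisible at the representation-theoretic level proved here, and only surfaces when one attempts to trivialize $\mathcal{E}_s$, which is the subject of the subsequent subsections.
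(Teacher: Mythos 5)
Your proposal is correct and follows essentially the same route as the paper: the paper gives no separate argument for Theorem~\ref{theorem:3.2} but simply observes that the primitive-bundle machinery of \cite{lee2022b}~Appendix~A (in particular Lemma~A.10, the dictionary between the Mackey-induced representation and $L^2$-sections of the associated Hermitian $G$-bundle) applies verbatim once the $G$-invariant measure of Proposition~\ref{proposition:3.1} is fixed, exactly as you argue. Your closing remark that the non-triviality of $\mathcal{E}_s$ is invisible at this purely measure-theoretic stage and only obstructs the later trivialization also matches the paper's framing.
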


So, we see that the construction of the primitive bundle and expressing the induced representation in terms of the Hermitian $G$-bundle pose no problem. However, a problem arises when we try to define the perception and boosting bundles as in \cite{lee2022b}, which were the central focus of the paper.

To see what is at stake, observe that the definitions of the two bundles given there require the existence of a (continuous) representation of $H$ that extends $\eta_s$ and a (continuous) global section of the principal $K$-bundle $H \rightarrow H/K$, respectively. However, the following theorem implies that, for $s \neq 0$, both of them cannot exist since the existence of either of them would imply the triviality of the bundle (cf. \cite{lee2022b}~Theorem~A.7 and Lemma~A.7).

\begin{theorem}\label{theorem:3.3}
Let $\zeta$ be the 2-form on $X_0 ^\pm$ given by
\begin{equation}\label{eq:3.6}
\zeta := \frac{1}{4 \pi |p^0|}\left( p^1 d p^2 \wedge d p^3 + p^2 d p^3 \wedge d p^1 + p^3 d p^1 \wedge d p^2 \right),
\end{equation}
which is a generator of the cyclic group $ H^2 (X_0 ^\pm ; \mathbb{Z} ) \cong \mathbb{Z}$.

Then, for each $s \in \frac{1}{2} \mathbb{Z}$, the first Chern class of the bundle $\mathcal{E}_s$ is given by
\begin{equation}\label{eq:3.7}
c_1 ( \mathcal{E}_s ) = -2s [\zeta] \in H^2 (X_0 ^\pm ; \mathbb{Z}).
\end{equation}
\end{theorem}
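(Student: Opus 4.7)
The plan is to reduce the computation to the unit $2$-sphere $S^2 \subset X_0^\pm$ defined by $|\mathbf{p}|=1$, and to identify the restricted line bundle $\mathcal{E}_s|_{S^2}$ with a classical line bundle over $\mathbb{C}\mathbb{P}^1 \cong S^2$. Since $X_0^\pm \cong \mathbb{R}^3\setminus\{0\}$ deformation retracts onto this sphere, the restriction map $H^2(X_0^\pm;\mathbb{Z}) \to H^2(S^2;\mathbb{Z})$ is an isomorphism, so everything can be done over $S^2$. A short calculation in spherical coordinates $\mathbf{p}=(\sin\theta\cos\phi,\sin\theta\sin\phi,\cos\theta)$ gives $\zeta|_{S^2}=\frac{1}{4\pi}\sin\theta\,d\theta\wedge d\phi$, whose integral over $S^2$ equals $1$; hence $[\zeta]$ is the positive generator of $H^2(X_0^\pm;\mathbb{Z})$, and the task is to show $c_1(\mathcal{E}_s)|_{S^2} = -2s\,[\zeta|_{S^2}]$.

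To identify $\mathcal{E}_s|_{S^2}$, I would use that $SU(2)\subset SL(2,\mathbb{C})=H$ acts transitively on $S^2$ through the covering $SU(2)\twoheadrightarrow SO(3)$ rotating the spatial components of $p_0^\pm$, with stabilizer the diagonal circle $\mathbb{T}=\{\mathrm{diag}(z,\overline{z}):|z|=1\}\subset K$. Hence the Hopf bundle $SU(2)\to S^2$ sits inside $H\to X_0^\pm$ above $S^2$ as a reduction of the structure group from $K$ to $\mathbb{T}$. Because $\eta_s$ factors through $K/M\cong \mathbb{T}$ (where $M$ is the unipotent upper-triangular subgroup) and restricts on $\mathbb{T}$ to the character $z\mapsto z^{2s}$ by Eq.~(\ref{eq:2.7}), $\mathcal{E}_s|_{S^2}$ is the line bundle associated to the Hopf bundle via $z\mapsto z^{2s}$. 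This is classically the $(-2s)$-th tensor power of the hyperplane bundle on $\mathbb{C}\mathbb{P}^1$, whose first Chern class is $-2s$ times the positive generator of $H^2(S^2;\mathbb{Z})$. Pulling back along the retraction then gives $c_1(\mathcal{E}_s)=-2s[\zeta]$ in $H^2(X_0^\pm;\mathbb{Z})$.

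The main obstacle will be pinning down the sign and orientation conventions consistently (left- versus right-actions in the associated bundle construction, the orientation of $S^2\cong\mathbb{C}\mathbb{P}^1$, and the sign convention for the Hopf bundle's first Chern class) so that the final answer matches the stated $-2s$ rather than $+2s$. A cleaner alternative, which sidesteps this bookkeeping, is a direct Chern--Weil computation: choose a principal connection on $H\to X_0^\pm$, push its curvature through the Lie-algebra character $(\eta_s)_\ast:\mathfrak{k}\to i\mathbb{R}$ computed in Eq.~(\ref{eq:2.10}), and use the $H$-equivariance of $\mathcal{E}_s$ together with the observation that the space of $H$-invariant closed $2$-forms on $X_0^\pm$ is one-dimensional (spanned by $\zeta$) to reduce the whole calculation to evaluating the resulting curvature $2$-form on a single pair of tangent vectors at the base point $p_0^\pm$, where the coefficient $-2s$ appears directly from Eq.~(\ref{eq:2.11}).
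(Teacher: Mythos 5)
Your primary route is sound and is genuinely more classical than the paper's, though the two are built on the same underlying geometry. The paper does not restrict to $S^2$ directly; instead it passes to the intermediate homogeneous space $H/L$, where $L=\{\mathrm{diag}(z,\overline{z}): |z|=1\}$ is exactly the circle subgroup you identify, shows $\rho: H/L \to H/K$ is a homotopy equivalence because $K/L\cong\mathbb{C}$ is contractible, builds an explicit connection on the principal $L$-bundle $H\to H/L$ from the Maurer--Cartan form projected onto $\mathbb{R}J^3$, computes its curvature, proves the Lorentz invariance of $\zeta$, and then shows the difference between the Chern--Weil form and $\varphi^*\zeta$ is a closed basic form that vanishes on $SU(2)$, hence is exact by the deformation retract $r(A)=(A^{\dagger-1}A^{-1})^{1/2}A$ of $H/L$ onto $SU(2)/L\cong S^2$. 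Your Hopf-bundle identification buys brevity by outsourcing the curvature computation to the classical theory of line bundles over $\mathbb{CP}^1$, but two points deserve care: first, the identification $\mathcal{E}_s|_{S^2}\cong SU(2)\times_{z\mapsto z^{2s}}\mathbb{C}$ needs the (easy) observation that the natural map between these associated bundles is a fiberwise isomorphism, i.e.\ that $H|_{S^2}$ reduces to $SU(2)$ --- which is where the contractibility of $K/L$ enters in the paper; second, the sign is not mere bookkeeping here, since the paper's entire Appendix~A.2 computation exists precisely to pin down that the answer is $-2s[\zeta]$ and not $+2s[\zeta]$, so in a complete write-up you would have to fix orientations on $S^2$ and a convention for the Hopf character and trace the sign through, which is comparable in effort to the paper's explicit evaluation of $\mathrm{Im}(\overline{X_{21}}Y_{21}-\overline{Y_{21}}X_{21})$.

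Your alternative Chern--Weil route has a gap as stated: reducing the computation to a single point via invariance requires an $H$-invariant principal connection on $H\to H/K$, and the existence of an $\mathrm{Ad}(K)$-invariant complement to $\mathfrak{k}$ in $\mathfrak{h}$ is not obvious since $K$ (a double cover of $E(2)$) is not reductive --- this is exactly why the paper retreats to the circle subgroup $L$, for which the projection $\beta:\mathfrak{h}\to\mathbb{R}J^3$ is checked to be $\mathrm{Ad}(L)$-equivariant. Your claim that the invariant closed $2$-forms on $X_0^\pm$ form a one-dimensional space is correct (the unipotent part of $K$ kills the candidates involving the null direction), but without an invariant connection the curvature form on the base need not be invariant, only its cohomology class, so the pointwise evaluation does not go through.
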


Therefore, one cannot use the definitions of the two bundles as given in \cite{lee2022b}~Appendix~A to do a similar analysis for massless particles.

Apart from this remark that is relevant to the main body of the paper, the theorem also provides us with the following interesting fact: \textit{the helicity of a massless particle is a topological invariant, namely, the first Chern class, of the bundle underlying the representation describing the massless particle and it arises precisely because massless particles cannot have momentum zero.}

The proof of Theorem~\ref{theorem:3.3} is not simple and requires a fair amount of differential geometric machinery which will not be needed in the sequel. Hence it has been exiled to Appendix~\ref{sec:A}.

\subsection{Pairty inversion symmetry}\label{sec:3.2}

For the discussion of massless particles, we need to consider a subtle physical concept called \textit{parity}. This is because the paramount example of massless particle in the RQI literature, namely photon, exhibits a physical symmetry known as \textit{parity inversion symmetry} (cf. \cite{weinberg}). Therefore, we need to enlarge our symmetry group to include \textit{parity inversion}. (Cf. \cite{lee2022b}~Sect.~2)

For the moment, let $\tilde{G}$ be this enlarged symmetry group that is generated by $G$ and the parity inversion symmetry. By the discussion of \cite{lee2022b}~Sect.~2, we naturally make the following definitions.

\begin{definition}\label{definition:3.4}
A pair $(U, \mathcal{H})$ is called a \textit{quantum system with Lorentz and parity inversion symmetry} if $\mathcal{H}$ is a Hilbert space and $U: \tilde{G} \rightarrow U(\mathcal{H})$ is a unitary representation.

The irreducible unitary representation spaces of the group $\tilde{G}$ are called \textit{single-particle state spaces with parity inversion symmetry}. 
\end{definition}

These are the smallest possible quantum systems that are capable of testing the Lorentz and parity inversion symmetry in QM.

\begin{theorem}\label{theorem:3.5}
Let $ 0 \neq s \in \frac{1}{2} \mathbb{Z}$. Then, the representation $\pi_{0,s} ^\pm \oplus \pi_{0, -s} ^\pm$ of $G$ can be canonically extended to an irreducible unitary representation of $\tilde{G}$.
\end{theorem}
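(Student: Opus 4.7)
The plan is to construct an explicit unitary operator $\mathcal{P}$ on the combined space $\mathcal{H} := \mathcal{H}_{+s}\oplus \mathcal{H}_{-s}$, where $\mathcal{H}_{\pm s}$ denotes the representation space of $\pi_{0,\pm s}^{\pm}$ on the fixed orbit $X_{0}^{\pm}$, such that $\mathcal{P}$ exchanges the two summands and, together with $\pi := \pi_{0,s}^{\pm}\oplus \pi_{0,-s}^{\pm}$, assembles into a unitary representation of $\tilde{G}$. The physical motivation is that parity inversion reverses the spatial momentum but preserves the angular momentum, so it flips the sign of helicity; this is precisely why the two inequivalent helicity-$\pm s$ irreducibles must be paired in order to admit a parity action.

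First I would work in the bundle realization of Theorem~\ref{theorem:3.2}, identifying $\mathcal{H}_{\pm s}\cong L^{2}(X_{0}^{\pm},\mathcal{E}_{\pm s};\mu,g)$. The parity matrix $P = \operatorname{diag}(1,-1,-1,-1)$ preserves the orbit $X_{0}^{\pm}$ and the measure $\mu$ of Proposition~\ref{proposition:3.1}. The crucial bundle-theoretic point is that
\[
c_{1}(P^{*}\mathcal{E}_{-s}) \;=\; P^{*}(2s[\zeta]) \;=\; -2s[\zeta] \;=\; c_{1}(\mathcal{E}_{s}),
\]
since $P$ is an orientation-reversing diffeomorphism of each 2-sphere $|\mathbf{p}|=r$ and hence acts by $-1$ on $H^{2}(X_{0}^{\pm};\mathbb{Z})$. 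By Theorem~\ref{theorem:3.3} this forces a Hermitian line bundle isomorphism $\Phi : P^{*}\mathcal{E}_{-s}\xrightarrow{\sim}\mathcal{E}_{s}$; the natural choice is obtained by lifting $P$ to $SL(2,\mathbb{C})$ via a half-turn about, say, the $x^{1}$-axis, which simultaneously conjugates $\eta_{-s}$ into $\eta_{s}$ on the little group $K$. Setting
\[
\mathcal{P}(\psi_{+s},\psi_{-s})(p) := \bigl(\Phi\,\psi_{-s}(Pp),\ \Phi^{-1}\,\psi_{+s}(Pp)\bigr)
\]
yields the desired operator.

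Next I would verify that $\mathcal{P}$ is unitary (immediate from $P$-invariance of $\mu$ and fiberwise unitarity of $\Phi$) and that it satisfies the intertwining identity
\[
\mathcal{P}\circ \pi(a,A)\circ \mathcal{P}^{-1} \;=\; \pi\bigl(Pa,(A^{\dagger})^{-1}\bigr),
\]
where $(a,A)\mapsto (Pa,(A^{\dagger})^{-1})$ is the natural parity automorphism of $G$ suggested by~\eqref{eq:1.3}; this reduces to the $K$-equivariance of $\Phi$ together with a change-of-variables $p\mapsto Pp$ in formula~\eqref{eq:3.5}. Together with a computation of $\mathcal{P}^{2}=\pm I$, this provides the data needed to extend $\pi$ to a unitary representation of $\tilde{G}$. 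For irreducibility, the inequivalence of $\pi_{0,s}^{\pm}$ and $\pi_{0,-s}^{\pm}$ for $s\neq 0$ (Theorem~\ref{theorem:2.7}) combined with Schur's lemma forces the $G$-commutant on $\mathcal{H}$ to be $\mathbb{C}\,I_{\mathcal{H}_{+s}}\oplus \mathbb{C}\,I_{\mathcal{H}_{-s}}$, so that the only closed $G$-invariant subspaces are $0,\mathcal{H}_{+s},\mathcal{H}_{-s}$, and $\mathcal{H}$; since $\mathcal{P}$ swaps the two middle ones, only $0$ and $\mathcal{H}$ remain $\tilde{G}$-invariant.

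The main obstacle I anticipate is pinning down $\Phi$ canonically: the lift of $P$ to $SL(2,\mathbb{C})$ is ambiguous up to the center $\{\pm I\}$, and this choice determines both the sign of $\mathcal{P}^{2}$ and the exact form of the parity automorphism of $G$. Fixing these choices consistently with the standard physics convention, and checking that the resulting $\tilde{G}$-representation is independent of the particular half-turn used to construct $\Phi$, is the non-trivial book-keeping that underlies the word ``canonically'' in the statement.
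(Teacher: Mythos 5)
Your proposal follows essentially the same route as the paper's Appendix~B: you identify the parity automorphism $(a,A)\mapsto(\mathcal{P}a,A^{\dagger-1})$ of $G$, construct the swap operator by intertwining $\pi_{0,s}^{\pm}\circ\delta$ with $\pi_{0,-s}^{\pm}$ via a half-turn combined with the $\dagger$-inverse (the paper's Lemmas~B.1--B.2 and Theorems~B.8--B.9), and obtain irreducibility from Schur's lemma together with the inequivalence of the two helicity representations (Theorem~B.3). The Chern-class consistency check is a nice extra observation not in the paper, and the only loose phrase is ``lifting $P$ to $SL(2,\mathbb{C})$'' --- since $P \notin SO^{\uparrow}(1,3)$ it admits no such lift, but the composite you actually intend (the outer automorphism $A\mapsto A^{\dagger-1}$ followed by a half-turn taking $\mathcal{P}p_{0}^{\pm}$ back to $p_{0}^{\pm}$) is exactly the paper's construction.
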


The extension of the representation $\pi_{0,s} ^+ \oplus \pi_{0, -s} ^+$ to $\tilde{G}$ for $0 \neq s \in \frac{1}{2} \mathbb{Z}$ has been used to describe a massless particle with parity inversion symmetry whose helicity takes values in $\{\pm s \}$. This motivates the following definition.

\begin{definition}\label{definition:3.6}
The nonnegative number $0 \leq s \in \frac{1}{2} \mathbb{Z}$ for the representation $\Pi_s := \pi_{0,s} ^\pm \oplus \pi_{0, -s} ^\pm$ is called the \textit{spin} of the massless particles with parity inversion symmetry that the canonical extension of $\Pi_s$ represents.
\end{definition}

The particles with parity inversion symmetry represented by the extension of $\Pi_s$ to $\tilde{G}$ will be called the \textit{massless particles with spin-$s$}. In this paper, we are only interested in the representation $\Pi_s$ and not in its extension to $\tilde{G}$. So, the proof of Theorem~\ref{theorem:3.5}, as well as the definition of the canonical extension of $\Pi_s$ to $\tilde{G}$, is deferred to Appendix~\ref{sec:B}.

\section{The boosting bundle description for massless particles; The RQI of massless particles}\label{sec:4}

We have just seen that the mathematical framework presented in \cite{lee2022b} does not apply to the massless particle case. However, at least for boosting bundle description, only a minor modification is required. In this section, we obtain the boosting bundle description for massless particles and briefly look at how it has been used in the RQI literature. The perception bundles will be taken up in the next section.

For the rest of the paper, we retain the notations of Sects.~\ref{sec:2}--\ref{sec:3} for the subgroups of $G := \mathbb{R}^4 \ltimes SL(2, \mathbb{C})$. In particular, $H = SL(2, \mathbb{C})$ and $K= H_{p_0 ^+}$ and hence $H/K \cong X_0 ^+$. We also fix a \textit{positive} half integer $ 0 <s \in \frac{1}{2} \mathbb{Z}$ and denote $\tilde{\eta}_{s} := \eta_s \oplus \eta_{-s}$ and $\Pi_{s} := \pi_{0,s} ^+ \oplus \pi_{0, -s} ^+$ (cf. Definition~\ref{definition:3.6}).

Moreover, we restrict our attention to the positive mass shell $X_0 := X_0 ^+$ and hence suppress all the $+$-superscripts from the expressions.\footnote{This is to follow the usual convention of the physics literature that deals with massless particles. At the time of writing, we are not aware of any treatment of the representations associated with the orbit $X_0 ^-$.} The Lorentz invariant measure $\mu$ and the identification $\mathbb{R}^3 \setminus \{0\} \cong X_0 $ as given in Proposition~\ref{proposition:3.1} will be used throughout.

\subsection{The boosting bundles for massless particles}\label{sec:4.1}

\paragraph{A choice of boostings}

\hfill

To obtain the boosting bundle description, we need to choose a global section for the bundle $H \rightarrow H/K \cong X_0 $ (cf. \cite{lee2022b}~Sect.~2.4). As we saw in Sect.~\ref{sec:3.1}, this bundle has no continuous global section. Hence, we have to content ourselves with a Borel one.

Fix $p \in X_0$, write $\hat {\mathbf{p}} := \frac{1}{|\mathbf{p}|} \mathbf{p}$. Since the boosting
\begin{equation*}
B_0 (|\mathbf{p}|) := e^{(\log |\mathbf{p}|) K^3} = \begin{pmatrix} \sqrt{|\mathbf{p}|} & 0 \\ 0 & \sqrt{|\mathbf{p}|}^{-1} \end{pmatrix}
\end{equation*}
maps $p_0 = (1,0,0,1)$ to $ (|\mathbf{p}|, 0 , 0 , |\mathbf{p}|)$, the element
\begin{equation}\label{eq:4.1}
L(p):= R(\hat{\mathbf{p}}) B_0 (|\mathbf{p}|) \in H
\end{equation}
maps $p_0 $ to $p$, where $R( \hat{\mathbf{p}}) \in SU(2)$ is any rotation that takes $\hat{\mathbf{z}}:= (0,0,1) \in \mathbb{R}^3$ to $\hat{\mathbf{p}} \in \mathbb{R}^3$. Irrespective of the exact form of $R(\hat{\mathbf{p}})$, this has been the standard choice of boostings for massless particles (cf. \cite{weinberg}).

Note that since $B_0 (|\mathbf{p}|)$ is a continuous function of $p \in X_0 $, one can never make a choice of the rotations $R(\hat{\mathbf{p}})$ in such a way that it becomes a continuous function of $p \in X_0 $ because that would imply the triviality of the bundle $H \rightarrow H/K$, which contradicts the result of Sect.~\ref{sec:3.1}.

Even though the exact formula for $R(\hat{\mathbf{p}})$ is not relevant to our discussions, we make the following choice for definiteness. For $ p \in X_0 $, we can write $\hat{\mathbf{p}} = (\sin \theta \cos \phi, \sin \theta \sin \phi, \cos \theta) \in \mathbb{R}^3$ for some $0 \leq \theta \leq \pi, \hspace{0.1cm} 0 \leq \phi \leq 2 \pi$. Observe that the matrix
\begin{align}\label{eq:4.2}
R(\hat{\mathbf{p}}) := e^{\phi J^3} e^{\theta J^2} e^{-\phi J^3} = \begin{pmatrix} e^{-i \frac{\phi}{2} } & 0 \\ 0 & e^{i \frac{\phi}{2}} \end{pmatrix} \begin{pmatrix} \cos \frac{\theta}{2} & -\sin \frac{\theta}{2} \\ \sin \frac{\theta}{2} & \cos \frac{\theta}{2} \end{pmatrix} \begin{pmatrix} e^{i \frac{\phi}{2} } & 0 \\ 0 & e^{-i \frac{\phi}{2}} \end{pmatrix} \nonumber \\
= \frac{1}{2 \cos \frac{\theta}{2}} \begin{pmatrix} 1+ \cos \theta & - e^{-i \phi} \sin \theta \\ e^{i \phi} \sin \theta & 1 + \cos \theta \end{pmatrix} \nonumber \\
= \frac{1}{\sqrt{2 (1+ \hat{\mathbf{p}}^3 ) } } \begin{pmatrix} 1+ \hat{\mathbf{p}}^3 & -( \hat{\mathbf{p}}^1 - i \hat{\mathbf{p}}^2 ) \\ \hat{\mathbf{p}}^1 + i \hat{\mathbf{p}}^2 & 1+ \hat{\mathbf{p}}^3 \end{pmatrix}
\end{align}
is a rotation that takes $\hat{\mathbf{z}} := (0,0,1) \in \mathbb{R}^3$ to $\hat{\mathbf{p}}$. Note that this map is a continuous map of $p \in X_0$ except on the connected ray in $X_0 \cong \mathbb{R}^3 \setminus \{0\}$ spanned by $-\hat{\mathbf{z}}$, which is a $\mu$-null set. Hence, Eq.~(\ref{eq:4.1}) is Borel measurable.

However, one should note that following \cite{weinberg}, the majority of RQI papers (e.g., the influential paper \cite{peres2003}) used $R(\hat{\mathbf{p}}) = e^{\phi J^3} e^{\theta J^2}$ for the rotations, unlike our choice Eq.~(\ref{eq:4.2}). Ours has been adopted, for example, in \cite{caban2003}.

\paragraph{The bundles}

\hfill

Since the induced representation construction respects the direct sum (easy to check from \cite{lee2022b}~Definition~4.1), we see
\begin{equation}\label{eq:4.3}
\Pi_s = \pi_{0,s} \oplus \pi_{0, -s} \cong \textup{Ind}_{G_{p_0}} ^G (e^{-i \langle p_0 , \cdot \rangle} [\eta_s \oplus \eta_{-s}]) = \textup{Ind}_{G_{p_0}} ^G (e^{-i \langle p_0 , \cdot \rangle} \tilde{\eta}_s).
\end{equation}

Applying \cite{lee2022b}~Theorem~5.5 to the representation $\tilde{\eta}_s : K \rightarrow U(2)$ and the (Borel) global section $L$ defined in Eqs.~(\ref{eq:4.1})--(\ref{eq:4.2}), we obtain the following theorem.

\begin{theorem}\label{theorem:4.1}
For $\Lambda \in H$ and $p \in X_0$, define the \textbf{Wigner transformation} (cf. \cite{lee2022b}~Eq.~(5.7)) as
\begin{equation}\label{eq:4.4}
W_L ( \Lambda, p) := L(\Lambda p)^{-1} \Lambda L(p) \in K.
\end{equation}

Then, the trivial bundle $X_0 \times \mathbb{C}^2$ endowed with the structure listed in Table~\ref{tab:1} becomes a Borel Hermitian $G$-bundle\footnote{Borel Hermitian $G$-bundles are the objects satisfying the same conditions as Hermitian $G$-bundles (cf. \cite{lee2022b}~Definition~5.2), but, all the appearing maps (such as local trivializations and the $G$-action, etc) are Borel measurable (but, not necessarily continuous).} over $X_0$, whose associated induced representation $U_{L,s}$ is unitarily equivalent to $\Pi_s$.
\end{theorem}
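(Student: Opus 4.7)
The plan is to mimic the proof of \cite{lee2022b}~Theorem~5.5 for the massive case, adjusting only for the fact that the chosen section $L$ is merely Borel measurable (which is why the target category is Borel Hermitian $G$-bundles rather than continuous ones).

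First I would check the basic claim that $W_L(\Lambda, p) \in K$. Since $L(p) p_0 = p$ and $L(\Lambda p)^{-1} (\Lambda p) = p_0$, the composition $L(\Lambda p)^{-1} \Lambda L(p)$ stabilizes $p_0$, hence lies in $H_{p_0} = K$. From Eqs.~(\ref{eq:4.1})--(\ref{eq:4.2}) and the fact that group multiplication and inversion in $H$ are continuous, Borel measurability of $L$ propagates to Borel measurability of $(\Lambda, p) \mapsto W_L(\Lambda, p)$, and hence of $(\Lambda, p) \mapsto \tilde{\eta}_s(W_L(\Lambda, p)) \in U(2)$.

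Next, I would verify that the prescribed structure on $X_0 \times \mathbb{C}^2$ (the standard Hermitian inner product on fibers, together with the $G$-action
\[
(a, \Lambda) \cdot (p, v) := \bigl( \Lambda p,\; e^{-i \langle \Lambda p, a \rangle}\, \tilde{\eta}_s(W_L(\Lambda, p))\, v \bigr)
\]
one expects from Table~\ref{tab:1}) actually defines a left $G$-action by fiberwise unitary Borel bundle maps. The only nontrivial point is the Wigner cocycle identity
\[
W_L(\Lambda_1 \Lambda_2, p) = W_L(\Lambda_1, \Lambda_2 p)\, W_L(\Lambda_2, p),
\]
which follows directly from the definition by telescoping. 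Unitarity of the fiber maps is automatic from unitarity of $\tilde{\eta}_s$, and Borel measurability is inherited from that of $W_L$.

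To exhibit the unitary equivalence $U_{L, s} \cong \Pi_s$, I would appeal to Theorem~\ref{theorem:3.2} applied componentwise to the direct sum $\tilde{\eta}_s = \eta_s \oplus \eta_{-s}$, realizing $\Pi_s$ on $L^2(H/K, \mathcal{E}_{\tilde{\eta}_s}; \mu, g)$ with the action of Eq.~(\ref{eq:3.5}). Then I would trivialize via $L$: set
\[
\Phi(\psi)(p) := \zeta \quad \text{where} \quad \psi(p) = [L(p), \zeta],
\]
which is well-defined $\mu$-almost everywhere since $L$ is defined on a conull subset of $X_0$ (the complement of the ray noted after Eq.~(\ref{eq:4.2})). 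A direct computation substituting $\psi \circ l_\Lambda^{-1}$ into Eq.~(\ref{eq:3.4}), and rewriting $[L(\Lambda p)\, W_L(\Lambda, p), \zeta] = [\Lambda L(p), \zeta]$ using the definition of $W_L$, shows that $\Phi$ intertwines Eq.~(\ref{eq:3.5}) with the action displayed above. Unitarity of $\Phi$ is immediate from Eq.~(\ref{eq:3.3}) and $G$-invariance of $\mu$.

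The only real obstacle is the Borel-versus-continuous issue: one must confirm that the constructions remain meaningful when $L$ has a measure-zero singular locus. Since all integrations are with respect to $\mu$ and the offending set is $\mu$-null, the trivialization defines $\Phi$ a.e., which is all that is needed for an $L^2$-isomorphism. This is the same mild concession that the classical Mackey theory makes, and it causes no substantive difficulty once the Hermitian $G$-bundle formalism of \cite{lee2022b}~Sect.~5 is relaxed to its Borel analogue.
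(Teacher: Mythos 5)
Your proposal is correct and follows essentially the same route as the paper, which simply invokes \cite{lee2022b}~Theorem~5.5 / Appendix~A applied to $\tilde{\eta}_s$ and the Borel section $L$, remarking that Borel measurability of all the maps suffices to retain the representation-theoretic content. Your write-up merely spells out the details (the stabilizer check, the cocycle identity, the trivialization intertwiner, and the $\mu$-null singular locus) that the paper leaves implicit.
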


\begin{table}[h]

\caption{The boosting bundle description for massless particles}
\label{tab:1}

\centering
\begin{tabular}{|m{1.3cm}|m{8.4cm}|}
\hline\noalign{\smallskip}
  & $E_{L, s}$  (The boosting bundle) \\
\noalign{\smallskip}\hline\noalign{\smallskip}
Bundle  &  $X_0 \times \mathbb{C}^2$ \\
\noalign{\smallskip}\hline\noalign{\smallskip}
Metric &  $h_L \Big( (p , v) , (p ,w) \Big)= v \cdot w$ \\
\noalign{\smallskip}\hline\noalign{\smallskip}
Action &   $\lambda_{L,s} (a,\Lambda) (p , v) = \Big(\Lambda p , e^{-i \langle \Lambda p , a \rangle} \tilde{\eta}_s \big( W_L(\Lambda , p)\big) v \Big)$ \\
\noalign{\smallskip}\hline\noalign{\smallskip}
Space & $\mathcal{H}_{L,s} = L^2 ( X_0  ; \mu ) \otimes \mathbb{C}^2$ \\
\noalign{\smallskip}\hline\noalign{\smallskip}
$\Pi_s $ & $U_{L,s} (a,\Lambda )\psi = \lambda_{L,s} (a,\Lambda) \circ \psi \circ \Lambda^{-1}$ \\
\noalign{\smallskip}\hline
\end{tabular}

\end{table}

\begin{proof}
Straightforward (cf. \cite{lee2022b}~Appendix~A). Note that all the involved maps are Borel measurable and hence the representation theoretic information is retained when passing from the primitive bundle for $\tilde{\eta}_s$ to the bundle of Table~\ref{tab:1}.
\end{proof}

We will denote the Borel Hermitian bundle of Table~\ref{tab:1} as $E_{L, s}$ and call it the \textit{boosting bundle associated with $L$}.

\paragraph{The representations}

\hfill

Because $W(\Lambda, p) \in K$ for $ { }^\forall \Lambda \in H$ and ${ }^\forall p \in X_0$, we see that there exists $\Theta (\Lambda , p ) \in \mathbb{R}$, which is called the \textit{Wigner phase}, such that
\begin{equation}\label{eq:4.5}
\tilde{\eta}_s (W( \Lambda, p ) ) = \begin{pmatrix} e^{is \Theta (\Lambda, p )} & 0 \\ 0 & e^{-is \Theta (\Lambda , p )} \end{pmatrix}.
\end{equation}

$\Theta$ can be chosen to be a Borel-measurable function on $H \times X_0$ by Eqs.~(\ref{eq:4.1}), (\ref{eq:4.2}), and (\ref{eq:4.4}). If we unwind the definition of the representation space in Table~\ref{tab:1} using the Wigner phase, we obtain
\begin{subequations}\label{eq:4.6}
\begin{equation}\label{eq:4.6a}
\mathcal{H}_{L,s} = L^2 (X_0 ; \mu) \otimes \mathbb{C}^2
\end{equation}
on which we have, for $\psi = \begin{pmatrix} \psi_+ \\ \psi_- \end{pmatrix} = \psi_+ | + \rangle + \psi_- |- \rangle \in \mathcal{H}_{L,s}$ (cf. Eq.~(\ref{eq:1.4})) and $(a, \Lambda) \in G$,
\begin{align}\label{eq:4.6b}
[ U_{L,s} (a, \Lambda) \psi] (p) = e^{-i \langle p, a \rangle} \begin{pmatrix} e^{is \Theta (\Lambda, \Lambda^{-1} p)} \psi_+ ( \Lambda^{-1} p ) \\ e^{-is \Theta (\Lambda, \Lambda^{-1} p)} \psi_- ( \Lambda^{-1} p ) \end{pmatrix} \nonumber \\
= e^{-i \langle p , a \rangle} \sum_{k = \pm} e^{i k s \Theta (\Lambda, \Lambda^{-1} p)} \psi_k (\Lambda^{-1} p) | k \rangle.
\end{align}
\end{subequations}

This representation, which is equivalent to $\Pi_s $ according to Theorem~\ref{theorem:4.1}, is precisely what has been used in the physics literature to describe a massless particle with spin-$s$ (cf. \cite{weinberg}).

\subsection{A brief survey on the RQI of massless particles}\label{sec:4.2}

Shortly after the pioneering papers \cite{peres2002, gingrich2002} that dealt with massive particles with spin-1/2 in the context of RQI appeared, similar treatments of massless particles were taken up in \cite{peres2003, gingrich2003, caban2003}. These three papers considered a communication scenario in which a photon, a massless particle with spin-$1$, is exploited as a qubit carrier just like the way an electron was treated in \cite{peres2002, gingrich2002}. More precisely, they conceived of a protocol in which the sender encodes a qubit of information in the helicity degrees of freedom expressed by the $\mathbb{C}^2$-component of Eq.~(\ref{eq:4.6a}) and examined a situation where the receiver might not be in the same motion state as the sender.

Consequently, just as in the massive particle case (cf. \cite{lee2022b}~Sect.~3.1), the momentum-independent measurement of the helicity of photons was the primary concern of these papers. So, they considered, for each state $\psi \in \mathcal{H}_{L,1}$, the density matrix $\rho$ corresponding to $\psi$
\begin{equation}\label{eq:4.7}
\rho := | \psi \rangle \langle \psi | \in L^2 (X_0 \times X_0 ; \mu \times \mu) \otimes M_2 (\mathbb{C})
\end{equation}
and defined the \textit{helicity reduced density matrix of $\psi$} as
\begin{equation}\label{eq:4.8}
\tau := \textup{Tr}_{L^2} \rho \in M_2 (\mathbb{C})
\end{equation}
where $\textup{Tr}_{L^2}$ is the partial trace with respect to the $L^2(X_0 , \mu)$-component of the Hilbert space.

Although \cite{caban2003} later found a class of single-photon states that have a Lorentz-covariant reduced density matrix $\tau$, the influential paper \cite{peres2003} remarked that, in general, $\tau$ has no transformation law (how it is transformed according to frame changes) because of the momentum dependent Wigner phase in Eq.~(\ref{eq:4.6b}) and suggested an effective helicity measurement scheme which has since been used throughout the RQI literature that dealt with photon (see, for example, the concluding remark of \cite{bradler2011}).\footnote{Despite its importance, we will not explain this scheme since it is beyond the present paper's primary concern.} \cite{lindner2003} went further to claim that $\tau$ is useless even when a single observer is concerned.

Also, the paper \cite{gingrich2003} considered a two-photon state $\psi' \in \mathcal{H}_{L,1} \otimes \mathcal{H}_{L,1}$, formed the density matrix $\rho' := |\psi' \rangle \langle \psi' |$, and defined the \textit{helicity reduced density matrix of the two-particle state $\psi'$} as
\begin{equation}\label{eq:4.9}
\tau' := \textup{Tr}_{L^2} \rho' \in M_2 (\mathbb{C}) \otimes M_2 (\mathbb{C}),
\end{equation}
which was then exploited to study the helicity entanglement of the two photons. The authors used an entanglement measure called \textit{log negativity} on this matrix and showed that the helicity entanglement does indeed vary depending on the motion state of an inertial observer.

\begin{remark}\label{remark:4.2}
The study of massless particles in the context of RQI is still an active area of research (cf. \cite{peres2003, caban2003, gingrich2003, lindner2003, terno2005, caban2007, bradler2008, landulfo2010, bradler2011, bradler2014, rembielinski2018a, rembielinski2018b, many2020, nagele2020, hoffmann2020}). However, to our best knowledge, all the papers that have ever appeared in the literature have used the representation space Eq.~(\ref{eq:4.6}) for the description of massless particles.

Although the survey on the RQI of massless particles given here isn't as detailed as that on the massive particle case given in \cite{lee2022b}~Sect.~3.1, we notice that the same perplexities posed in the massive particle case are still present in the massless particle case. As in \cite{lee2022b}~Sect.~3.2, we claim that these perplexities arise due to an inherent problem of the representation space Eq.~(\ref{eq:4.6}) concerning the concept of \textit{relativistic perception} (see \cite{lee2022b}~Sect.~2.5 for a precise definition of this concept), which can be seen most clearly when we adopt the vector bundle point of view to describe the representation.
\end{remark}

\subsection{The vector bundle point of view for the boosting bundles}\label{sec:4.3}

As in \cite{lee2022b}~Sect.~6.3, we consider two inertial observers Alice and Bob, whose classical observations are related by $(a, \Lambda) \in G$, i.e.,
\begin{equation}\label{eq:4.10}
x_B ^\mu = a ^\mu + \kappa(\Lambda)^\mu _ \nu x_A ^\nu
\end{equation}
where $x_A, x_B \in \mathbb{R}^4$ are Alice's and Bob's coordinate systems, respectively (cf. \cite{lee2022b}~Definition~2.7 and \cite{lee2022b}~Eq.~(2.11)). If Alice has prepared a particle in the state $\psi \in \mathcal{H}_{L,s}$, then Bob would perceive this particle as in the state $U_{L,s} (a,\Lambda) \psi \in \mathcal{H}_{L,s}$ according to the principle of Special Relativity (SR) (cf.\cite{lee2022b}~Sect.~2.3).

They also have the bundle $E_{L,s}$ for the description of the particle at their disposal (cf. \cite{lee2022b}~Sect.~6.3). For the above transformation law for wave functions to be true, Alice's bundle description $E_{L,s} ^A$ and Bob's bundle description $E_{L,s} ^B$ should be related by the $G$-action in Table~\ref{tab:1} as follows
\begin{gather}
\lambda_{L,s} (a, \Lambda ) : E_{L,s} ^A \rightarrow E_{L,s} ^B \nonumber \\
(p,v)^A \mapsto \left(  \Lambda p, e^{- i (\Lambda p)_\mu a^\mu} \tilde{\eta}_{s} \left(W_L ( \Lambda, p) \right) v \right)^B . \label{eq:4.11}
\end{gather}
so that $U_{L,s} (a, \Lambda) \psi = \lambda_{L,s} \circ \psi \circ \Lambda^{-1}$ holds.

By inserting $ (p_0 , v )^A$ and $(a, \Lambda) = (0, L(p))$ into Eq.~(\ref{eq:4.11}), we obtain
\begin{equation}\label{eq:4.12}
\lambda_{L,s} (0, L(p)) (p_0 , v)^A = (p, v)^B.
\end{equation}

As we will see in Sect.~\ref{sec:6.3}, for $s=1$, this equation implies that each fiber $(E_{L,1})_p$ of the boosting bundle does not correctly reflect the perception of a fixed inertial observer who is using this bundle for the description of the particle.\footnote{To see in what sense this statement is true, one is encouraged to read the argument of \cite{lee2022b}~Sect.~3.2.}

More precisely, we will see that $v \in \mathbb{C}^2$ in $(p, v) \in (E_{L,1})_p$ gets meaningful only in an inertial frame $L(p)^{-1}$-transformed with respect to the fixed inertial observer. Hence, if $\psi = f \xi $ with $f:X_0 \rightarrow \mathbb{C}$ a Schwartz class function and $\chi : X_0 \rightarrow \mathbb{C}^2$ a continuous field of unit vectors, then the helicity reduced density matrix (Eq.~(\ref{eq:4.8})) becomes
\begin{equation}\label{eq:4.13}
\tau = \int_{X_0} |f(p)|^2 \chi(p) \chi(p)^\dagger d \mu(p),
\end{equation}
which is an illegitimate operation in that each $\chi(p) \chi(p)^\dagger \in M_2 ( \mathbb{C})$ lives in a different vector space depending on $p \in X_0$ and hence cannot be summable over different values of $p$. (cf. \cite{lee2022b}~Sect.3.2, especially, Eq.~(3.15) there).

Therefore, by using this vector bundle point of view, we have in a sense "proved" that the helicity density matrix Eq.~(\ref{eq:4.8}) indeed has no meaning at all, as claimed earlier in \cite{peres2003}. A similar remark holds for the helicity reduced density matrix for two-particle state Eq.~(\ref{eq:4.9}).

As in \cite{lee2022b}~Sect.~6, we, therefore, conclude that the perplexities posed in the early works of the RQI of massless particles arise because the standard representation space Eq.~(\ref{eq:4.6}) is constructed out of a bundle that does not respect the relativistic perception (in the sense of \cite{lee2022b}~Sect.~2.5), which must be taken into account in dealing with the internal quantum states of a moving particle.

\begin{remark}\label{remark:4.3}
This viewpoint was first suggested in \cite{lee2022a} for massive particles with spin-1/2 and subsequently generalized to massive particles with arbitrary spin in \cite{lee2022b}. There, it was used to show the meaninglessness of the \textit{reduced density matrix for spin} and suggested an alternative bundle description called the \textit{perception bundle description} which is free from the problem that the boosting bundle description has. I.e., each fiber of the perception bundles correctly reflects the relativistic perception of an inertial observer who is using the bundles for the description of massive particles. We will develop the massless version of this bundle description from the next section on.
\end{remark}

\section{The perception bundle construction for massless particles}\label{sec:5}

Fix $0 \neq s \in \frac{1}{2} \in \mathbb{Z}$. By the result of Sect.~\ref{sec:3.1}, we cannot apply the framework of \cite{lee2022b}~Sect.~5 to construct the perception bundles for massless particles with spin-$s$.

It turned out that the perception bundle construction depends on a choice of a (non-unitary) continuous representation $\Phi:H \rightarrow GL(V)$ on a finite-dimensional Hilbert space $(V, \langle \cdot, \cdot \rangle_V )$ and an isometric embedding $\epsilon: \mathbb{C}^2 \rightarrow V$ such that $\Phi|_{K}$ 'almost' becomes $\tilde{\eta}_s$ on the subspace $\epsilon(\mathbb{C}^2)$.

Heuristically, this amounts to embed the helicity-up and down states of a massless particle with spin-$s$ as the extremal helicity states (that is, $\pm s$) among all possible helicity states that a massive particle with spin-$s$ can assume. However, as one might expect from the adverb 'almost' in the preceding paragraph, there is certain complication that needs to be addressed, which leads to the concept of \textit{gauge freedom}. It is an important characteristic exhibited by massless particles such as photon and graviton (cf. \cite{weinberg}) and we are going to see that it is an intrinsic property of all massless particles (cf. Sect.~\ref{sec:8}).

\subsection{Perception bundles}\label{sec:5.1}

First, some notations. For a subset $S$ of a vector space $V$, we denote the (complex) linear span of $S$ as $[S] \leq V$. If $W \leq V$, then, for $ v, w \in V$, $v \equiv w \hspace{0.1cm} (\textup{mod } W)$ means $v-w \in W$.

\begin{lemma}\label{lemma:5.1}
Let $(V, \langle \cdot , \cdot \rangle_V )$ be a finite-dimensional Hilbert space, $\Phi: H \rightarrow GL(V)$ a (non-unitary) Lie group representation, and $\epsilon: \mathbb{C}^2 \rightarrow V $ an isometric embedding (w.r.t. the standard inner product on $\mathbb{C}^2$). Denote $R:= \left[ \Phi(K) \epsilon(\mathbb{C}^2 ) \right] \leq V$. Suppose that $W \leq R$ is a $\Phi(K)$-invariant subspace such that $W \oplus \epsilon (\mathbb{C}^2 ) = R$ (orthogonal direct sum) and
\begin{equation}\label{eq:5.1}
\Phi (B) \epsilon (v) \equiv \epsilon \left( \tilde{\eta}_s (B)  v \right)  \quad (\textup{mod } W)
\end{equation}
for $ v \in \mathbb{C}^2$ and $B \in K$.

Then, for each $p = A p_0 \in X_0$, the subspaces
\begin{equation}\label{eq:5.2}
D_p := \Phi(A)W \leq F_p := \Phi(A) R \leq V
\end{equation}
do not depend on the choice of $A \in H$ such that $Ap_0 = p$, and hence $D:= \amalg_{p \in X_0} D_p $ and $F := \amalg_{p \in X_0} F_p$ form well-defined (smooth) vector bundles over $X_0$.
\end{lemma}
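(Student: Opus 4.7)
The plan is to separate the claim into two independent parts: first, the algebraic well-definedness of the subspaces $D_p, F_p \leq V$, and second, the smoothness of the resulting families as vector bundles over $X_0$.

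For well-definedness, I would first observe that the set $R$ is tautologically $\Phi(K)$-invariant: any element of $R$ is a linear combination of vectors $\Phi(B)\epsilon(v)$ with $B \in K$, $v \in \mathbb{C}^2$, and applying $\Phi(B')$ for $B' \in K$ yields $\Phi(B'B)\epsilon(v) \in R$. Invariance of $W$ under $\Phi(K)$ is assumed. Now suppose $Ap_0 = A'p_0 = p$. Then $A^{-1}A' \in H_{p_0} = K$, so there is $B \in K$ with $A' = AB$. Consequently,
\begin{equation*}
\Phi(A')W = \Phi(A)\Phi(B)W = \Phi(A)W, \qquad \Phi(A')R = \Phi(A)\Phi(B)R = \Phi(A)R,
\end{equation*}
so $D_p$ and $F_p$ depend only on $p$, not on the choice of representative $A$. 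This also shows $D_p \leq F_p$, and $\dim D_p = \dim W$, $\dim F_p = \dim R$ are constant in $p$.

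For the bundle structure, I would use the fact that $H \to H/K \cong X_0$ is a smooth principal $K$-bundle (since $K$ is a closed Lie subgroup of the Lie group $H$), hence it is locally trivial: every $p \in X_0$ admits an open neighborhood $U$ and a smooth local section $\sigma : U \to H$ with $\sigma(q) p_0 = q$ for all $q \in U$. Post-composing with $\Phi$ gives a smooth map $\Phi \circ \sigma : U \to GL(V)$. I would then define local trivializations
\begin{equation*}
\tau_\sigma^D : U \times W \longrightarrow D|_U, \quad (q, w) \longmapsto \bigl(q,\, \Phi(\sigma(q))\, w\bigr),
\end{equation*}
and analogously $\tau_\sigma^F : U \times R \to F|_U$. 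By the well-definedness step, the image of $\tau_\sigma^D$ at $q$ is precisely $D_q$ (and likewise for $F$), so these maps are fiberwise linear isomorphisms; their inverses $(q,v) \mapsto (q, \Phi(\sigma(q))^{-1}v)$ are smooth because $\Phi(\sigma(q))$ depends smoothly on $q$ and inversion in $GL(V)$ is smooth.

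To conclude that these trivializations glue into a bundle atlas, I would verify that transition functions are smooth: given two local sections $\sigma_1, \sigma_2$ over $U_1 \cap U_2$, the map $q \mapsto \sigma_1(q)^{-1}\sigma_2(q)$ is a smooth map into $K$ (since both map $p_0$ to $q$), and hence $q \mapsto \Phi(\sigma_1(q)^{-1}\sigma_2(q))|_W$ is a smooth $GL(W)$-valued cocycle; the same argument works with $R$ in place of $W$. Regarding the main obstacle: there isn't really a deep one here, as the lemma is essentially a statement about associated bundles of the principal $K$-bundle $H \to H/K$. The only point that requires a modicum of care is making sure that $D_p$ and $F_p$ are intrinsically attached to the point $p$ rather than to its chosen lift, and this reduces entirely to the $\Phi(K)$-invariance of $W$ and $R$; the hypothesis~\eqref{eq:5.1} and the orthogonal decomposition $R = W \oplus \epsilon(\mathbb{C}^2)$ play no role in this lemma and will presumably be used only in later constructions.
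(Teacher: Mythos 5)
Your proposal is correct and follows essentially the same route as the paper: the well-definedness step is verbatim the paper's argument (reduce to $A^{-1}A' \in K$ and invoke the $\Phi(K)$-invariance of $W$ and the tautological invariance of $R$), and your local trivializations via smooth local sections of $H \to H/K$ are just a more explicit rendering of the paper's observation that $\{\Phi(\tau(\cdot))e_i\}$ is a smooth local frame for $D$ (resp.\ $F$), which suffices by the local frame criterion for subbundles. Your closing remark is also accurate — hypothesis~(\ref{eq:5.1}) and the orthogonal decomposition are not needed here and only enter in Theorem~\ref{theorem:5.2}.
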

\begin{proof}
Suppose $p= A p_0 = B p_0$. Then, $A^{-1} B \in K$ and hence $\Phi(A^{-1} B )W = W$ by the $\Phi(K)$-invariance of $W$, i.e., $\Phi(A) W = \Phi(B)W$. Analogously, since $R$ is by definition $\Phi(K)$-invariant, we have $\Phi(A)R = \Phi(B)R$. Therefore, we see that $D_p := \Phi(A)W$ and $ F_p :=\Phi(A) R$ do not depend on the choice of $A \in H$ such that $Ap_0 = p$.

For any basis $\{ e_i\}$ of $W$ (resp. $R$) and any smooth local section $\tau :U \rightarrow H$ of the bundle $H \rightarrow H/K \cong X_0$, the set of smooth sections $\{ \Phi (\tau(\cdot)) e_i \}$ of the bundle $X_0 \times V$ is a frame for the bundle $D$ (resp. $F$) on $U \subseteq X_0$. Hence, we see that $D$ and $F$ are smooth vector subbundles of $X_0 \times V$ (cf. Ch.~10 of \cite{lee}).
\end{proof}

\begin{theorem}\label{theorem:5.2}
Consider the setting of Lemma~\ref{lemma:5.1}. Form a quotient vector bundle $E:=F/D$ over $X_0$ and denote the quotient bundle map as $\mathcal{Q} : F \rightarrow E$. Then, endowed with the structures listed in Table~\ref{tab:2}, $F$ becomes a $G$-vector bundle, $E$ an Hermitian $G$-bundle, and $\mathcal{Q}$ a $G$-equivariant map. Also, the induced representation $U$ associated with $E$ is unitarily equivalent to $\Pi_s$.
\end{theorem}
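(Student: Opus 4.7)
The plan is to reduce the statement to the primitive bundle construction for $\tilde{\eta}_s$ (Theorem~\ref{theorem:3.2} applied to $\eta_s$ and $\eta_{-s}$), by showing that $E = F/D$ is isomorphic, as an Hermitian $G$-bundle, to $\mathcal{E}_s \oplus \mathcal{E}_{-s}$. I expect Table~\ref{tab:2} to equip $F$ with the $G$-action
\begin{equation*}
\lambda(a, A)(p, v) = \bigl(\Lambda p,\ e^{-i\langle \Lambda p, a\rangle}\, \Phi(A) v\bigr)
\end{equation*}
for $p \in X_0$, $v \in F_p$, $(a,A) \in G$, which descends to an action on $E$; and the Hermitian metric on $E$ is to be declared at $p_0$ via the isometric identification $E_{p_0} = R/W \cong \epsilon(\mathbb{C}^2) \cong \mathbb{C}^2$ coming from the orthogonal decomposition $R = W \oplus \epsilon(\mathbb{C}^2)$, then transported to other fibers by the $G$-action. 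The induced representation $U$ is then the usual one, $U(a,A)\psi = \lambda(a,A) \circ \psi \circ \Lambda^{-1}$, on $L^2(X_0, E; \mu, h)$.

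First, I would verify that $\lambda$ really sends $F_p$ to $F_{\Lambda p}$ and $D_p$ to $D_{\Lambda p}$: if $p = Bp_0$, then $F_p = \Phi(B)R$ and $\lambda(a,A)$ maps this to $\Phi(AB)R = F_{\Lambda p}$, since $\Lambda p = (AB)p_0$; the argument for $D_p = \Phi(B)W$ is identical. Well-definedness on the quotient is immediate. The continuity/smoothness of $\lambda$ follows from that of $\Phi$ and of the quotient map $\mathcal{Q}$, and the cocycle property is automatic because $\Phi$ and $(a,A) \mapsto e^{-i\langle \cdot, a\rangle}$ are representations. So $F$ is a $G$-bundle, $\mathcal{Q}$ is $G$-equivariant by construction, and $E$ carries a $G$-action.

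Next, I would define the Hermitian metric on $E$. At $p_0$, the orthogonality hypothesis $R = W \oplus \epsilon(\mathbb{C}^2)$ gives a canonical vector space isomorphism $j_0 : E_{p_0} = R/W \xrightarrow{\cong} \epsilon(\mathbb{C}^2)$, and I declare $h_{p_0}$ to be the pullback of $\langle \cdot, \cdot\rangle_V|_{\epsilon(\mathbb{C}^2)}$ (equivalently the standard inner product on $\mathbb{C}^2$). The point to check is that the $K$-action on $E_{p_0}$ induced by $\lambda(0,\cdot)$ is unitary with respect to $h_{p_0}$: by Eq.~(\ref{eq:5.1}), for $B \in K$ and $v \in \mathbb{C}^2$ one has $\Phi(B)\epsilon(v) = \epsilon(\tilde{\eta}_s(B)v) \pmod W$, so under $j_0$ the induced $K$-action is exactly $\tilde{\eta}_s$, which is unitary on $\mathbb{C}^2$. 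Consequently I can transport $h_{p_0}$ to every fiber by $h_p([A, \xi], [A, \xi']) := h_{p_0}(\xi, \xi')$ for $p = Ap_0$, and this is independent of the choice of $A$ by the $K$-invariance just verified. By construction $h$ is then $G$-invariant, so $E$ becomes an Hermitian $G$-bundle.

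Finally, I would identify $(E, h)$ with the primitive bundle $\mathcal{E}_{\tilde{\eta}_s} := \mathcal{E}_s \oplus \mathcal{E}_{-s}$ associated with $\tilde{\eta}_s = \eta_s \oplus \eta_{-s}$: the map $H \times \mathbb{C}^2 \to E$ sending $(A, v) \mapsto \mathcal{Q}\bigl(\Phi(A)\epsilon(v)\bigr) \in E_{Ap_0}$ factors through $H \times_{K, \tilde{\eta}_s} \mathbb{C}^2 = \mathcal{E}_{\tilde{\eta}_s}$ by Eq.~(\ref{eq:5.1}), is fiberwise isometric by the definition of $h$, and intertwines the $G$-actions by comparison with Eq.~(\ref{eq:3.4}). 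Applying Theorem~\ref{theorem:3.2} to each summand then gives $U \cong \pi_{0,s}^+ \oplus \pi_{0,-s}^+ = \Pi_s$.

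The main obstacle is the metric step: because $\Phi$ is not unitary, one cannot just push $\langle\cdot,\cdot\rangle_V$ onto each fiber in a $G$-invariant way. The whole point of passing to the quotient $F/D$ and invoking Eq.~(\ref{eq:5.1}) with the orthogonal complement hypothesis is precisely to strip away the non-unitarity of $\Phi$ and leave only the unitary $\tilde{\eta}_s$ acting, which is what makes the fiberwise transport of $h_{p_0}$ consistent.
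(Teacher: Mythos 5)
Your proposal is correct and follows essentially the same route as the paper's own proof in Appendix~C: the paper likewise reduces the statement to the primitive bundle for $\tilde{\eta}_s$ via the bundle isomorphism $[A,v] \mapsto \Phi(A)\epsilon(v) + D_{Ap_0}$, with the decisive step being that Eq.~(\ref{eq:5.1}) together with $W \perp \epsilon(\mathbb{C}^2)$ makes $\Phi(B)$ act unitarily on $R/W$ for $B \in K$, so the metric transported from $E_{p_0}$ is well defined. Your closing remark about why the quotient and the orthogonality hypothesis are needed is exactly the point the paper's computation verifies.
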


\begin{table}[h]

\caption{The perception bundle construction for massless particles with spin-\texorpdfstring{$s$}{TEXT}}
\label{tab:2}

\centering
\begin{tabular}{|m{1.3cm}|m{4.9cm}|m{6cm}|}
\hline\noalign{\smallskip}
  & $F$  (The potential bundle) & $E$ (The perception bundle) \\
\noalign{\smallskip}\hline\noalign{\smallskip}
Bundle  &  $F $ & $F/D$ \\
\noalign{\smallskip}\hline\noalign{\smallskip}
Metric &  None & $h_{p} (z + D_p, w + D_p) = $ \newline $ \big\langle \Phi(\Lambda)^{-1} z + W , \Phi(\Lambda)^{-1} w + W \big\rangle_{R/W}$ \\
\noalign{\smallskip}\hline\noalign{\smallskip}
Action &   $\vartheta (a,\Lambda) (p, z) $ \newline $ = (\Lambda p, e^{-i \langle \Lambda p, a \rangle} \Phi(\Lambda) z )$ & $\lambda (a, \Lambda) (p , \xi ) $ \newline $ = (\Lambda p,  e^{-i \langle \Lambda p , a \rangle} \overline{\Phi(\Lambda)} \xi)$ \\
\noalign{\smallskip}\hline\noalign{\smallskip}
Space & $\mathcal{B} = \mathcal{Q}^{-1} (\mathcal{H}) $ & $\mathcal{H}= L^2 \left (X_0 , E ; \mu , h \right)$ \\
\noalign{\smallskip}\hline\noalign{\smallskip}
Repn & $ \mathcal{U} (a,\Lambda ) A = \vartheta (a,\Lambda) \circ A \circ \Lambda^{-1}$ & $U(a, \Lambda) \phi = \lambda(a, \Lambda) \circ \phi \circ \Lambda^{-1}$ \\
\noalign{\smallskip}\hline
\end{tabular}

\end{table}

\begin{proof}
The proof is given in Appendix~\ref{sec:C}.
\end{proof}

In the first column of Table~\ref{tab:2}, $\mathcal{B}:= \mathcal{Q}^{-1} ( \mathcal{H})$ is the space of all Borel measurable sections of $F \leq X_0 \times V$ that descend to square integrable sections of $E$ when composed with the quotient map $\mathcal{Q}$ and $\mathcal{U}$ is a (non-unitary, non-continuous) representation of $G$ on this space.

In the second column, $\overline{\Phi(\Lambda)}$ is the map
\begin{gather}
\overline{\Phi(\Lambda)} : E_{p} \rightarrow E_{\Lambda p} \nonumber \\
z+D_p \mapsto \Phi(\Lambda) z + D_{\Lambda p}, \label{eq:5.3}
\end{gather}
which is a well-defined vector bundle automorphism of $E$.

We call the Hermitian $G$-bundles $E$ obtained in this way the \textit{perception bundles for massless particles with spin-$s$}. The subbundles $F$ and $D$ of Lemma~\ref{lemma:5.1} will be respectively called the \textit{potentail bundle} and the \textit{gauge freedom associated with the representation $\Phi$} for reasons that will become clear in Sects.~\ref{sec:6}--\ref{sec:7}.

\subsection{Perception bundles in terms of potential bundles and gauge freedoms}\label{sec:5.2}

The definition of perception bundles as quotient vector bundles causes some difficulty in dealing with wave functions living on the bundles. So, it is expedient to express perception bundles in terms of potential bundles and gauge freedoms that have just been defined.

For $(p, \xi) \in E$, there exists $(p, z) \in F$ such that $\mathcal{Q} (p, z) = (p, \xi)$. Also, we know that when descended to $E$ via $\mathcal{Q}$, two elements $(p, z_1 ), (p, z_2) \in F$ represent the same element if and only if there is $(p, y) \in D$ such that $z_1 - z_2 = y$. We adopt the habit of referring to an element $(p, \xi) \in E$ by only specifying its lift $(p, z) \in F$ with the understanding that $z$ is only defined up to an addition of an element $y \in D_p$. In this convention, the action $\vartheta$ replaces the role of $\lambda$ by the $G$-equivariance of $\mathcal{Q}$.

Now, fix $\phi \in \mathcal{H}$. By using a (smooth) partition of unity, we can always choose a Borel measurable section $\psi \in \mathcal{B}$ such that $\mathcal{Q} \circ \psi = \phi$. Also, two sections $\psi_1, \psi_2 \in \mathcal{B}$ represent the same wave function in $\mathcal{H}$ if and only if there is a (Borel measurable) section $\varphi: X_0 \rightarrow D$ such that $\varphi = \psi_1 - \psi_2$. Analogously as for the bundle elements, we adopt the habit of referring to a wave function $\phi \in \mathcal{H}$ by only specifying its lift $\psi \in \mathcal{B}$ with the understanding that $\psi$ is defined only up to an addition of a (Borel measurable) section $\varphi : X_0 \rightarrow D$. In this convention, the representation $\mathcal{U}$ replaces the role of $U$, i.e.,
\begin{equation}\label{eq:5.4}
\big[ \mathcal{Q} \circ \big(\mathcal{U}(a,\Lambda) \psi \big) \big] (p) = [U(a,\Lambda) \phi] (p)
\end{equation}
by the $G$-equivariance of $\mathcal{Q}$.

\begin{remark}\label{remark:5.3}
The formulae for $\vartheta$ and $\mathcal{U}$ in Table~\ref{tab:2} tell us that even though the bundle $F$ and the space $\mathcal{B}$ are not directly related to the massless single-particle representation $\Pi_s$, they have definite transformation laws and hence one can compare the elements of $F$ and $\mathcal{B}$, respectively, as perceived by different inertial observers as long as one quotients out the effect of the gauge freedom $D$.
\end{remark}

\section{Bundle theoretic descriptions for photon}\label{sec:6}

In this section, we apply the construction of Sect.~\ref{sec:5} to obtain the perception bundle description for photon, a massless particle with spin-1.

\subsection{The perception bundle for photon}\label{sec:6.1}

We take $V_{e} := \mathbb{C}^4$ with the standard inner product structure on which the Minkowski representation $\Phi_e :H \xrightarrow{\kappa} SO^{\uparrow} (1,3) \hookrightarrow GL( V_e)$ is given. Fix an isometric embedding $\epsilon_e : \mathbb{C}^2 \rightarrow V_e$ defined on the basis elements as
\begin{equation}\label{eq:6.1}
\epsilon_e ( | \pm \rangle ) = \frac{1}{\sqrt{2}} \begin{pmatrix} 0 \\ 1 \\ \pm i \\ 0 \end{pmatrix}.
\end{equation}

We write $\epsilon_\pm := \epsilon_e (| \pm \rangle)$. They represent the right-handed and the left-handed circular polarizations of an electromagnetic wave propagating along the $\hat{\mathbf{z}}$-direction in the Minkowski space, respectively.\footnote{Multiply Eq.~(\ref{eq:6.1}) by the plane wave $e^{i (kz - \omega t) }$ propagating along the $\hat{z}$-axis and take the real parts of the expression. How does the resulting vector change as $t$ increases? (cf. Ch.~9 of \cite{griffiths})}

With these choices, we apply Lemma~\ref{lemma:5.1}.
\begin{lemma}\label{lemma:6.1}
Let $W_e := \mathbb{C} p_0$. Then, we have $R_e:= \left[ \Phi_e (K) \epsilon_e (\mathbb{C}^2) \right] = W_e \oplus \epsilon_e (\mathbb{C}^2) \leq V_e$ (orthogonal direct sum), $W_e$ is $\Phi_e(K)$-invariant, and
\begin{equation}\label{eq:6.2}
\Phi_e(B)\epsilon_e(v) \equiv \epsilon_e(\tilde{\eta}_1 (B) v) \quad (\textup{mod } W_e)
\end{equation}
for $v \in \mathbb{C}^2$ and $B \in K$.
\end{lemma}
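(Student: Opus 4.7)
The plan is to reduce all three claims to one explicit computation: determine $\Phi_e(B)\epsilon_{\pm}$ in closed form for an arbitrary $B = \begin{pmatrix} z & b \\ 0 & \overline{z}\end{pmatrix} \in K$. Since $\Phi_e$ is the complex-linear extension of $\kappa$, the defining formula for $\kappa$ in Eq.~(\ref{eq:1.3}) tells us that the Hermitian-matrix image of $\Phi_e(B)\epsilon_{\pm}$ is exactly $B\,\tilde{\epsilon}_{\pm}\, B^{\dagger}$, so everything reduces to a $2 \times 2$ matrix calculation.

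First I would translate the relevant $4$-vectors into their Hermitian-matrix forms by Eq.~(\ref{eq:1.2}):
\begin{equation*}
\tilde{p}_0 = \begin{pmatrix} 2 & 0 \\ 0 & 0 \end{pmatrix}, \qquad \tilde{\epsilon}_+ = \begin{pmatrix} 0 & \sqrt{2} \\ 0 & 0 \end{pmatrix}, \qquad \tilde{\epsilon}_- = \begin{pmatrix} 0 & 0 \\ \sqrt{2} & 0 \end{pmatrix}.
\end{equation*}
A short two-step matrix product (exploiting $|z|=1$) then yields
\begin{equation*}
B\,\tilde{\epsilon}_+\, B^{\dagger} = \sqrt{2}\begin{pmatrix} z\overline{b} & z^{2} \\ 0 & 0 \end{pmatrix}, \qquad B\,\tilde{\epsilon}_-\, B^{\dagger} = \sqrt{2}\begin{pmatrix} \overline{z}b & 0 \\ \overline{z}^{\,2} & 0 \end{pmatrix},
\end{equation*}
and inverting the $\tilde{\cdot}$ map delivers the closed-form expressions
\begin{equation*}
\Phi_e(B)\epsilon_+ \;=\; \frac{z\overline{b}}{\sqrt{2}}\, p_0 + z^{2}\,\epsilon_+, \qquad \Phi_e(B)\epsilon_- \;=\; \frac{\overline{z}b}{\sqrt{2}}\, p_0 + \overline{z}^{\,2}\,\epsilon_-.
\end{equation*}

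From these two identities, all three assertions drop out. The containment $\Phi_e(K)\epsilon_e(\mathbb{C}^2) \subseteq W_e + \epsilon_e(\mathbb{C}^2)$ is immediate; conversely, taking $z=1$ with $b=\sqrt{2}$ forces $p_0 \in R_e$, so $R_e = W_e + \epsilon_e(\mathbb{C}^2)$, and the sum is orthogonal because $p_0 \cdot \epsilon_{\pm} = 0$ in the standard Hermitian inner product on $\mathbb{C}^4$. The $\Phi_e(K)$-invariance of $W_e$ is a triviality since $K = H_{p_0}$ by definition. Finally, Eq.~(\ref{eq:2.7}) gives $\eta_1(B) = z^{2}$ and $\eta_{-1}(B) = \overline{z}^{\,2}$, so the scalar coefficients of $\epsilon_{\pm}$ in the closed forms above are precisely the components of $\tilde{\eta}_1(B)|\pm\rangle$, which is Eq.~(\ref{eq:6.2}). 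The only step that requires genuine care is the matrix-multiplication bookkeeping---keeping the $b$ and $\overline{b}$ entries in the right positions across $B$ and $B^{\dagger}$, and inverting Eq.~(\ref{eq:1.2}) consistently---but no conceptual obstacle stands in the way.
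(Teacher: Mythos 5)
Your proposal is correct and follows essentially the same route as the paper: both proofs reduce everything to the explicit computation of $\Phi_e(B)\epsilon_\pm$ for $B \in K$ via the defining relation $\left(\kappa(\Lambda)x\right)^\sim = \Lambda \tilde{x}\Lambda^\dagger$ of Eq.~(\ref{eq:1.3}), and then read off the decomposition of $R_e$, the invariance of $W_e$ (immediate from $K = H_{p_0}$), and Eq.~(\ref{eq:6.2}) from the resulting closed forms, which agree with the paper's Eq.~(\ref{eq:6.3}). The only difference is organizational: you apply the tilde map complex-linearly to $\epsilon_\pm$ directly, so that their images are the elementary nilpotent $2\times 2$ matrices and the conjugation $B\,\tilde{\epsilon}_\pm\,B^\dagger$ is a one-line product, whereas the paper computes $\kappa(B)$ on the real basis vectors $e_1, e_2$ separately and recombines; this is a mild streamlining of the same calculation.
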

\begin{proof}
Since $p_0 = (1,0,0,1)$, we see $W_e \perp \epsilon_e (\mathbb{C}^2) $ in $\mathbb{C}^4$. The $\Phi_e (K)$-invariance of $W_e$ is immediate from the definition of $K$ as the little group $H_{p_0}$.

Let $B = \begin{pmatrix} z & b \\ 0 & \overline{z} \end{pmatrix} \in K$. Then, using Eq.~(\ref{eq:1.3}), we find
\begin{align}\label{eq:6.3}
\Phi_e(B) \epsilon_e (| \pm \rangle) &= \Phi_e (B) \begin{pmatrix} 0 \\ 1 \\  \pm i \\ 0 \end{pmatrix} = \kappa(B) \begin{pmatrix} 0 \\ 1 \\ 0 \\ 0 \end{pmatrix} \pm i \kappa(B) \begin{pmatrix} 0 \\ 0 \\ 1 \\ 0 \end{pmatrix}
 \nonumber \\ 
&= \kappa \begin{pmatrix} z & b \\ 0 & \overline{z} \end{pmatrix} \begin{pmatrix} 0 \\ 1 \\ 0 \\ 0 \end{pmatrix} \pm i \kappa \begin{pmatrix} z & b \\ 0 & \overline{z} \end{pmatrix} \begin{pmatrix} 0 \\ 0 \\ 1 \\ 0 \end{pmatrix} 
= \begin{pmatrix} \textup{Re} (\overline{b}z) \\ \textup{Re} (z^2) \\ -\textup{Im} (z^2) \\ \textup{Re} (\overline{b}z) \end{pmatrix} + \pm i \begin{pmatrix} \textup{Im} (\overline{b}z) \\ \textup{Im} (z^2) \\ \textup{Re} (z^2) \\ \textup{Im} (\overline{b} z) \end{pmatrix} \nonumber \\
 &= \begin{cases}
z^2 \epsilon_+ + (\overline{b}z) p_0 &\text{for $+$} \\
\overline{z}^2 \epsilon_- + (b \overline{z}) p_0 &\text{for $-$}
\end{cases}
=\begin{cases}
\epsilon_e \left(\tilde{\eta}_1 (B) | + \rangle \right) + (\overline{b}z) p_0 &\text{for $+$} \\
\epsilon_e \left(\tilde{\eta}_1 (B) | - \rangle \right) + (b \overline{z}) p_0 &\text{for $-$}.
\end{cases}
\end{align}

Therefore, we see that $R_e = W_e \oplus \epsilon_e (\mathbb{C}^2)$ and also Eq.~(\ref{eq:6.2}) holds.
\end{proof}

\begin{proposition}\label{proposition:6.2}
The potential bundle $F_e \leq X_0 \times V_e$ is given by
\begin{equation}\label{eq:6.4}
F_e = \{ (p, z) \in X_0 \times \mathbb{C}^4 : p_\mu z^\mu =0 \}
\end{equation}
and the gauge freedom $D_e \leq F_e$ associated with $\Phi_e$ is given by
\begin{equation}\label{eq:6.5}
D_e = \{ (p, \zeta p) \in X_0 \times \mathbb{C}^4 : \zeta \in \mathbb{C} \}.
\end{equation}
\end{proposition}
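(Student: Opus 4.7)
The plan is to directly unpack the definitions $F_e|_p = \Phi_e(A) R_e$ and $D_e|_p = \Phi_e(A) W_e$ from Lemma~\ref{lemma:5.1}, where $A \in H$ satisfies $A p_0 = p$. Since Lemma~\ref{lemma:5.1} guarantees these fibers are independent of the choice of $A$, once I identify $R_e$ and $W_e$ intrinsically in Lorentz-invariant terms, applying $\kappa(A)$ and using the Lorentz invariance of the Minkowski form will produce the desired descriptions valid at every $p \in X_0$.

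The gauge freedom is the easy half. By Lemma~\ref{lemma:6.1} we have $W_e = \mathbb{C} p_0$, so
\begin{equation*}
D_e|_p = \Phi_e(A) \, \mathbb{C} p_0 = \mathbb{C} \, \kappa(A) p_0 = \mathbb{C} p,
\end{equation*}
which is exactly Eq.~(\ref{eq:6.5}).

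For the potential bundle, the key step is to recognize the 3-dimensional subspace $R_e = \mathbb{C} p_0 \oplus [\epsilon_+, \epsilon_-] \leq \mathbb{C}^4$ as the Minkowski-orthogonal complement of $p_0$ (complexified). A direct inspection using $p_0 = (1,0,0,1)$ shows $(p_0)_\mu (p_0)^\mu = 0$ and $(p_0)_\mu (\epsilon_\pm)^\mu = 0$, so $R_e$ is contained in the hyperplane $\{z \in \mathbb{C}^4 : (p_0)_\mu z^\mu = 0\}$; since both sides have complex dimension~3, they coincide.

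Finally, extending the bilinear form $\langle \cdot , \cdot \rangle$ of Eq.~(\ref{eq:1.1}) complex-bilinearly to $\mathbb{C}^4$, the map $\kappa(A)$ still preserves it, so for $w = \kappa(A) z$,
\begin{equation*}
p_\mu w^\mu = \langle \kappa(A) p_0 , \kappa(A) z \rangle = \langle p_0 , z \rangle = (p_0)_\mu z^\mu.
\end{equation*}
Hence $\kappa(A) R_e = \{w \in \mathbb{C}^4 : p_\mu w^\mu = 0\}$, proving Eq.~(\ref{eq:6.4}). I expect no real obstacle: the only slightly subtle point is to justify that the Minkowski form extends (complex-bilinearly, not Hermitian) to $\mathbb{C}^4$ and that $\kappa(A)$ preserves this extension, which is immediate from the definition.
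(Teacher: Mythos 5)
Your proposal is correct and follows essentially the same route as the paper: both rest on the Lorentz invariance of the (complex-bilinearly extended) Minkowski pairing together with a dimension count, and the $D_e$ computation is identical. The only cosmetic difference is that you identify $R_e$ as the Minkowski-orthogonal complement of $p_0$ \emph{before} transporting by $\kappa(A)$, whereas the paper transports the generators $\epsilon_e(v)$ and $p_0$ first and performs the dimension count in the fiber over $p$.
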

\begin{proof}
Let $p = \Lambda p_0 \in X_0$ for some $\Lambda \in H$. Then, for $v \in \mathbb{C}^2$,
\begin{equation*}
p_\mu \big( \Phi_e(\Lambda) \epsilon_e(v) \big)^\mu = \big(\kappa(\Lambda) p_0 \big)_\mu  \big( \Phi_e(\Lambda) \epsilon_e( v) \big)^\mu = (p_0)_\mu \big(\epsilon_e (v) \big)^\mu = 0
\end{equation*}
by the definition of $\Phi_e$. Also,
\begin{equation*}
p_\mu ( \Phi_e(\Lambda) p_0 )^\mu = p_\mu \big(\kappa(\Lambda) p_0 \big)^\mu = p_\mu p^\mu = 0.
\end{equation*}

So, by Lemma~\ref{lemma:6.1},
\begin{equation*}
(F_e)_p :=\Phi(\Lambda) R_e \subseteq \{ z \in \mathbb{C}^4 : p_\mu z^\mu = 0 \}
\end{equation*}
and since the two spaces are both 3-dimesional (the LHS being an isomorphic image of the 3-dimensional space $R_e$ and the RHS being the kernel of a non-zero linear functional $p_\mu ( \hspace{0.1cm} \cdot \hspace{0.1cm} )^\mu : \mathbb{C}^4 \rightarrow \mathbb{C}$), we see that these are in fact equal.

For Eq.~(\ref{eq:6.5}), we just note
\begin{equation*}
(D_e)_p := \Phi_e(\Lambda) \mathbb{C} p_0 = \mathbb{C} \kappa(\Lambda) p_0 = \mathbb{C} p.
\end{equation*}
\end{proof}

Applying Theorem~\ref{theorem:5.2} with the choice $(\epsilon_e, \Phi_e)$, we obtain Table~\ref{tab:3}, the perception bundle description for photon.

\begin{table}[h]

\caption{The perception bundle description for photon}
\label{tab:3}

\centering
\begin{tabular}{|m{1.3cm}|m{5.5cm}|m{7cm}|}
\hline\noalign{\smallskip}
  & $F_e$  (The potential bundle) & $E_e$ (The perception bundle) \\
\noalign{\smallskip}\hline\noalign{\smallskip}
Bundle  &  $F_e$ & $F_e/D_e$ \\
\noalign{\smallskip}\hline\noalign{\smallskip}
Metric &  None & $(h_e)_{p} \big(z + (D_e)_p, w + (D_e)_p \big) = $ \newline $ \big\langle \Phi_e(\Lambda)^{-1} z + W_e , \Phi_e(\Lambda)^{-1} w + W_e \big\rangle_{R_e/W_e}$ \\
\noalign{\smallskip}\hline\noalign{\smallskip}
Action &   $\vartheta_e (a,\Lambda) (p, z) $ \newline $  = (\Lambda p, e^{-i \langle \Lambda p, a \rangle} \Phi_e(\Lambda) z )$ & $\lambda_e (a, \Lambda) (p , \xi ) = (\Lambda p,  e^{-i \langle \Lambda p , a \rangle} \overline{\Phi_e(\Lambda)} \xi)$ \\
\noalign{\smallskip}\hline\noalign{\smallskip}
Space & $\mathcal{B}_e = \mathcal{Q}_e ^{-1} (\mathcal{H}_e) $ & $\mathcal{H}_e = L^2 \left (X_0 , E_e ; \mu , h_e \right)$ \\
\noalign{\smallskip}\hline\noalign{\smallskip}
Repn & $ \mathcal{U}_e (a,\Lambda ) \psi = \vartheta_e (a,\Lambda) \circ \psi \circ \Lambda^{-1}$ & $U_e (a, \Lambda) \phi = \lambda_e (a, \Lambda) \circ \phi \circ \Lambda^{-1}$ \\
\noalign{\smallskip}\hline
\end{tabular}

\end{table}

\subsection{The vector bundle point of view for the perception bundle of photon}\label{sec:6.2}

As in Sect.~\ref{sec:4.3}, if two inertial observers Alice and Bob, who are related by a Lorentz transformation $(a, \Lambda) \in G$ as in Eq.~(\ref{eq:4.10}), are using the perception bundle to describe a photon, then the two observer's descriptions should be related by the action of Table~\ref{tab:3}, i.e.,
\begin{gather}
\lambda_e (a, \Lambda ) : E_e ^{ A} \rightarrow E_e ^{ B} \nonumber \\
(p,\xi)^A \mapsto \left(  \Lambda p, e^{- i (\Lambda p)_\mu a^\mu} \overline{\Phi_e (\Lambda )} \xi \right)^B \label{eq:6.6}
\end{gather}
so that the transformation law for wave functions $U_e (a,\Lambda) \phi = \lambda_e (a,\Lambda) \circ \phi \circ \Lambda^{-1}$ holds.

The convention suggested in Sect.~\ref{sec:5.2} encourages us to rewrite this vector bundle point of view as follows.

\begin{gather}
\vartheta_e (a, \Lambda ) : F_e ^{ A} \rightarrow F_e ^{ B} \nonumber \\
(p,z)^A \mapsto \left(  \Lambda p, e^{- i (\Lambda p)_\mu a^\mu} \Phi_e (\Lambda ) z \right)^B \label{eq:6.6'}\tag{6.6$'$}
\end{gather}
with the understanding that $z$ and $e^{- i (\Lambda p)_\mu a^\mu} \Phi_e (\Lambda ) z$ are defined only up to elements of $(D_e)_p$ and $(D_e)_{\Lambda p}$, respectively.

\subsection{Physical interpretations of the boosting and perception bundle descriptions for photon}\label{sec:6.3}

Observe
\begin{equation}\label{eq:6.7}
\hat{J} ^3 := i ( \tilde{\eta}_1 )_* (J^3) = \begin{pmatrix} 1 & 0 \\ 0 & -1 \end{pmatrix}
\end{equation}
and hence $|\pm \rangle \in \mathbb{C}^2$ represent the helicity-($\pm 1$) states, respectively (cf. Eqs.~(\ref{eq:2.10})--(\ref{eq:2.11}) and Remark~\ref{remark:2.5}). What this means in the case of electromagnetic wave, of which the photon is the quantum (cf. \cite{weinberg}), is that $|\pm \rangle$ represent the right-handed and the left-handed circular polarizations of the wave propagating along the $\hat{\mathbf{z}}$-direction, respectively.

So, the map $\epsilon_e : \mathbb{C}^2 \rightarrow \mathbb{C}^4$ given by Eq.~(\ref{eq:6.1}) is seen to be the Minkowski space realization of the polarizations of a photon propagating along the $\hat{\mathbf{z}}$-direction. So, by definition of $\Phi_e$, the vectors $\Phi_e (\Lambda) \epsilon_\pm$ represent the right-handed and the left-handed circular polarizations of a photon propagating along the $\kappa(\Lambda) \hat{\mathbf{z}}$-direction in $\mathbb{R}^4$, respectively, for each $\Lambda \in H$.

Therefore, we conjecture from Eq.~(\ref{eq:5.2}) that each fiber $(F_e)_p$ consists of all polarization vectors of a photon moving with momentum $p \in X_0$ in a fixed inertial frame. However, since electromagnetic wave does not have a longitudinal polarization (cf. Ch.~2.5 of \cite{weinberg}), the component of the polarization vectors along the direction of the photon's motion state $p \in X_0$ should be quotiented out, which is precisely accomplished by quotienting out the gauge freedom $(D_e)_p$ (cf. Eq.~(\ref{eq:6.5})).

To summarize, each fiber $(E_e)_p$ consists of all \textit{physically realizable} polarization vectors of a photon moving with momentum $p \in X_0$ in a fixed inertial frame whereas $(F_e)_p$ consists of all \textit{mathematically conceivable} polarization vectors. Also, the transformation laws Eqs.~(\ref{eq:6.6})--(\ref{eq:6.6'}) precisely capture the transformation law of (complex) four-vectors, of which the polarization vectors of a photon is an example.

We conclude that each fiber $(E_e)_p$ of the perception bundle $E_e$ correctly reflects the relativistic perception (in the sense of \cite{lee2022b}~Sect.~2.5) of the polarization states of a photon from a fixed inertial observer who is using this bundle for the description of the photon.

In contrast, Eq.~(\ref{eq:4.12}) shows that each fiber $(E_{L,1})_p$ of the boosting bundle for photon does not reflect the relativistic perception of a fixed inertial observer who is using this bundle for the description of a photon. In other words, without recourse to the frame change $L(p)$, the vector $v$ itself does not give Bob's perception of the helicity state since, for example, the right-handed polarization along the $\hat{\mathbf{z}}$-axis of Alice's frame will not appear to be the right-handed polarization along the $\hat{\mathbf{z}}$-axis of the $L(p)$-transformed Bob's frame, whereas Eq.~(\ref{eq:4.12}) forces it. (See \cite{lee2022b}~Sect.~3.2 for a lengthier discussion on this point.)

\subsection{Theoretical implications on the representation}\label{sec:6.4}

\paragraph{The representation space constructed from the perception bundle}

\hfill

From Table~\ref{tab:3}, we see that massless particles with spin-$1$ are described by the following representation, which is equivalent to $\Pi_1$: The space is
\begin{subequations}\label{eq:6.8}
\begin{equation}\label{eq:6.8a}
\mathcal{H}_e:= L^2 ( X_0 , E_e ; \mu , h_e)
\end{equation}
and the representation $U_e : G \rightarrow U (\mathcal{H}_e)$ is given by, for $(a, \Lambda ) \in G$ and $ \phi \in \mathcal{H}_e$,
\begin{equation}\label{eq:6.8b}
[U_e(a,\Lambda)\phi] (p) = e^{-i \langle p , a \rangle} \overline{\Phi_e( \Lambda)} \phi(\Lambda^{-1} p).
\end{equation}
\end{subequations}

Sect.~\ref{sec:5.2} suggests that we consider the representation space
\begin{subequations}\label{eq:6.9}
\begin{equation}\label{eq:6.9a}
\mathcal{B}_e := \mathcal{Q}_e ^{-1} (\mathcal{H}_e)
\end{equation}
and the representation $\mathcal{U}_e : G \rightarrow GL (\mathcal{B}_e)$ defiend as, for $(a, \Lambda ) \in G$ and $ \psi \in \mathcal{B}_e$,
\begin{equation}\label{eq:6.9b}
[\mathcal{U}_e (a,\Lambda)\psi] (p) = e^{-i \langle p , a \rangle} \Phi_e ( \Lambda) \psi(\Lambda^{-1} p)
\end{equation}
\end{subequations}
and express any element $\phi \in \mathcal{H}_e$ by its lift $\psi \in \mathcal{B}_e$ with the understanding that $\psi$ is defined only up to an addition of a Borel-measurable section $\varphi: X_0 \rightarrow D_e$. In this convention, Eq.~(\ref{eq:6.9b}) takes the role of the representation $U_e$.

\paragraph{Maxwell's equations in vacuum, the Lorentz gauge condition, and the gauge freedom of EM as manifestations of relativistic perception}

\hfill

If we define the four-momentum operators $P^\mu$ on $\mathcal{H}_e$ as the infinitesimal generators of the translation operators $U_e (b , I )$, we obtain, for $\phi \in \mathcal{H}_e$ and $p  \in X_0$,
\begin{subequations}\label{eq:6.10}
\begin{align}
\left [P^0 \phi \right] (p) &:= \left[ i \frac{\partial}{\partial b^0} U_e (b, I) \phi \right] (p) =  p_0 \phi(p) = p^0 \phi(p) \label{eq:6.10a} \\
\left [P^j \phi \right] (p) &:= \left[- i \frac{\partial}{\partial b^j} U_e (b, I) \phi \right] (p) =  -p_j \phi(p) = p^j \phi(p) \label{eq:6.10b}
\end{align}
\end{subequations}
by Eq.~(\ref{eq:6.8b}), which are unbounded self-adjoint operators on $\mathcal{H}_e$.

Similarly, for each $\psi \in \mathcal{B}_e$ and $p \in X_0$, we define
\begin{subequations}\label{eq:6.11}
\begin{align}
\left [\mathcal{P}^0 \psi \right] (p) &:= \left[ i \frac{\partial}{\partial b^0} \mathcal{U}_e (b, I) \psi \right] (p) =  p_0 \psi(p) = p^0 \psi(p) \label{eq:6.11a} \\
\left [\mathcal{P}^j \psi \right] (p) &:= \left[- i \frac{\partial}{\partial b^j} \mathcal{U}_e (b, I) \psi \right] (p) =  -p_j \psi(p) = p^j \psi(p), \label{eq:6.11b}
\end{align}
\end{subequations}
which, this time, should be understood as derivatives in the distribution sense.\footnote{See \cite{hormander1} for the definitions of distribution and its derivative. Those who are not familiar with these notions can restrict their attention to smooth sections and their derivatives without losing any of what follows.}

Eq.~(\ref{eq:5.4}) implies that for $\psi \in \mathcal{B}_e$ and $\phi = \mathcal{Q}_e \circ \psi \in \mathcal{H}_e$ such that $P^\mu \phi$ is defined,
\begin{equation}\label{eq:6.12}
\big[ \mathcal{Q}_e \circ \big( \mathcal{P}^\mu \psi \big) \big] (p) = [ P^\mu \phi] (p).
\end{equation}
(it can also be checked directly from Eqs.~(\ref{eq:6.10})--(\ref{eq:6.11}).)

Therefore, in the convention of Sect.~\ref{sec:5.2}, the operators $\mathcal{P}^\mu$ on $\mathcal{B}_e$ are just the momentum operators $P^\mu$ on the quantum Hilbert space $\mathcal{H}_e$ as long as each field $\psi \in \mathcal{B}_e$ is understood as defined only up to a Borel-measurable section of $D_e = \amalg_{p \in X_0} \mathbb{C} p$.

Now observe that by Eq.~(\ref{eq:6.4}) and the relation $p_\mu p^\mu = 0$ that holds for all $p \in X_0$, we have
\begin{equation}\label{eq:6.13}
 \mathcal{P}_\mu \mathcal{P}^\mu \psi  = 0
\end{equation}
and
\begin{equation}\label{eq:6.14}
\mathcal{P}_\mu \psi^\mu  = 0
\end{equation}
for ${ }^\forall \psi \in \mathcal{B}_e$. Also, two fields $\psi, \psi' \in \mathcal{B}_e$ represent the same wave function in $\mathcal{H}_e$ if and only if there exists a (Borel-measurable) scalar function $f : X_0 \rightarrow \mathbb{C}$ such that, for ${ }^\forall p \in X_0$,
\begin{equation}\label{eq:6.15}
\psi_\mu - \psi'_\mu  =  \mathcal{P}_\mu f
\end{equation}
by Eq.~(\ref{eq:6.5}).

To arrange Eqs.~(\ref{eq:6.13})--(\ref{eq:6.15}) into more familiar forms of differential equations, let's assume that the sections $\psi, \psi'$ and the function $f$ are smooth and have compact supports.\footnote{These restrictions can be greatly relaxed if we use the theory of tempered distributions. However, for the sake of brevity, we chose to impose these strong assumptions.}

For each $t \in \mathbb{R}$, we define
\begin{gather}
A(t, \mathbf{x}) := \int_{X_0} \exp (-i p^0 t + i \mathbf{p} \cdot \mathbf{x} ) \sqrt{p^0} \psi(p) \frac{d \mu (p)} {(2 \pi)^{\frac{3}{2}} }  \label{eq:6.16} \\
g(t, \mathbf{x}) := \int_{X_0} \exp (-i p^0 t + i \mathbf{p} \cdot \mathbf{x} ) \sqrt{p^0} f(p) \frac{d \mu (p)} {(2 \pi)^{\frac{3}{2}} } , \label{eq:6.17}
\end{gather}
which are smooth functions on $\mathbb{R}^3$ that contain all the information of the section $\psi$ and the function $f$ being the Fourier transforms of them.

Regarding $A, A'$ and $g$ as defined on $\mathbb{R}^4$, they become smooth functions on $\mathbb{R}^4$ and satisfies the following differential equations by Eqs.~(\ref{eq:6.13})--(\ref{eq:6.15}).

\begin{gather}
\partial_\mu \partial^\mu A =0 \label{eq:6.13'} \tag{6.13$'$} \\
\partial_\mu A^\mu = 0 \label{eq:6.14'} \tag{6.14$'$} \\
A_\mu - A'_\mu = \partial_\mu g \label{eq:6.15'} \tag{6.15$'$},
\end{gather}
from which we notice that Eq.~(\ref{eq:6.13}) is \textit{Maxwell's equations in vacuum}, Eq.~(\ref{eq:6.14}) is the \textit{Lorentz gauge condition}, and Eq.~(\ref{eq:6.15}) expresses the \textit{gauge freedom} of the electromagnetic 4-potential (cf. Ch.~12 of \cite{griffiths}), respectively. We have just found that these are characteristics of all fields in $\mathcal{B}_e$. Given the interpretations of the perception bundles $E_e$ presented in Sect.~\ref{sec:6.3}, we find that \textit{Maxwell's equations in vacuum, the Lorentz gauge condition, and the gauge freedom of EM are nothing but manifestations of a fixed inertial observer's perception of the internal quantum states of a photon}. This fact is even more clear if we look once again at the definition of the bundles $F_e$, $D_e$, and $E_e$ given in Eqs.~(\ref{eq:6.4})--(\ref{eq:6.5}) with the viewpoint of Sect.~\ref{sec:5.2}. These not only are satisfied by the fields in $\mathcal{B}_e$ but also manifest themselves on the level of elements in the fibers of the perception bundle $E_e$ as perceived by a fixed inertial observer (See \cite{lee2022b}~Remark~6.2 for more on this point).

It is remarkable that one discovers all these fundamental equations of EM solely on the basis of the axioms of QM on which the principle of SR is in action.

\begin{remark}\label{remark:6.3}
The fact that the gauge freedom of EM can be derived from the little group of massless particles has been reported earlier in \cite{kupersztych1976, kim1981, mauro2009}. Eq.~(\ref{eq:6.3}) in this paper may be regarded as a reproduction of their results.

The perception bundle description of photon as presented in this section is similar to the Gupta-Bleuler quantization (\cite{gupta1950, bleuler1950}) of electromagnetic fields since both include the process of quotienting out the longitudinal polarization $D_e$.
\end{remark}

\section{Bundle theoretic descriptions for graviton}\label{sec:7}

In this section, we apply the construction of Sect.~\ref{sec:5} to obtain the perception bundle description for graviton, a massless particle with spin-2.

\subsection{The perception bundle for graviton}\label{sec:7.1}
We take $V_g := V_e \otimes V_e = \mathbb{C}^4 \otimes \mathbb{C}^4$ endowed with the product inner product structure induced from the standard inner product of $\mathbb{C}^4$ (i.e., the unique sesquilinear map on $V_g \otimes V_g$ satisfying $( v_1 \otimes w_1 , v_2 \otimes w_2 ) \mapsto (v_1 \cdot v_2 ) (w_1 \cdot w_2 )$) and fix an isometric embedding $\epsilon_g : \mathbb{C}^2 \rightarrow V_g$ defined on the basis elements as
\begin{equation}\label{eq:7.1}
\epsilon_g ( | \pm \rangle) = \epsilon_\pm \otimes \epsilon_\pm.
\end{equation}

They represent the right-handed and the left-handed circular polarizations of a gravitational wave propagating along the $\hat{\mathbf{z}}$-direction in the Minkowski space, respectively (cf. Eq.~(\ref{eq:7.10}) and Ch.~\textrm{IX}.4 of \cite{zee}).

Finally, we take $\Phi_g := \Phi_e \otimes \Phi_e : H \rightarrow GL( V_g)$. With these choices, we apply Lemma~\ref{lemma:5.1}.

\begin{lemma}\label{lemma:7.1}
Let $W_g := \Big[\frac{1}{2} p_0 \otimes p_0 , \frac{1}{2} ( p_0 \otimes \epsilon_+ + \epsilon_+ \otimes p_0), \frac{1}{2} ( p_0 \otimes \epsilon_- + \epsilon_- \otimes p_0) \Big] \leq V_g$, a three-dimensional subspace for which the listed elements form an orthonormal basis. Then, we have $R_g := \left[ \Phi_g (K) \epsilon_g (\mathbb{C}^2) \right] = W_g \oplus \epsilon_g (\mathbb{C}^2) \leq V_g$ (orthogonal direct sum), $W_g$ is $\Phi_g (K)$-invariant, and
\begin{equation}\label{eq:7.2}
\Phi_g (B) \epsilon_g (v) \equiv \epsilon_g ( \tilde{\eta}_2 (B) v ) \quad \textup{mod $W_g$}
\end{equation}
for $v \in \mathbb{C}^2$ and $B \in K$.
\end{lemma}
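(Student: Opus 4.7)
The plan is to mirror the structure of Lemma~\ref{lemma:6.1}, reducing every statement to the single-factor computation already performed in its proof by exploiting the tensor-product identities $V_g = V_e \otimes V_e$, $\Phi_g = \Phi_e \otimes \Phi_e$, and $\epsilon_g(|\pm\rangle) = \epsilon_\pm \otimes \epsilon_\pm$.

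First I would dispatch the orthogonality claims. Under the product inner product on $V_g$, every required inner product factors as a pair of inner products in $V_e$, and one only needs the elementary values $p_0 \cdot p_0 = 2$, $p_0 \cdot \epsilon_\pm = 0$, $\epsilon_\pm \cdot \epsilon_\pm = 1$, $\epsilon_+ \cdot \epsilon_- = 0$, all immediate from Eq.~(\ref{eq:6.1}). These yield that the three listed generators of $W_g$ are orthonormal and that $W_g \perp \epsilon_g(\mathbb{C}^2)$, whence $W_g \oplus \epsilon_g(\mathbb{C}^2)$ is a $5$-dimensional orthogonal direct sum in $V_g$ and $\epsilon_g$ is an isometry.

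The heart of the argument is the congruence~(\ref{eq:7.2}). For $B = \begin{pmatrix} z & b \\ 0 & \bar{z} \end{pmatrix} \in K$, Lemma~\ref{lemma:6.1} already yields $\Phi_e(B)\epsilon_+ = z^2 \epsilon_+ + (\bar{b}z)\, p_0$ and $\Phi_e(B)\epsilon_- = \bar{z}^2 \epsilon_- + (b\bar{z})\, p_0$. Expanding $\Phi_g(B)(\epsilon_\pm \otimes \epsilon_\pm) = \Phi_e(B)\epsilon_\pm \otimes \Phi_e(B)\epsilon_\pm$ by bilinearity produces three terms: a leading $z^{\pm 4} (\epsilon_\pm \otimes \epsilon_\pm)$, which matches $\epsilon_g(\tilde{\eta}_2(B)|\pm\rangle)$ because $\eta_{\pm 2}(B) = z^{\pm 4}$; a cross term proportional to $p_0 \otimes \epsilon_\pm + \epsilon_\pm \otimes p_0$; and a pure $p_0 \otimes p_0$ contribution. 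The latter two lie in $W_g$ by definition, which proves~(\ref{eq:7.2}) and simultaneously the inclusion $R_g \subseteq W_g \oplus \epsilon_g(\mathbb{C}^2)$.

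To close the argument I would address the reverse inclusion $W_g \subseteq R_g$ and the invariance of $W_g$. The reverse inclusion follows by varying $z$ and $b$ in $B$ and reading off each monomial of the above expansion as a separate coefficient in a polynomial identity (for instance, by differentiating in $b$ and $\bar{b}$ at $b=0$ with $z$ fixed, or by choosing a few explicit $B$'s and solving a small linear system). The $\Phi_g(K)$-invariance of $W_g$ is then a distributivity check: applying $\Phi_e \otimes \Phi_e$ to each of the three generators, using $\Phi_e(B) p_0 = p_0$ and the formulas for $\Phi_e(B)\epsilon_\pm$, lands each image back inside $W_g$ by inspection. I do not anticipate a genuine obstacle; the step most prone to bookkeeping error is the exponent tracking in the congruence, where one must confirm that the tensor square of the single-factor coefficient $z^{\pm 2}$ matches exactly the helicity-$\pm 2$ scalar $z^{\pm 4}$ prescribed by $\eta_{\pm 2}$.
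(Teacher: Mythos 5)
Your proposal is correct and follows essentially the same route as the paper's proof: reduce everything to the single-factor formula $\Phi_e(B)\epsilon_\pm = z^{\pm 2}\epsilon_\pm + (\,\overline{b}z \text{ or } b\overline{z}\,)p_0$ from Lemma~\ref{lemma:6.1}, expand the tensor square to read off the congruence and the inclusion $R_g \subseteq W_g \oplus \epsilon_g(\mathbb{C}^2)$, and obtain the reverse inclusion by varying $b$ with $z=1$. Your explicit check of $\Phi_g(K)$-invariance and of the norms (using $p_0\cdot p_0=2$) fills in two steps the paper leaves as one-liners, but this is the same argument.
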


\begin{proof} Since $\epsilon_+, \epsilon_-$, and $p_0$ are mutually orthogonal vectors in $V_e = \mathbb{C}^4$, we see $W_g \perp \epsilon_g (\mathbb{C}^2)$ and the listed elements in the definition of $W_g$ are orthonormal, and hence $W_g$ is 3-dimensional. The $\Phi_g (K)$-invariance of $W_g$ is immediate from the definition of $K$ as the little group $H_{p_0}$.

Let $B = \begin{pmatrix} z & b \\ 0 & \overline{z} \end{pmatrix} \in K$. Then, by Eq.~(\ref{eq:6.3}), we find
\begin{align}\label{eq:7.3}
\Phi_g(B) & \epsilon_g ( | \pm \rangle )= \Phi_g(B) (\epsilon_\pm \otimes \epsilon_\pm) = \Phi_e(B) \epsilon_\pm \otimes \Phi_e (B) \epsilon_\pm \nonumber \\
&=\begin{cases}
\left(z^2 \epsilon_e ( |+\rangle) + (\overline{b} z) p_0 \right) \otimes \left(z^2 \epsilon_e (|+\rangle) + (\overline{b} z) p_0 \right) & \textup{for $+$} \\
\left(\overline{z}^2 \epsilon_e (|-\rangle) + (b \overline{z}) p_0 \right) \otimes \left(\overline{z}^2 \epsilon_e (|-\rangle) + (b \overline{z} ) p_0 \right) & \textup{for $-$}
\end{cases} \nonumber \\
&=\begin{cases}
\epsilon_g \left( \tilde{\eta}_2 (B) | + \rangle \right) + (\overline{b} z^3 ) ( p_0 \otimes \epsilon_+ + \epsilon_+ \otimes p_0 ) + (\overline{b}z )^2 p_0 \otimes p_0 & \textup{for $+$} \\
\epsilon_g \left( \tilde{\eta}_2 (B) | - \rangle \right) + (b\overline{z}^3) ( p_0 \otimes \epsilon_- + \epsilon_- \otimes p_0 ) + (b\overline{z})^2 p_0 \otimes p_0 & \textup{for $-$}
\end{cases},
\end{align}
from which we conclude that $R_g \subseteq W_g \oplus \epsilon_g (\mathbb{C}^2)$ and Eq.~(\ref{eq:7.2}) holds.

To show the reverse inclusion, we note that in the preceding equation, we can set $z=1$ while $b$ can be any complex number. So, in particular, setting $b=0$ yields $\epsilon_g (\mathbb{C}^2) \subseteq R_g$. Then, varying $b$ appropriately, we deduce that $p_0 \otimes \epsilon_\pm + \epsilon_\pm \otimes p_0 , p_0 \otimes p_0 \in R_g$ also. Hence, we conclude $R_g = W_g \oplus \epsilon_g (\mathbb{C}^2)$.
\end{proof}

Using the identification $V_g = \mathbb{C}^4 \otimes \mathbb{C}^4 \cong M_4 (\mathbb{C})$, we can denote a generic element $M \in V_g$ as $M^{\mu \nu}$, $\mu, \nu = 0 ,1 , 2, 3$. In this notation, we have
\begin{equation}\label{eq:7.3}
(v \otimes w)^{\mu \nu} = v^\mu w^\nu
\end{equation} for $v, w \in \mathbb{C}^4$ and
\begin{equation}\label{eq:7.5}
[ \Phi_g (\Lambda) M] ^{\mu \nu} = \Phi_e (\Lambda)_\alpha ^{\mu} \Phi_e (\Lambda)_\beta ^\nu M^{\alpha \beta}.
\end{equation}

\begin{proposition}\label{proposition:7.2}
The potential bundle $F_g \leq X_0 \times V_g$ is given by
\begin{equation}\label{eq:7.6}
F_g = \{ (p, M) \in X_0 \times V_g : M^{\mu \nu} = M^{\nu \mu} \hspace{0.1cm} \& \hspace{0.1cm}  p_\mu M^{\mu \nu} = 0 \hspace{0.1cm} \& \hspace{0.1cm} M_\mu ^{\mu} = 0 \}
\end{equation}
and the gauge freedom $D_g \leq F_g$ associated with $\Phi_g$ is given by
\begin{equation}\label{eq:7.7}
D_g = \{ (p, p \otimes x + x \otimes p )  : p \in X_0, \hspace{0.1cm} x \in (F_e)_p \}.
\end{equation}
\end{proposition}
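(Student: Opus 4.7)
The plan is to mimic the strategy of Proposition~\ref{proposition:6.2}: first show that $(F_g)_p = \Phi_g(\Lambda) R_g$ sits inside the set described on the right-hand side of Eq.~(\ref{eq:7.6}), then match dimensions, and finally compute $(D_g)_p = \Phi_g(\Lambda) W_g$ directly on the basis listed in Lemma~\ref{lemma:7.1}.

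Write $p = \Lambda p_0$. For the inclusion, I would check the three defining conditions on each of the five basis vectors of $R_g$ after applying $\Phi_g(\Lambda) = \Phi_e(\Lambda) \otimes \Phi_e(\Lambda)$. Symmetry is automatic since each basis vector of $R_g$ has the form $v \otimes v$ or $v \otimes w + w \otimes v$, and $\Phi_g(\Lambda)$ preserves this structure. Transversality $p_\mu M^{\mu\nu} = 0$ reduces, via Eq.~(\ref{eq:7.5}) and the identity $\Phi_e(\Lambda)^\mu_\alpha p_\mu = (p_0)_\alpha$, to the statements $(p_0)_\mu \epsilon_\pm^\mu = 0$, $(p_0)_\mu p_0^\mu = 0$, which hold by the definitions of $\epsilon_\pm$ and the nullity of $p_0$. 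Tracelessness $M^\mu{}_\mu = 0$ similarly reduces, using Lorentz invariance of the Minkowski inner product under $\Phi_e$, to $(\epsilon_\pm)_\mu \epsilon_\pm^\mu = 0$, $(p_0)_\mu p_0^\mu = 0$, and $(p_0)_\mu \epsilon_\pm^\mu = 0$, all of which are immediate.

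For equality in Eq.~(\ref{eq:7.6}), I would show that the three linear conditions cut out a $5$-dimensional subspace of $V_g$ for every $p \in X_0$, matching $\dim R_g = 5$. Since Lorentz transformations act covariantly on the conditions and on $V_g$, it suffices to verify this at $p = p_0$ by a direct coordinate count: symmetry gives a $10$-dimensional subspace of $V_g$, the four transversality conditions $p_\mu M^{\mu\nu} = 0$ are linearly independent on symmetric tensors (already visible by looking at the "$\nu$-th row" when $p = p_0$), cutting the dimension to $6$, and the single traceless condition drops it to $5$. This is the only nontrivial point in the argument; once one is careful that the transversality conditions remain independent after imposing symmetry (here $p \ne 0$ is the key), the dimension count closes the inclusion into an equality.

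Finally, for $D_g$, I would simply apply $\Phi_g(\Lambda) = \Phi_e(\Lambda) \otimes \Phi_e(\Lambda)$ term-by-term to the orthonormal basis of $W_g$:
\begin{equation*}
\Phi_g(\Lambda)(p_0 \otimes p_0) = p \otimes p, \qquad \Phi_g(\Lambda)\bigl(p_0 \otimes \epsilon_\pm + \epsilon_\pm \otimes p_0\bigr) = p \otimes x_\pm + x_\pm \otimes p,
\end{equation*}
where $x_\pm := \Phi_e(\Lambda)\epsilon_\pm$. Since $(F_e)_p = \Phi_e(\Lambda) R_e$ is spanned by $p$ and $x_\pm$ by Lemma~\ref{lemma:6.1}, the linear map $x \mapsto p \otimes x + x \otimes p$ sends $(F_e)_p$ onto $\Phi_g(\Lambda) W_g = (D_g)_p$, yielding Eq.~(\ref{eq:7.7}) at once. (Note $p \otimes p$ is picked up by $x = p$, and the tensor $p \otimes x + x \otimes p$ vanishes for no nonzero $x \in (F_e)_p$ other than those in $\mathbb{C} p$, consistently with $\dim W_g = 3$.)
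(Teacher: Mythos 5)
Your proposal is correct and follows essentially the same route as the paper's proof: compute $(D_g)_p = \Phi_g(\Lambda)W_g$ directly from the basis of $W_g$ and identify it with $\{p\otimes x + x\otimes p : x\in (F_e)_p\}$, verify the symmetry, transversality, and trace conditions on $\Phi_g(\Lambda)R_g$ via Lorentz invariance, and close the inclusion by a dimension count showing both sides of Eq.~(\ref{eq:7.6}) are $5$-dimensional. The only cosmetic difference is that you organize the count as $10\to 6\to 5$ inside the symmetric tensors at $p=p_0$, while the paper realizes the right-hand side as the kernel of a surjective map $V_g\to\mathbb{C}^{6+4+1}$; these are the same computation.
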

\begin{proof}
Let $p= \Lambda p_0 \in X_0$ for some $\Lambda \in H$. By Lemma~\ref{lemma:7.1}, we have
\begin{align*}
(D_g)_p := \Phi_g (\Lambda) W_g = \Big[ p \otimes p , (p \otimes \Phi_e (\Lambda) \epsilon_\pm + \Phi_e (\Lambda) \epsilon_\pm \otimes p ) \Big] \\
= \Big\{ p \otimes x + x \otimes p  : p \in X_0, \hspace{0.1cm} x \in (F_e)_p \Big\}
\end{align*}
since $(F_e)_p := \Phi_e (\Lambda) R_e = \Phi_e (\Lambda) (\mathbb{C} p_0 \oplus \epsilon_e (\mathbb{C}^2) ) = \mathbb{C} p \oplus \Phi_e (\Lambda) \epsilon_e (\mathbb{C}^2)$. So, Eq.~(\ref{eq:7.7}) is proved.

Observe that
\begin{gather*}
(p \otimes x + x \otimes p )^{\mu \nu} = p^\mu x^\nu + x^\mu p^\nu = (x \otimes p + p \otimes x)^{\nu \mu} \\
p_\mu (p \otimes x + x \otimes p)^{\mu \nu} = p_\mu p^\mu x^\nu + p_\mu x^\mu p^\nu = 0 +0 = 0 \\
(p \otimes x + x \otimes p)_{\mu} ^\mu =2 p^\mu x_\mu = 0
\end{gather*}
by the relation $p_\mu p^\mu = 0 , \hspace{0.1cm} \forall p \in X_0$ and Eq.~(\ref{eq:6.4}). These calculations show that $D_g$ is contained in the RHS of Eq.~(\ref{eq:7.6}).

Also, observe that for $M:=\Phi_g (\Lambda) \epsilon_g ( | \pm \rangle) = \Phi_e(\Lambda) |\pm \rangle \otimes \Phi_e (\Lambda) | \pm \rangle \in V_g$,
\begin{gather*}
M^{\mu \nu} = \big(\Phi_e (\Lambda)  | \pm \rangle \big)^\mu \big(\Phi_e (\Lambda)  | \pm \rangle \big)^\nu = M^{\nu \mu} \\
p_\mu M^{\mu \nu} = p_\mu \big( \Phi_e (\Lambda) | \pm \rangle \big)^\mu \big(\Phi_e (\Lambda)  | \pm \rangle \big)^\nu = 0 \\
M_\mu ^\mu = \eta_{\mu \nu} \Phi_e (\Lambda)^\mu _\alpha \Phi_e (\Lambda)^\nu _\beta | \pm \rangle^\alpha | \pm \rangle^\beta = \eta_{\alpha \beta}  | \pm \rangle^\alpha | \pm \rangle^\beta = -1 - ( \pm i )^{2} = 0,
\end{gather*}
which shows that indeed
\begin{equation*}
(F_g)_p := \Phi_g (\Lambda) R_g = \Phi_g (\Lambda) \big( W_g \oplus \epsilon_g (\mathbb{C}^2) \big) = (D_g)_p \oplus \Phi_g (\Lambda) \epsilon_g (\mathbb{C}^2)
\end{equation*}
is contained in the RHS of Eq.~(\ref{eq:7.6}). To see that the equality holds, note that $(F_g)_p$ is of dimension 5 being an isomorphic image of the 5-dimensional vector space $R_g$ (cf. Lemma~\ref{lemma:7.1}) and also the RHS of Eq.~(\ref{eq:7.6}) is of dimension 5 being the kernel of the surjective linear map $V_g \rightarrow \mathbb{C}^{6 + 4 + 1}$ given by
\begin{equation*}
M \mapsto \begin{pmatrix} (M^{\mu \nu} - M^{\nu \mu})_{0 \leq \mu < \nu \leq 3} \\ (p_\mu M^{\mu \nu} )_{0 \leq \nu \leq 3} \\
M_\mu ^\mu \end{pmatrix}.
\end{equation*}
\end{proof}

Applying Theorem~\ref{theorem:5.2} with the choice $(\epsilon_g, \Phi_g)$, we obtain Table~\ref{tab:4}, the perception bundle description for graviton. One should note that we used $\hat{h}_g$ to denote the Hermitian metric of the perception bundle $E_g$ instead of $h_g$. This is because we want to reserve the letter $h$ to denote the linearized gravity. See Sect.~\ref{sec:7.4}.

\begin{table}[h]

\caption{The perception bundle description for graviton}
\label{tab:4}

\centering
\begin{tabular}{|m{1.3cm}|m{5.5cm}|m{7cm}|}
\hline\noalign{\smallskip}
  & $F_g$  (The potential bundle) & $E_g$ (The perception bundle) \\
\noalign{\smallskip}\hline\noalign{\smallskip}
Bundle  &  $F_g$ & $F_g/D_g$ \\
\noalign{\smallskip}\hline\noalign{\smallskip}
Metric &  None & $(\hat{h}_g)_{p} \big(z + (D_g)_p, w + (D_g)_p \big) = $ \newline $ \big\langle \Phi_g(\Lambda)^{-1} z + W_g , \Phi_g(\Lambda)^{-1} w + W_g \big\rangle_{R_g/W_g}$ \\
\noalign{\smallskip}\hline\noalign{\smallskip}
Action &   $\vartheta_g (a,\Lambda) (p, z) $ \newline $  = (\Lambda p, e^{-i \langle \Lambda p, a \rangle} \Phi_g(\Lambda) z )$ & $\lambda_g (a, \Lambda) (p , \xi ) = (\Lambda p,  e^{-i \langle \Lambda p , a \rangle} \overline{\Phi_g(\Lambda)} \xi)$ \\
\noalign{\smallskip}\hline\noalign{\smallskip}
Space & $\mathcal{B}_g = \mathcal{Q}_g^{-1} (\mathcal{H}_g) $ & $\mathcal{H}_g = L^2 \left (X_0 , E_g ; \mu , h_g \right)$ \\
\noalign{\smallskip}\hline\noalign{\smallskip}
Repn & $ \mathcal{U}_g (a,\Lambda ) h = \vartheta_g (a,\Lambda) \circ h \circ \Lambda^{-1}$ & $U_g (a, \Lambda) \phi = \lambda_g (a, \Lambda) \circ \phi \circ \Lambda^{-1}$ \\
\noalign{\smallskip}\hline
\end{tabular}

\end{table}

\subsection{The vector bundle point of view for the perception bundle of graviton}\label{sec:7.2}

As in Sect.~\ref{sec:4.3}, if two inertial observers Alice and Bob, who are related by a Lorentz transformation $(a, \Lambda) \in G$ as in Eq.~(\ref{eq:4.10}), are using the perception bundle to describe a graviton, then the two observer's descriptions should be related by the action of Table~\ref{tab:4}, i.e.,
\begin{gather}
\lambda_g (a, \Lambda ) : E_g ^{ A} \rightarrow E_g ^{ B} \nonumber \\
(p,\xi)^A \mapsto \left(  \Lambda p, e^{- i (\Lambda p)_\mu a^\mu} \overline{\Phi_g (\Lambda )} \xi \right)^B \label{eq:7.8}
\end{gather}
so that the transformation law for wave functions $U_g (a,\Lambda) \phi = \lambda_g (a,\Lambda) \circ \phi \circ \Lambda^{-1}$ holds.

The convention suggested in Sect.~\ref{sec:5.2} encourages us to rewrite this vector bundle point of view as follows.

\begin{gather}
\vartheta_g (a, \Lambda ) : F_g ^{ A} \rightarrow F_g ^{ B} \nonumber \\
(p,z)^A \mapsto \left(  \Lambda p, e^{- i (\Lambda p)_\mu a^\mu} \Phi_g (\Lambda ) z \right)^B \label{eq:7.8'}\tag{7.8$'$}
\end{gather}
with the understanding that $z$ and $e^{- i (\Lambda p)_\mu a^\mu} \Phi_g (\Lambda ) z$ are only defined up to elements of $(D_g)_p$ and $(D_g)_{\Lambda p}$, respectively.

\subsection{Physical interpretations of the boosting and perception bundle descriptions for graviton}\label{sec:7.3}

Observe
\begin{equation}\label{eq:7.9}
\hat{J}^3 := i ( \tilde{\eta}_2 )_* (J^3) = \begin{pmatrix} 2 & 0 \\ 0 & -2 \end{pmatrix}
\end{equation}
and hence $| \pm \rangle \in \mathbb{C}^2$ represent the helicity-($\pm 2$) states, respectively (cf. Eqs.(\ref{eq:2.10})--(\ref{eq:2.11}) and Remark~\ref{remark:2.5}). What this means in the case of gravitational wave is that $|\pm \rangle$ represent the right-handed and the left-handed circular polarizations of the wave propagating along the $\hat{\mathbf{z}}$-direction, respectively.

On the other hand, note that under the identification $V_g \cong M_4 (\mathbb{C})$, we have
\begin{equation}\label{eq:7.10}
| \pm \rangle \otimes | \pm \rangle \cong \begin{pmatrix} 0 & 0 & 0 & 0 \\ 0 & 1 & \pm i & 0 \\ 0 & \pm i & -1 & 0 \\ 0 & 0 & 0 & \end{pmatrix}.
\end{equation}

According to the standard picture of how a gravitational wave, of which the graviton is the quantum, affects a ring of massive particles as the wave passes by (cf. pp.565--567 of \cite{zee}), we notice that $\epsilon_g (|\pm \rangle)$ represent the right-handed and the left-handed circular polarizations of the wave propagating along the $\hat{\mathbf{z}}$-direction, respectively.

Therefore, the map $\epsilon_g : \mathbb{C}^2 \rightarrow \mathbb{C}^4 \otimes \mathbb{C}^4$ given by Eq.~(\ref{eq:7.1}) is seen to be the Minkowski space realization of the polarizations of a graviton propagating along the $\hat{\mathbf{z}}$-direction. So, in view of the definition of $\Phi_g$ and the transformation property of the 2-tensors on the Minkowski space, of which the polarization tensors of a gravitational wave are examples, we see the 2-tensors $\Phi_g (\Lambda) \epsilon_{g} (| \pm \rangle)$ represent the right-handed and the left-handed circular polarizations of a graviton propagating along the $\kappa(\Lambda) \hat{\mathbf{z}}$-direction in $\mathbb{R}^4$, respectively, for each $\Lambda \in H$.

Thus, as in Sect.~\ref{sec:6.3}, we conclude that each fiber $(E_g)_p$ consists of all \textit{physically realizable} polarization 2-tensors of a graviton moving with momentum $p \in X_0$ in a fixed inertial frame whereas $(F_g)_p$ consists of all \textit{mathematically conceivable} polarization 2-tensors. Also, the transformation laws Eqs.~(\ref{eq:7.8})--(\ref{eq:7.8'}) precisely capture the transformation law of (complex) 2-tensors on the Minkowski space.

We conclude that each fiber $(E_g)_p$ of the perception bundle $E_g$ correctly reflects the relativistic perception of polarization states of a graviton from a fixed inertial observer who is using this bundle for the description of a graviton.

In contrast, the same analysis as in Sect.~\ref{sec:6.3} regarding Eq.~(\ref{eq:4.12}) shows that each fiber $(E_{L,2})_p$ of the boosting bundle for graviton does not respect the relativistic perception of inertial observers.

\subsection{Theoretical implications on the representation}\label{sec:7.4}
\paragraph{The representation space constructed from the perception bundle}

\hfill

From Table~\ref{tab:4}, we see massless particles with spin-$2$ are described by the following representation, which is equivalent to $\Pi_2$: The space is
\begin{subequations}\label{eq:7.11}
\begin{equation}\label{eq:7.11a}
\mathcal{H}_g := L^2 (X_0, E_g ; \mu , \hat{h}_g)
\end{equation}
and the representation $U_g : G \rightarrow U(\mathcal{H}_g)$ is given by, for $(a, \Lambda) \in G$ and $\phi \in \mathcal{H}_g$,
\begin{equation}\label{eq:7.11b}
[U_g (a, \Lambda) \phi] (p) = e^{-i \langle p , a \rangle} \overline{ \Phi_g (\Lambda)} \phi (\Lambda^{-1} p ).
\end{equation}
\end{subequations}

Sect.~\ref{sec:5.2} suggests that we consider the representation space
\begin{subequations}\label{eq:7.12}
\begin{equation}\label{eq:7.12a}
\mathcal{B}_g := \mathcal{Q}_g ^{-1} (\mathcal{H}_g)
\end{equation}
and the representation $\mathcal{U}_g :G \rightarrow GL(\mathcal{B}_g)$ defiend as, for $(a, \Lambda) \in G$ and $\psi \in \mathcal{B}_e$,
\begin{equation}\label{eq:7.12b}
[\mathcal{U}_g (a, \Lambda) \psi ] (p) = e^{- i\langle p, a \rangle} \Phi_e (\Lambda) \psi (\Lambda^{-1} p)
\end{equation}
\end{subequations}
and express any element $\phi \in \mathcal{H}_g$ by its lift $\psi \in \mathcal{B}_g$ with the understanding that $\psi$ is defiend only up to an addition of a Borel-measurable section $\varphi : X_0 \rightarrow D_g$. In this convention, Eq.~(\ref{eq:7.12b}) takes the role of the representation $U_g$.

\paragraph{Einstein's field equations in vacuum, the traceless-transverse gauge condition, and the gauge freedom of GR as manifestations of relativistic perception}

\hfill

As in Sect.~\ref{sec:6.4}, we define the four-momentum operators $P^\mu$ and $\mathcal{P}^\mu$ on $\mathcal{H}_g$ and $\mathcal{B}_g$ as the infinitesimal generators of the translation operators $U_g (b,I)$ and $\mathcal{U}_g (b,I)$, respectively. Then, Eqs.~(\ref{eq:7.11b}) and (\ref{eq:7.12b}) yield
\begin{subequations}\label{eq:7.13}
\begin{equation}\label{eq:7.13a}
[P^\mu \phi] (p) = p^\mu \phi (p)
\end{equation}
\begin{equation}\label{eq:7.13b}
[\mathcal{P}^\mu \psi ] (p) = p^\mu \psi(p)
\end{equation}
\end{subequations}
for ${ }^\forall \phi \in \mathcal{H}_g$ and ${ }^\forall \psi \in \mathcal{B}_g$ and hence
\begin{equation}\label{eq:7.14}
\left[ \mathcal{Q}_g \circ (\mathcal{P}^\mu \psi ) \right] (p) = [P^\mu \phi ] (p)
\end{equation}
whenever $\mathcal{Q}_g \circ \psi = \phi$ and $P^\mu \phi$ is defined.

Therefore, in the convention of Sect.~\ref{sec:5.2}, the operators $\mathcal{P}^\mu$ are just the momentum operators $P^\mu$ on the quantum Hilbert space $\mathcal{H}_g$ as long as each field $\psi \in \mathcal{B}_g$ is understood as defined only up to an addition of a Borel-measurable section of $D_g = \amalg_{p \in X_0} (p \otimes (F_e)_p + (F_e)_p \otimes p)$.

Now observe that by Eq.~(\ref{eq:7.6}) and the relation $p_\mu p^\mu =0$ which holds for all $p \in X_0$, we have
\begin{equation}\label{eq:7.15}
\mathcal{P}_\mu \mathcal{P}^\mu \psi = 0,
\end{equation}
and
\begin{equation}\label{eq:7.16}
\mathcal{P}_{\mu} \psi^{\mu \nu} = 0, \quad \psi_\mu ^\mu = 0, \quad \psi_{\mu \nu} = \psi_{\nu \mu}
\end{equation}
for ${ }^\forall \psi \in \mathcal{B}_g$. Also, two fields $\psi, \psi' \in \mathcal{B}_g$ represent the same wave function in $\mathcal{H}_g$ if and only if there exists a (Borel-measurable) section $A : X_0 \rightarrow F_e$ such that
\begin{equation}\label{eq:7.17}
\psi_{\mu \nu} - \psi'_{\mu \nu} = \mathcal{P}_\mu A_\nu + \mathcal{P}_\nu A_\mu
\end{equation}
by Eq.~(\ref{eq:7.7}).

As in Sect.~\ref{sec:6.4}, let's use the following Fourier transforms, with the assumption that the sections $\psi, \psi'$, and $A$ are smooth and have compact supports.

\begin{gather}
h(t, \mathbf{x}) := \int_{X_0} \exp (-i p^0 t + i \mathbf{p} \cdot \mathbf{x} ) \sqrt{p^0} \psi(p) \frac{d \mu (p)} {(2 \pi)^{\frac{3}{2}} }  \label{eq:7.18} \\
B(t, \mathbf{x}) := \int_{X_0} \exp (-i p^0 t + i \mathbf{p} \cdot \mathbf{x} ) \sqrt{p^0} A(p) \frac{d \mu (p)} {(2 \pi)^{\frac{3}{2}} } \label{eq:7.19}
\end{gather}

These are smooth functions on $\mathbb{R}^4$ and become the Fourier transforms of $\psi$ and $A$, respectively, when restricted to each hypersurface $\{t\} \times \mathbb{R}^3$.

By Eqs.~(\ref{eq:7.15})--(\ref{eq:7.17}), the following equations are satisfied by these fields.

\begin{gather}
\partial_\mu \partial^\mu h =0 \label{eq:7.15'} \tag{7.15$'$} \\
\partial_\mu h^{\mu \nu} = 0, \quad h_\mu ^\mu = 0, \quad h_{\mu \nu} = h_{\nu \mu} \label{eq:7.16'} \tag{7.16$'$} \\
h_{\mu \nu} - h'_{\mu \nu} = \partial_\mu B_\nu + \partial_\nu B_\mu \label{eq:7.17'} \tag{7.17$'$}.
\end{gather}

If we interpret the field $h$ as representing the linearized gravity (i.e., the metric perturbation), then Eq.~(\ref{eq:7.15'}) is \textit{Einstein's fields equations in vacuum}, Eq.~(\ref{eq:7.16'}) is the \textit{traceless-transverse gauge condition for symmetric 2-tensors}, and Eq.~(\ref{eq:7.17'}) expresses \textit{the gauge freedom} of the linearized gravity $h$ (cf. Ch.~IX.4 of \cite{zee}). We have just seen that Eqs.~(\ref{eq:7.15})--(\ref{eq:7.17}) are characteristics of all fields in $\mathcal{B}_g$. Given the interpretation of the perception bundle $E_g$ presented in Sect.~\ref{sec:7.3}, we find that \textit{Einstein's field equations in vacuum, the traceless-transverse gauge condition, and the gauge freedom of GR are nothing but manifestations of a fixed inertial observer's perception of internal quantum states of a graviton.} This fact is even more clear if we look once again at the definition of the bundles $F_g$, $D_g$, and $E_g$ given in Eqs.~(\ref{eq:7.6})--(\ref{eq:7.7}) with the viewpoint of Sect.~\ref{sec:5.2}. These not only are satisfied by the fields in $\mathcal{B}_g$ but also manifest themselves on the level of elements in the fibers of the perception bundle $E_g$ as perceived by a fixed inertial observer.

\section{The perception bundle description for general massless particles; gauge freedom}\label{sec:8}

In this section, we apply the construction of Sect.~\ref{sec:5} to massless particles with arbitrary spin. We fix $0 < s \in \frac{1}{2} \mathbb{Z}$ throughout this section.

\subsection{The perception bundle for massless particles with spin-\texorpdfstring{$s$}{TEXT}}\label{sec:8.1}

Following the heuristics laid out at the beginning of Sect.~\ref{sec:5}, we take $V_s := \Sigma^{2s} (\mathbb{C}^2)$, the $2s$-symmetric tensor product space endowed with the innerproduct induced from $\mathbb{C}^2$, and $\Phi_s : H \rightarrow GL( V_s)$ given by $T \mapsto \Sigma^{2s} (T)$, the $2s$-symmetric tensor product of the map $T \in H \leq GL( \mathbb{C}^2 )$. For details about this construction, see \cite{lee2022b}~pp.25--27.

Since
\begin{equation}\label{eq:8.1}
\left\{ \sqrt{\frac{(2s)!}{k! (2s - k )!}} | + \rangle^{k} | - \rangle^{2s -k} : 0 \leq k \leq 2s \right\}
\end{equation}
is an orthonormal basis for the $(2s +1)$-dimensional space $V_s$ consisting of the eigenvectors of the operator $\hat{J}^3:= i (\Phi_s )_* (J^3)$ with eigenvalues $k-s$ for $0 \leq k \leq 2s$ (cf. \cite{lee2022b}~Theorem~4.7), the most natural choice for the embedding $\epsilon_s : \mathbb{C}^2 \rightarrow V_s$ would be
\begin{equation}\label{eq:8.2}
\epsilon_s( | \pm \rangle ) := | \pm \rangle^{2s} \in V_s.
\end{equation}

Note that $\Phi_s$ is a (non-unitary) representation and $\epsilon_s : \mathbb{C}^2 \rightarrow V_s$ is an isometric embedding due to the orthonormality of Eq.~(\ref{eq:8.1}). With these choices, we apply Lemma~\ref{lemma:5.1}.

\begin{lemma}\label{lemma:8.1}
Let $W_s := \left[ | + \rangle^{k} | - \rangle^{2s -k} : 1 \leq k \leq 2s-1 \right] \leq V_s$. Then, we have $R_s := \left[ \Phi_s (K) \epsilon_s (\mathbb{C}^2) \right] = W_s \oplus \epsilon_s (\mathbb{C}^2) = V_s$ (orthogonal direct sum), $W_s$ is $\Phi_s (K)$-invariant, and
\begin{equation}\label{eq:8.3}
\Phi_s (B) \epsilon_s (v) \equiv \epsilon_s ( \tilde{\eta}_s (B) v ) \quad \textup{(mod $W_s$)}
\end{equation}
for $v \in \mathbb{C}^2$ and $B \in K$.
\end{lemma}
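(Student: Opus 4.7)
Following the template of Lemmas~\ref{lemma:6.1} and \ref{lemma:7.1}, the plan is to do everything by direct computation using the binomial theorem, the only genuinely new ingredient being arithmetic on exponents.

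First, I would note that $\{|+\rangle^k|-\rangle^{2s-k}: 0 \leq k \leq 2s\}$ is, up to the normalization constants of Eq.~\ref{eq:8.1}, an orthogonal basis of the $(2s+1)$-dimensional space $V_s$. By definition, $W_s$ is the span of the middle basis vectors ($1 \leq k \leq 2s-1$) while $\epsilon_s(\mathbb{C}^2)$ is the span of the two extremes $|\pm\rangle^{2s}$. So $V_s = W_s \oplus \epsilon_s(\mathbb{C}^2)$ as an orthogonal direct sum, with the correct dimension count $(2s-1)+2 = 2s+1$.

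Next comes the key computation. For $B = \begin{pmatrix} z & b \\ 0 & \bar{z}\end{pmatrix} \in K$ one has $B|+\rangle = z|+\rangle$ and $B|-\rangle = b|+\rangle + \bar{z}|-\rangle$. Using the multiplicativity of $\Sigma^{2s}$ on rank-one tensors together with the binomial theorem,
\[
\Phi_s(B)\bigl(|+\rangle^k|-\rangle^{2s-k}\bigr) = z^k \sum_{j=0}^{2s-k} \binom{2s-k}{j}\, b^j \bar{z}^{2s-k-j}\, |+\rangle^{k+j}|-\rangle^{2s-k-j}.
\]
Specializing $k = 2s$ gives $\Phi_s(B)\epsilon_s(|+\rangle) = z^{2s}|+\rangle^{2s} = \epsilon_s(\tilde{\eta}_s(B)|+\rangle)$ with no correction at all, so Eq.~\ref{eq:8.3} is immediate for $v = |+\rangle$. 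Specializing $k = 0$ gives the expansion of $\Phi_s(B)\epsilon_s(|-\rangle)$, in which the $j = 0$ term equals $\bar{z}^{2s}|-\rangle^{2s} = \epsilon_s(\tilde{\eta}_s(B)|-\rangle)$ and the remaining terms are the corrections that need to be absorbed into $W_s$ in order to establish Eq.~\ref{eq:8.3} for $v = |-\rangle$.

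For the $\Phi_s(K)$-invariance of $W_s$, I would apply the same formula to a generic interior basis vector (i.e., with $1 \leq k \leq 2s-1$) and track the range of exponents $k+j$ appearing in the output, verifying that the result stays inside the span of interior monomials. The equality $R_s = V_s$ then follows by noting that varying the parameter $b \in \mathbb{C}$ in the $k = 0$ formula produces linear combinations spanning the full basis $\{|+\rangle^j|-\rangle^{2s-j} : 0 \leq j \leq 2s\}$, so $R_s$ exhausts $V_s$. The delicate point throughout is the bookkeeping of which monomials appear in each binomial expansion and matching them correctly to the prescribed subspaces $W_s$ and $\epsilon_s(\mathbb{C}^2)$; this is the uniform generalization of the explicit verifications carried out in Eq.~\ref{eq:6.3} for spin-1 and in the calculation inside the proof of Lemma~\ref{lemma:7.1} for spin-2.
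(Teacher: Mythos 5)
Your overall route---orthogonality of the monomial basis, the binomial expansion of $(B|-\rangle)^{2s}$, and the spanning argument obtained by varying $b$ with $z=1$---is exactly the paper's, and your general formula
\[
\Phi_s(B)\bigl(|+\rangle^{k}|-\rangle^{2s-k}\bigr)=z^{k}\sum_{j=0}^{2s-k}\binom{2s-k}{j}\,b^{j}\,\overline{z}^{\,2s-k-j}\,|+\rangle^{k+j}|-\rangle^{2s-k-j}
\]
is correct; the $k=2s$ and $k=0$ specializations and the $R_s=V_s$ argument all match the printed proof.

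However, the bookkeeping you defer is precisely where the argument breaks, and if you carry it out you will find it does not close. The top-degree term of each of these expansions is a multiple of $|+\rangle^{2s}$, which lies in $\epsilon_s(\mathbb{C}^2)$, not in $W_s$. Concretely: (i) for $v=|-\rangle$ the $j=2s$ term is $b^{2s}|+\rangle^{2s}\notin W_s$, so what the expansion actually gives is $\Phi_s(B)\epsilon_s(|-\rangle)\equiv\overline{z}^{\,2s}|-\rangle^{2s}+b^{2s}|+\rangle^{2s}\pmod{W_s}$, and Eq.~(\ref{eq:8.3}) does not follow; the case $s=1/2$, where $W_{1/2}=0$ and $\Phi_{1/2}(B)\epsilon_{1/2}(|-\rangle)=b|+\rangle+\overline{z}|-\rangle\neq\overline{z}|-\rangle$, makes the failure explicit. (ii) For an interior monomial with $1\le k\le 2s-1$, the $j=2s-k$ term is $z^{k}b^{2s-k}|+\rangle^{2s}\notin W_s$, so the verification you propose of the $\Phi_s(K)$-invariance of $W_s$ fails already for $s=1$: $\Phi_1(B)(|+\rangle|-\rangle)=|+\rangle|-\rangle+zb\,|+\rangle^{2}$. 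To be fair, the paper's own proof asserts Eq.~(\ref{eq:8.3}) from the same expansion without isolating the $l=2s$ term and never checks the invariance of $W_s$ at all, so your proposal is a faithful reconstruction of the intended argument; but as written, neither your argument nor the paper's establishes the invariance of $W_s$ or the congruence (\ref{eq:8.3}) for the subspace $W_s$ as defined, and this point needs to be repaired rather than merely "tracked."
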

\begin{proof}
By the orthogonality of the vectors in Eq.~(\ref{eq:8.1}), we have $W_s \perp \epsilon_s (\mathbb{C}^2)$.

Let $B = \begin{pmatrix} z & b \\ 0 & \overline{z} \end{pmatrix} \in K$. Observe that

\begin{align*}
\Phi_s (B) \epsilon_s (| \pm \rangle)= \Phi_s (B) \left(| \pm \rangle^{2s} \right) = (B | \pm \rangle )^{2s} = \begin{cases}
z^{2s} |+ \rangle^{2s} & \textup{for $+$} \\
( b |+ \rangle + \overline{z} | - \rangle )^{2s} & \textup{for $-$}
\end{cases} \\
=\begin{cases}
z^{2s} | + \rangle^{2s} & \textup{for $+$} \\
\overline{z}^{2s} | - \rangle^{2s} + \sum_{l=1} ^{2s} \binom{2s}{l} b^{l} \overline{z}^{2s - l} | + \rangle^{l} | - \rangle^{2s -l} & \textup{for $-$}
\end{cases} \\
=\begin{cases}
\epsilon_s \big( \tilde{\eta}_s (B) |+ \rangle \big) & \textup{for $+$} \\
\epsilon_s \big( \tilde{\eta}_s (B) |- \rangle \big) + \sum_{k=1} ^{2s} \binom{2s}{l} b^{k} \overline{z}^{2s - k} | + \rangle^{k} | - \rangle^{2s -k} & \textup{for $-$}
\end{cases},
\end{align*}
which shows that $R_s \subseteq W_s \oplus \epsilon_s (\mathbb{C}^2)$ and Eq.~(\ref{eq:8.3}) holds.

Since in the preceding equation $b$ can be any complex number while we can set $z =1$, we see that
\begin{equation*}
|+ \rangle^k | - \rangle^{2s - k} \in R_s \quad \textup{for $0 \leq k \leq 2s $},
\end{equation*}
which implies $W_s \oplus \epsilon_s (\mathbb{C}^2) \subseteq V_s \subseteq R_s$.
\end{proof}

\begin{proposition}\label{proposition:8.2}
The potential bundle $F_s \leq X_0 \times V_s$ is given by
\begin{equation}\label{eq:8.4}
F_s = X_0 \times V_s
\end{equation}
and the gauge freedom $D_s \leq F_s$ associated with $\Phi_s$ is given by
\begin{equation}\label{eq:8.5}
D_s = \left\{ (\Lambda p_0 , z) \in F_s : \Lambda \in H \textup{ and } z \in \Big[ (\Lambda | + \rangle) ^k (\Lambda | - \rangle)^{2s - k} : 1 \leq k \leq 2s -1 \Big] \right\}.
\end{equation}
\end{proposition}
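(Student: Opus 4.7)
The plan is to deduce Proposition~\ref{proposition:8.2} directly from Lemma~\ref{lemma:8.1} by unwinding the definitions of $F_s$ and $D_s$ given in Lemma~\ref{lemma:5.1}, using the explicit structure of the symmetric tensor product representation $\Phi_s = \Sigma^{2s}$.

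First I would dispose of Eq.~(\ref{eq:8.4}). By Lemma~\ref{lemma:8.1} we have $R_s = V_s$, so for any $p \in X_0$ and any $A \in H$ with $A p_0 = p$, the fiber $(F_s)_p := \Phi_s(A) R_s = \Phi_s(A) V_s$ equals $V_s$ since $\Phi_s(A)$ is an automorphism of $V_s$. Thus the subbundle $F_s \le X_0 \times V_s$ has every fiber equal to $V_s$, i.e. $F_s = X_0 \times V_s$.

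Next for Eq.~(\ref{eq:8.5}): by Lemma~\ref{lemma:5.1} we know $(D_s)_p = \Phi_s(\Lambda) W_s$ for any $\Lambda \in H$ with $\Lambda p_0 = p$, and this is independent of the chosen $\Lambda$ by $\Phi_s(K)$-invariance of $W_s$. The key observation is that $\Phi_s$ is defined as the $2s$-fold symmetric tensor power of the standard representation of $H$ on $\mathbb{C}^2$, so on pure symmetric tensors it acts multiplicatively:
\begin{equation*}
\Phi_s(\Lambda)\bigl(|+\rangle^{k}|-\rangle^{2s-k}\bigr) = (\Lambda|+\rangle)^{k}(\Lambda|-\rangle)^{2s-k}.
\end{equation*}
Applying $\Phi_s(\Lambda)$ to the given spanning set $\{|+\rangle^{k}|-\rangle^{2s-k} : 1 \leq k \leq 2s-1\}$ of $W_s$ therefore yields the spanning set $\{(\Lambda|+\rangle)^{k}(\Lambda|-\rangle)^{2s-k} : 1 \leq k \leq 2s-1\}$ of $(D_s)_p$, which is exactly Eq.~(\ref{eq:8.5}).

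There is essentially no real obstacle here: the statement is a direct bookkeeping consequence of Lemma~\ref{lemma:8.1} once one recognizes how $\Sigma^{2s}$ acts on product vectors. The only point requiring a moment's care is the well-definedness of the right-hand side of Eq.~(\ref{eq:8.5}) as $\Lambda$ varies over the fiber above $p$, and this is already handled abstractly by Lemma~\ref{lemma:5.1} (the subspace $\Phi_s(\Lambda)W_s$ depends only on $p = \Lambda p_0$, not on $\Lambda$).
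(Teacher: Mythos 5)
Your proposal is correct and is exactly the argument the paper intends: the paper's proof of Proposition~\ref{proposition:8.2} simply states that the claim is immediate from Eq.~(\ref{eq:5.2}) and Lemma~\ref{lemma:8.1}, and you have filled in precisely those routine details (surjectivity of $\Phi_s(A)$ on $R_s = V_s$ for the fibers of $F_s$, and the multiplicative action of $\Sigma^{2s}(\Lambda)$ on the spanning symmetric products of $W_s$ for the fibers of $D_s$).
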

\begin{proof}
The proof is immediate from the definitions Eq.~(\ref{eq:5.2}) and Lemma~\ref{lemma:8.1}.
\end{proof}

Applying Theorem~\ref{theorem:5.2} with the choice $(\epsilon_s, \Phi_s)$, we obtain Table~\ref{tab:5}, the perception bundle description for massless particles with spin-$s$.

\begin{table}[h]

\caption{The perception bundle description for massless particles with spin-\texorpdfstring{$s$}{TEXT}}
\label{tab:5}

\centering
\begin{tabular}{|m{1.3cm}|m{5.5cm}|m{7cm}|}
\hline\noalign{\smallskip}
  & $F_s$  (The potential bundle) & $E_s$ (The perception bundle) \\
\noalign{\smallskip}\hline\noalign{\smallskip}
Bundle  &  $X_0 \times V_s $ & $F_s/D_s$ \\
\noalign{\smallskip}\hline\noalign{\smallskip}
Metric &  None & $(h_s)_{p} \big(z + (D_s)_p, w + (D_s)_p \big) = $ \newline $ \big\langle \Phi_s(\Lambda)^{-1} z + W_s , \Phi_s(\Lambda)^{-1} w + W_s \big\rangle_{R_s/W_s}$ \\
\noalign{\smallskip}\hline\noalign{\smallskip}
Action &   $\vartheta_s (a,\Lambda) (p, z) $ \newline $  = (\Lambda p, e^{-i \langle \Lambda p, a \rangle} \Phi_s(\Lambda) z )$ & $\lambda_s (a, \Lambda) (p , \xi ) = (\Lambda p,  e^{-i \langle \Lambda p , a \rangle} \overline{\Phi_s(\Lambda)} \xi)$ \\
\noalign{\smallskip}\hline\noalign{\smallskip}
Space & $\mathcal{B}_s = \mathcal{Q}_s ^{-1} (\mathcal{H}_s) $ & $\mathcal{H}_s = L^2 \left (X_0 , E_s ; \mu , h_s \right)$ \\
\noalign{\smallskip}\hline\noalign{\smallskip}
Repn & $ \mathcal{U}_s (a,\Lambda ) A = \vartheta_s (a,\Lambda) \circ A \circ \Lambda^{-1}$ & $U_s (a, \Lambda) \phi = \lambda_s (a, \Lambda) \circ \phi \circ \Lambda^{-1}$ \\
\noalign{\smallskip}\hline
\end{tabular}

\end{table}

\subsection{The vector bundle point of view for the perception bundle of massless particles with spin-\texorpdfstring{$s$}{TEXT}}\label{sec:8.2}

As in Sect.~\ref{sec:4.3}, if two inertial observers Alice and Bob, who are related by a Lorentz transformation $(a, \Lambda) \in G$ as in Eq.~(\ref{eq:4.10}), are using the perception bundle to describe a massless particle with spin-$s$, then the two observer's descriptions should be related by the action of Table~\ref{tab:5}, i.e.,
\begin{gather}
\lambda_s (a, \Lambda ) : E_s ^{ A} \rightarrow E_s ^{ B} \nonumber \\
(p,\xi)^A \mapsto \left(  \Lambda p, e^{- i (\Lambda p)_\mu a^\mu} \overline{\Phi_s (\Lambda )} \xi \right)^B \label{eq:8.6}
\end{gather}
so that the transformation law for wave functions $U_s (a,\Lambda) \phi = \lambda_s (a,\Lambda) \circ \phi \circ \Lambda^{-1}$ holds.

The convention suggested in Sect.~\ref{sec:5.2} encourages us to rewrite this vector bundle point of view as follows.

\begin{gather}
\vartheta_s (a, \Lambda ) : F_s ^{ A} \rightarrow F_s ^{ B} \nonumber \\
(p,z)^A \mapsto \left(  \Lambda p, e^{- i (\Lambda p)_\mu a^\mu} \Phi_s (\Lambda ) z \right)^B \label{eq:8.6'}\tag{8.6$'$}
\end{gather}
with the understanding that $z$ and $e^{- i (\Lambda p)_\mu a^\mu} \Phi_s (\Lambda ) z$ are only defined up to elements of $(D_s)_p$ and $(D_s)_{\Lambda p}$, respectively.

\subsection{Physical interpretations of the boosting and perception bundle descriptions for massless particles with spin-\texorpdfstring{$s$}{TEXT}}\label{sec:8.3}

Since by definition $\epsilon_s ( | \pm \rangle ) = | \pm \rangle^{2s}$, which are eigenvectors of the operator $\hat{J}^3 := i (\Phi_s )_* (J^3)$ with eigenvalues $\pm s$, respectively, we see that $\epsilon_s (| \pm \rangle)$ can be interpreted as the internal angular momentum (i.e., spin) realization of the helicity states $| \pm \rangle \in \mathbb{C}^2$ of a massless particle with spin-$s$. So, we may call each vector in $\epsilon ( \mathbb{C}^2) \subseteq V_s$ a \textit{spin state} of the particle.

In \cite{lee2022b}~Sect.~6.4.1, we saw that, as internal quantum states of a massive particle with spin-$s$, the vectors $| \pm \rangle^{2s}$ represent eigenstates of the spin along the $\hat{\mathbf{z}}$-direction with eigenvalues $\pm s$ in a particle rest frame. Also in the same section, we saw that the vectors $\Phi_s (\Lambda) (| \pm \rangle^{2s}) $ represent eigenstates of the spin along the $\kappa(\Lambda) \hat{\mathbf{z}}$-direction with eigenvalues $\pm s$ in an inertial frame where the particle is moving with momentum $p$.

Thus, in analogy with Sect.~\ref{sec:6.3}, we hold the interpretation that Eqs.~(\ref{eq:8.4})--(\ref{eq:8.5}) imply that while each fiber $F_p$ consists of all \textit{mathematically conceivable} spin states of a spin-$s$ massless particle with momentum $p$, each fiber $E_p$ consists of all \textit{physically realizable} spin states of a spin-$s$ massless particle with momentum $p$.\footnote{Of course, this interpretation should be confirmed by experiments. While we are convinced that massless particles can assume only $\pm s$-spin eigenstates along its direction of momentum (the fact that $\Pi_s$ is unitarily equivalent to the induced representation $U_s$ associated with the bundle $E_s$ might be considered as a mathematical proof for it), we could not have found a source that plainly states this fact.}

With this interpretation, we see that the argument of \cite{lee2022b}~Sect.~6.4.1 proves that Eqs.~(\ref{eq:8.6})--(\ref{eq:8.6'}) are precisely the transformation law for the spin states of spin-$s$ massless particles. Therefore, we conclude that each fiber $(E_s)_p$ of the perception bundle $E_s$ correctly reflects the relativistic perception of a fixed inertial observer who is using this bundle for the description of a massless particle with spin-$s$.

In contrast, Eq.~(\ref{eq:4.12}) again shows that each fiber $(E_{L,s})_p$ of the boosting bundle does not respect the relativistic perception of inertial observers.

\subsection{Theoretical implications on the representation; gauge freedom}\label{sec:8.4}

From Table~\ref{tab:5}, we see massless particles with spin-$s$ are described by the following representation: The space is
\begin{subequations}\label{eq:8.8}
\begin{equation}\label{eq:8.8a}
\mathcal{H}_s := L^2 (X_0 , E_s ; \mu , h_s)
\end{equation}
and the representation $U_s:G \rightarrow U(\mathcal{H}_s)$ is given by, for $(a, \Lambda) \in G$ and $\phi \in \mathcal{H}_s$,
\begin{equation}\label{eq:8.8b}
[U_s (a, \Lambda) \phi ] (p) = e^{-i \langle p , a \rangle} \overline{ \Phi_s (\Lambda)} \phi (\Lambda^{-1} p ).
\end{equation}
\end{subequations}

Sect.~\ref{sec:5.2} suggests that we consider the representation space
\begin{subequations}\label{eq:8.9}
\begin{equation}\label{eq:8.9a}
\mathcal{B}_s : = \mathcal{Q}_s ^{-1} (\mathcal{H}_s)
\end{equation}
and the representation $\mathcal{U}_s : G \rightarrow GL (\mathcal{B}_s)$ defined as, for $(a, \Lambda) \in G$ and $A \in \mathcal{B}_s$,
\begin{equation}\label{eq:8.9b}
[\mathcal{U}_s (a, \Lambda) A] (p) = e^{-i \langle p , a \rangle} \Phi_s (\Lambda) A (\Lambda^{-1} p)
\end{equation}
\end{subequations}
and express any element $\phi \in \mathcal{H}_s$ by its lift $\psi \in \mathcal{B}_s$ with the understanding that $\psi$ is defined only up to an addition of a Borel-measurable section $\varphi: X_0 \rightarrow D_s$. In this convention, Eq.~(\ref{eq:8.9b}) takes the role of the representation $U_s$.

\paragraph{Gauge freedom as a manifestation of relativistic perception}

\hfill

As in Sect.~\ref{sec:6.4}, we define the four-momentum operators $P^\mu$ and $\mathcal{P}^\mu$ on $\mathcal{H}_s$ and $\mathcal{B}_s$ as the infinitesimal generators of the translation operators $U_s (b,I)$ and $\mathcal{U}_s (b,I)$, respectively. Then, Eqs.~(\ref{eq:8.8b}) and (\ref{eq:8.9b}) yield
\begin{subequations}\label{eq:8.10}
\begin{equation}\label{eq:8.10a}
[P^\mu \phi] (p) = p^\mu \phi (p)
\end{equation}
\begin{equation}\label{eq:8.10b}
[\mathcal{P}^\mu \psi ] (p) = p^\mu \psi(p)
\end{equation}
\end{subequations}
for ${ }^\forall \phi \in \mathcal{H}_s$ and ${ }^\forall \psi \in \mathcal{B}_s$ and hence
\begin{equation}\label{eq:8.11}
\left[ \mathcal{Q}_s \circ (\mathcal{P}^\mu \psi ) \right] (p) = [P^\mu \phi ] (p)
\end{equation}
whenever $\mathcal{Q}_s \circ \psi = \phi$ and $P^\mu \phi$ is defined.

Therefore, in the convention of Sect.~\ref{sec:5.2}, the operators $\mathcal{P}^\mu$ are just the momentum operators $P^\mu$ on the quantum Hilbert space $\mathcal{H}_s$ as long as each field $\psi \in \mathcal{B}_s$ is understood as defined only up to an addition of a Borel-measurable section of $D_s$.

Now observe that by Eq.~(\ref{eq:7.6}) and the relation $p_\mu p^\mu =0$ which holds for all $p \in X_0$, we have
\begin{equation}\label{eq:8.12}
\mathcal{P}_\mu \mathcal{P}^\mu \psi = 0
\end{equation}
and also two fields $\psi, \psi' \in \mathcal{B}_s$ represent the same wave function in $\mathcal{H}_s$ if and only if there exists a (Borel-measurable) section $\varphi : X_0 \rightarrow D$ such that
\begin{equation}\label{eq:8.13}
\psi - \psi' = \varphi.
\end{equation}

Unlike the case of photon and graviton, however, we don't have a classical analogue with which we can compare Eq.~(\ref{eq:8.12})\footnote{One might argue that it is the Klein-Gordon equation. However, that equation is also satisfied by the sections of the boosting bundle by Eq.~(\ref{eq:4.6b}) and hence cannot be classified as a manifestation of relativistic perception. In the cases of photon and graviton, it was the overall properties of the fields $A$ and $h$ expressed by Eqs.~(\ref{eq:6.13})--(\ref{eq:6.15}) and Eqs.~(\ref{eq:7.15})--(\ref{eq:7.17}), respectively, that enabled us to conclude that Eqs.~(\ref{eq:6.13}) and (\ref{eq:7.15}) are Maxwell's equations and Einstein's field equations in vacuum, respectively.} and an explicit formula that expresses the gauge freedom Eq.~(\ref{eq:8.13}) since each fiber $D_p$ does not have an explicit defining formula depending only on $p \in X_0$.

In spite of these limitations, the analysis of this section shows one important aspect of massless particles. Namely, \textit{massless particles with spin-$s$ exhibit $(2s -1)$-dimensional gauge freedom\footnote{This agrees with the results of Sects.~\ref{sec:6}--\ref{sec:7}.} when one takes into account the perception of the internal quantum states of the particles with respect to a fixed inertial observer}. This gauge freedom manifests itself even in the level of elements in the fibers of the perception bundle $E_s$. Note that this aspect was not visible in the boosting bundle description given in Sect.~\ref{sec:4}.

Also, we remark that massless particles with spin-1/2 don't have any gauge freedom because $2 \cdot \frac{1}{2} - 1 =0$.

\appendix

\def\thesection{\Alph{section}}

\section{Proof of Theorem~\ref{theorem:3.3}}\label{sec:A}

\subsection{Preliminaries}\label{sec:A.1}

First, we need some facts regarding connections and curvatures on principal fiber bundles. The definitions and elementary properties of them (as well as the notations used in the following) can be found in Ch.~6 of \cite{tu}. Throughout this subsection, $G$ is a Lie group, $\mathfrak{g}$ is its Lie algebra, $P \xrightarrow{\pi} M$ is a principal $G$-bundle, $\eta : G \rightarrow GL(V)$ is a Lie group representation, and $E:= P \times_\eta V$ is the vector bundle associated with $\eta$.

We denote by $\Omega^k (M,E)$ the space of $E$-valued smooth $k$-forms on $M$ and
\begin{equation}\label{eq:A.1}
\Omega_\eta ^k (P,V) := \left\{ \phi \in \Omega^k (P,V) : \text{$\phi$ is horizontal and } (r_g)^* \phi = \eta(g^{-1} ) \cdot \phi \right\},
\end{equation}
the set of all $V$-valued smooth tensorial $k$-forms of type $\eta$. (Here, being horizontal means that $\phi (X_1 , \cdots , X_k)$ vanishes if one of its argument $X_i$ is vertical, i.e., $\pi_* (X_i) = 0$.)

There is a canonical isomorphism between the two vector spaces. Given $\phi \in \Omega_\eta ^k (P,V)$, define, for $x \in M$ and $p \in \pi^{-1} (x)$,
\begin{equation}\label{eq:A.2}
\phi_x ^\flat  (v_1 , \cdots, v_k) = \big[p , \phi_p (\tilde{v}_1 , \cdots , \tilde{v}_k ) \big]
\end{equation}
where each $\tilde{v}_i \in T_p P$ is a lift of $v_i \in T_x M$, i.e., $\pi_{*, p} (\tilde{v}_i) = v_i$ for $1 \leq i \leq k$. Hence, in particular, when we are given a local section $e : U \rightarrow P$, we have
\begin{equation}\label{eq:A.3}
\phi_x ^\flat (v_1 , \cdots , v_k) = \big[e(x) , (e^{*} \phi)_x ( v_1 , \cdots , v_k) \big]
\end{equation}
for $ x \in U$ and $ v_i \in T_x U$.

Also, given $\psi \in \Omega^k (M,E)$, define, for $ p \in P$,
\begin{equation}\label{eq:A.4}
\psi_p ^\sharp (u_1 , \cdots, u_k) = [ p, \cdot \hspace{0.1cm}]^{-1} \Big( ( \pi^* \psi)_p (u_1 , \cdots , u_k) \Big).
\end{equation}
(See \cite{lee2022b}~Eq.~(A.1) for a discussion of the isomorphism $[p, \cdot \hspace{0.1cm}]:V \rightarrow E_{\pi(p)}$.)

\begin{theorem}[\cite{tu}, Theorem~31.9]\label{theorem:A.1}
The map $\flat : \Omega_\eta ^k (P,V) \rightarrow \Omega^k (M,E)$ is a well-defined linear isomorphism with inverse $\sharp$. 
\end{theorem}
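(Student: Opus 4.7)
My plan is to verify the well-definedness of $\flat$ and $\sharp$ separately, check that each produces a form of the required type on the target space, and then show that they are mutual inverses, with linearity being immediate from the pointwise formulae \eqref{eq:A.2} and \eqref{eq:A.4}.

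For $\flat$, the two things to check are that the formula \eqref{eq:A.2} does not depend on the chosen lifts $\tilde{v}_i \in T_p P$ nor on the chosen base point $p \in \pi^{-1}(x)$. Independence of the lifts follows from the horizontality hypothesis on $\phi$: if $\tilde{v}_i$ and $\tilde{v}_i'$ both project to $v_i$, then their difference is vertical, so replacing the lifts one slot at a time and applying multilinearity together with horizontality shows that $\phi_p(\tilde{v}_1,\ldots,\tilde{v}_k) = \phi_p(\tilde{v}_1',\ldots,\tilde{v}_k')$. For independence of $p$, write $p' = p \cdot g$ and take $\tilde{v}_i' := (r_g)_* \tilde{v}_i$ as lifts at $p'$. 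The equivariance $(r_g)^* \phi = \eta(g^{-1}) \cdot \phi$ gives
\begin{equation*}
\phi_{p'}(\tilde{v}_1',\ldots,\tilde{v}_k') = \eta(g^{-1}) \cdot \phi_p(\tilde{v}_1,\ldots,\tilde{v}_k),
\end{equation*}
and the associated-bundle relation $[p \cdot g, \eta(g^{-1}) w] = [p, w]$ then yields $\big[p', \phi_{p'}(\tilde{v}_1',\ldots,\tilde{v}_k')\big] = \big[p, \phi_p(\tilde{v}_1,\ldots,\tilde{v}_k)\big]$, establishing well-definedness. Smoothness of $\phi^\flat$ follows by passing to a local section $e : U \to P$ and using \eqref{eq:A.3}, which exhibits $\phi^\flat|_U$ as the image under the local trivialization of the smooth form $e^* \phi$.

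For $\sharp$, horizontality of $\psi_p^\sharp$ is immediate: if some $u_i$ is vertical then $\pi_{*,p} u_i = 0$, so $(\pi^* \psi)_p(u_1,\ldots,u_k) = 0$. The equivariance $(r_g)^* \psi^\sharp = \eta(g^{-1}) \cdot \psi^\sharp$ is a direct computation using $\pi \circ r_g = \pi$ together with $[p \cdot g, w] = [p, \eta(g) w]$, so that applying $[p \cdot g, \cdot]^{-1}$ to $(\pi^* \psi)_{p \cdot g}$ produces exactly $\eta(g^{-1}) \cdot \psi_p^\sharp$. Smoothness follows again from local triviality: pulling back via a local section, one recovers the smooth form $\psi$ (up to the trivialization isomorphism).

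Finally, for the inverse relations: given $\phi \in \Omega_\eta^k(P,V)$, unwinding the definitions gives
\begin{equation*}
(\phi^\flat)^\sharp_p(u_1,\ldots,u_k) = [p,\cdot]^{-1} \bigl( \phi^\flat_{\pi(p)}(\pi_* u_1,\ldots,\pi_* u_k) \bigr) = \phi_p(u_1,\ldots,u_k),
\end{equation*}
where in the last step I use that the $u_i$ themselves serve as lifts of $\pi_* u_i$, legitimated by the already-proven independence of lifts. Conversely, for $\psi \in \Omega^k(M,E)$ and any lifts $\tilde{v}_i$ of $v_i \in T_x M$ to some $p \in \pi^{-1}(x)$,
\begin{equation*}
(\psi^\sharp)^\flat_x(v_1,\ldots,v_k) = \bigl[p, \psi^\sharp_p(\tilde{v}_1,\ldots,\tilde{v}_k) \bigr] = (\pi^* \psi)_p(\tilde{v}_1,\ldots,\tilde{v}_k) = \psi_x(v_1,\ldots,v_k),
\end{equation*}
completing the proof. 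The main point requiring care is the independence of \eqref{eq:A.2} from the base point $p$, which is the place where the equivariance hypothesis in the definition \eqref{eq:A.1} of $\Omega_\eta^k(P,V)$ is essential; everything else is a routine unwinding of the associated bundle construction.
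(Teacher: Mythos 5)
Your proof is correct: the independence-of-lift argument via horizontality, the independence-of-base-point argument via the equivariance condition together with the associated-bundle relation $[p\cdot g, \eta(g^{-1})w]=[p,w]$, and the unwinding that shows $\flat$ and $\sharp$ are mutually inverse are all sound, and this is the standard textbook argument. The paper itself gives no proof of this statement --- it is quoted directly as Theorem~31.9 of \cite{tu} --- so there is nothing to compare against beyond noting that your argument is exactly the one that reference supplies.
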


Note that \cite{lee2022b}~Proposition~A.4 can be viewed as a special case of this theorem for 0-forms.

\begin{proposition}\label{proposition:A.2}
Suppose we are given a connection $\omega$ on $P$. We define an affine connection $\nabla$ on $E$ by
\begin{equation}\label{eq:A.5}
\begin{tikzcd}[baseline=(current  bounding  box.center), column sep=1.5em]
    \Omega_\eta ^0 (P, V ) \arrow[r, "D"] \arrow{d}{\cong}[swap]{\flat}
    &\Omega_\eta ^1 (P,V) \arrow{d}{\flat}[swap]{\cong}
\\
    \Omega^0 (M, E) \arrow[r, dashed, "\nabla"]  & \Omega^1 (M,E)
    \end{tikzcd}
\end{equation}
where $D \phi = d \phi + (\eta_* \omega) \cdot \phi$ is the covariant derivative (here, $\cdot$ denotes the canonical product between the $\mathfrak{gl}(V)$-valued form $\eta_* \omega$ and the $V$-valued form $\phi$. Cf.\cite{tu}).

Then, each local section $e : U \rightarrow P$ induces a local frame $x \mapsto [e(x), \cdot \hspace{0.1cm}]$ for $E$ (cf. \cite{lee2022b}~Lemma~A.7) and the $\mathfrak{gl}(V)$-valued $1$-form $\theta_e : = \eta_* e^*  \omega : TU \rightarrow \mathfrak{gl}(V)$ is the connection matrix of $\nabla$ relative to this frame.
\end{proposition}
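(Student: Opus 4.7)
The plan is to verify the statement in three steps: (1) that $D = d + (\eta_*\omega)\cdot$ actually maps $\Omega_\eta^0(P,V)$ into $\Omega_\eta^1(P,V)$, so that $\nabla := \flat \circ D \circ \flat^{-1}$ is well-defined; (2) that $\nabla$ obeys the Leibniz rule, making it an affine connection; and (3) that, relative to the induced frame, its connection matrix equals $\theta_e$.

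For step (1), I would unwind the two defining properties of $\Omega_\eta^1(P,V)$. To check horizontality of $D\phi$ for $\phi \in \Omega_\eta^0(P,V)$, evaluate on a vertical vector $X^*_p = \frac{d}{dt}\big|_{0} r_{\exp(tX)}(p)$ with $X \in \mathfrak{g}$. By the type-$\eta$ property, $\phi(r_{\exp(tX)}p) = \eta(\exp(-tX))\phi(p)$, so $(d\phi)_p(X^*_p) = -\eta_*(X)\phi(p)$; on the other hand the defining property $\omega(X^*) = X$ of a connection gives $(\eta_*\omega \cdot \phi)_p(X^*_p) = \eta_*(X)\phi(p)$, and the two terms cancel. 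Equivariance $(r_g)^*(D\phi) = \eta(g^{-1})\cdot D\phi$ then follows from the corresponding equivariance $(r_g)^*\omega = \mathrm{Ad}(g^{-1})\omega$ of the connection form, together with the naturality of $d$ and the identity $\eta_*\circ \mathrm{Ad}(g^{-1}) = \mathrm{Ad}(\eta(g^{-1}))\circ \eta_*$.

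For step (2), using the isomorphism $\flat$ and its inverse $\sharp$ from Theorem~\ref{theorem:A.1}, it suffices to note that $(f\psi)^\sharp = (\pi^*f)\,\psi^\sharp$ for $f \in C^\infty(M)$ and $\psi \in \Omega^0(M,E)$, which is immediate from the $\mathbb{C}$-linearity of $[p,\,\cdot\,]:V \to E_{\pi(p)}$, followed by the ordinary Leibniz rule for $d$:
\begin{equation*}
D\bigl((\pi^*f)\,\psi^\sharp\bigr) = d(\pi^*f)\otimes \psi^\sharp + (\pi^*f)\,D\psi^\sharp.
\end{equation*}
Applying $\flat$ and using $d(\pi^*f) = \pi^*(df)$ converts this into $\nabla(f\psi) = df\otimes \psi + f\,\nabla\psi$.

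For step (3), fix a basis $\{v_i\}$ of $V$ and consider the induced local frame $e_i(x) := [e(x),v_i]$ for $E$ over $U$. The corresponding tensorial $0$-forms $e_i^\sharp$ on $\pi^{-1}(U)$ satisfy $e^*(e_i^\sharp) = v_i$ as constant $V$-valued functions on $U$, by Eq.~(\ref{eq:A.4}) and the definition of $e_i$. Pulling the identity $De_i^\sharp = (\eta_*\omega)\cdot e_i^\sharp$ back along $e$ yields
\begin{equation*}
e^*(De_i^\sharp) = \bigl(\eta_* e^*\omega\bigr)\cdot v_i = \theta_e\cdot v_i,
\end{equation*}
so that by Eq.~(\ref{eq:A.3}) we obtain $(\nabla e_i)_x(v) = [e(x),\,\theta_e(v)\,v_i]$. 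Expanding $\theta_e(v)\,v_i = \sum_j (\theta_e)^j{}_i(v)\,v_j$ in the basis gives $\nabla e_i = \sum_j (\theta_e)^j{}_i\otimes e_j$, which is exactly the statement that $\theta_e$ is the connection matrix of $\nabla$ in the frame $\{e_i\}$.

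The only step that requires any real thought is step (1); the rest is bookkeeping with the $\sharp$–$\flat$ correspondence. The main obstacle is ensuring that the interplay between the type-$\eta$ equivariance of $\phi$ and the reproducing/equivariance properties of the connection form $\omega$ is used correctly to yield both horizontality and the correct transformation law of $D\phi$.
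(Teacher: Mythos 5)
Your proof is correct and its decisive step (3) --- pulling the covariant derivative back along the local section $e$ to read off $\theta_e = \eta_* e^* \omega$ --- is exactly the computation the paper performs, the only cosmetic difference being that the paper applies it to a general section $s$ with component function $e^* s^\sharp$ while you apply it to the frame elements $[e(x), v_i]$. Your steps (1) and (2), verifying that $D$ lands in $\Omega_\eta^1(P,V)$ and that $\nabla$ satisfies the Leibniz rule, are correct but cover ground the paper simply defers to the cited reference \cite{tu}, so they do not constitute a different route.
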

\begin{proof}
Fix a local section $e:U \rightarrow P$. By \cite{lee2022b}~Lemma~A.7, $x \mapsto [e(x), \cdot \hspace{0.1cm}]$ is a local frame for $E$. Let $s \in \Omega^0 (M,E)$ be a smooth section. Then, we have, for $x \in U$,
\begin{equation*}
s(x) = (s^\sharp)^\flat (x) = \big[ e(x), (e^* s^\sharp)_x \big]
\end{equation*}
by Theorem~\ref{theorem:A.1} and Eq.~(\ref{eq:A.3}). Hence, $(e^* s^\sharp) : U \rightarrow V$ is the component function of $s$ with respect to the local frame $x \mapsto [e(x), \cdot \hspace{0.1cm}]$ (readers are invited to expand it in components by choosing a basis for $V$).

Let $X \in \frak{X}(M)$ be a smooth vector field. Then, for $x \in U$,
\begin{align*}
( \nabla_X s )_x = \big( d s^\sharp + (\eta_* \omega ) \cdot s^\sharp \big)_x ^\flat (X) = \Big[ e(x) , d (e^* s^\sharp ) (X)_x + (\eta_* e^* \omega) (X)_x \cdot (e^* s^\sharp)_x \Big] \\
=\Big[e(x), X_x (e^* s^\sharp) + (\eta_* e^* \omega)_x (X_x) \cdot (e^* s^\sharp)_x\Big],
\end{align*}
which shows that $\theta_e = \eta_* e^* \omega$ is indeed the connection matrix of $\nabla$ relative to the local frame $x \mapsto [e(x), \cdot \hspace{0.1cm}]$ by the preceding paragraph. 
\end{proof}

If we endow the frame bundle $Fr(E)$ with a right $G$-action via the representation $\eta:G \rightarrow GL(V)$, the following becomes a $G$-equivariant bundle homomorphism
\begin{equation}\label{eq:A.6}
\begin{tikzcd}[baseline=(current  bounding  box.center), column sep=1.5em]
    \overline{\eta}: P \arrow ["{p \mapsto [p, \cdot \hspace{0.1cm}] }"]{rr} \arrow{dr}[swap]{\pi} 
    & & Fr(E) \arrow[dl, "{\pi_{Fr}}"]
\\
    &M &
    \end{tikzcd}
\end{equation}
which is given by $(x, g) \mapsto (x, \eta (g))$ in local coordinate representations when the local trivializations of $P$ and $E$ are given by $(x,g) \mapsto e(x) g$ and $(x,v) \mapsto [e(x), v ]$, respectively, where $e : U \rightarrow P$ is a local section (cf. \cite{lee2022b}~Eqs.~(A.13)--(A.14)). So, in particular, we have $[p,v] = \overline{\eta} (p) v $ for ${}^\forall [p,v] \in E$. 

Observe that, given a local section $e : U \rightarrow P$, the map $\overline{\eta} \circ e :U \rightarrow Fr(E)$ is a local section of $Fr(E)$ such that
\begin{equation}\label{eq:A.7}
[e(x), v] = (\overline{\eta} e (x)) v,
\end{equation}
i.e., the frame $x \mapsto [e(x), \cdot \hspace{0.1cm}]$ that appeared in Proposition~\ref{proposition:A.2} is equal to $\overline{\eta} \circ e$.

\begin{lemma}\label{lemma:A.3}
The affine connection $\nabla$ on $E$ of Propostion \ref{proposition:A.2} induces a connection $\theta$ on the frame bundle $Fr(E)$ by the procedure of Ch.~29 of \cite{tu}. Then, for each local section $e: U \rightarrow  P$, we have
\begin{equation}\label{eq:A.8}
( \overline{\eta} e )^* \theta = \eta_* e^* \omega.
\end{equation}
\end{lemma}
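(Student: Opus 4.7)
The plan is to reduce the identity to the defining characterization of the connection $\theta$ on $Fr(E)$ induced by an affine connection, and then match the two sides via the frame–section correspondence of Eq.~(\ref{eq:A.7}).

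First I would recall the relevant construction from Ch.~29 of \cite{tu}: given the affine connection $\nabla$ on $E$, the induced connection $\theta$ on $Fr(E)$ is the unique $\mathfrak{gl}(V)$-valued connection $1$-form on $Fr(E)$ with the property that, for every local frame $f : U \to Fr(E)$ of $E$, the pullback $f^*\theta$ coincides with the connection matrix of $\nabla$ relative to the frame $f$. (This is the content of the equivalence between affine connections on $E$ and principal connections on $Fr(E)$.) This characterization is exactly what I need: it turns the computation of $(\overline{\eta}e)^*\theta$ into a computation of a connection matrix.

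Next, I would observe that by Eq.~(\ref{eq:A.7}), the smooth map $\overline{\eta} \circ e : U \to Fr(E)$ is precisely the local frame $x \mapsto [e(x),\,\cdot\,]$ of $E$ used in Proposition~\ref{proposition:A.2}. Applying the characterization from the previous paragraph to the frame $f := \overline{\eta}\circ e$ gives
\begin{equation*}
(\overline{\eta}\, e)^* \theta \;=\; \text{connection matrix of }\nabla\text{ relative to the frame }x \mapsto [e(x),\,\cdot\,].
\end{equation*}
But Proposition~\ref{proposition:A.2} already computed exactly this connection matrix and identified it as $\theta_e = \eta_* e^* \omega$. Comparing the two expressions yields $(\overline{\eta}\, e)^* \theta = \eta_* e^* \omega$, which is the claim.

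There is essentially no hard analytical step: the entire argument is a matter of chasing definitions and matching the two characterizations of the connection matrix. The only point requiring care is the identification of $\overline{\eta}\circ e$ with the local frame used in Proposition~\ref{proposition:A.2}, which is automatic from Eq.~(\ref{eq:A.7}) once one notes that the right $G$-action on $Fr(E)$ used here is the one obtained by transporting the right $G$-action on $P$ through $\overline{\eta}$, making $\overline{\eta}$ a $G$-equivariant bundle homomorphism as in Eq.~(\ref{eq:A.6}). Consequently the main obstacle, if any, is merely bookkeeping: ensuring that the $G$-action on $Fr(E)$ used to define $\theta$ is consistent with the one induced via $\eta$, so that the correspondence between local sections of $P$ and local frames of $E$ really is given by $e \mapsto \overline{\eta}\circ e$.
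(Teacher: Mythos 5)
Your proof is correct and follows exactly the route the paper intends: the paper's own proof simply cites "the definition of $\theta$, Eq.~(\ref{eq:A.7}), and Proposition~\ref{proposition:A.2}," and your argument is precisely the expansion of those three ingredients — the characterization of the induced principal connection via connection matrices of local frames, the identification $\overline{\eta}\circ e \leftrightarrow x \mapsto [e(x),\,\cdot\,]$, and the computation $\theta_e = \eta_* e^*\omega$ from Proposition~\ref{proposition:A.2}. No gaps; your version is just more explicit than the paper's one-line proof.
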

\begin{proof}
This holds essentially by the definition of $\theta$, Eq.~(\ref{eq:A.7}), and Proposition~\ref{proposition:A.2}. 
\end{proof}

\begin{proposition}\label{proposition:A.4}
The following diagram commutes
\begin{equation}\label{eq:A.9}
\begin{tikzcd}[baseline=(current  bounding  box.center), column sep=1.5em]
    &T Fr (E) \arrow["\theta"]{dr} & \\ 
    TP \arrow["\overline{\eta}_* "]{ur} \arrow["\omega"]{dr} & &\mathfrak{gl}(V) \\
    &\mathfrak{g} \arrow["\eta_*"]{ur}  &
    \end{tikzcd}
\end{equation}
by virtue of which we have the following equality between the curvatures
\begin{equation}\label{eq:A.10}
\overline{\eta}^* \Omega_{\theta} = \eta_* \Omega_{\omega}.
\end{equation}
\end{proposition}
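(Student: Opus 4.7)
The plan is to prove the diagram Eq.~(A.9) commutes pointwise on $TP$ and then deduce the curvature identity Eq.~(A.10) from Cartan's structure equation. For the diagram, I would split an arbitrary $X \in T_p P$ into a piece coming from a local section and a vertical piece, handle each separately, and patch them using right equivariance.

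The section piece is essentially a reformulation of Lemma~\ref{lemma:A.3}. Fix a local section $e : U \rightarrow P$ with $e(x) = p$. For $Y \in T_x U$ the pullback identity $(\overline{\eta}e)^* \theta = \eta_* e^* \omega$, evaluated on $Y$, reads $\theta_{\overline{\eta}(p)}(\overline{\eta}_* e_* Y) = \eta_* (\omega_p (e_* Y))$, which is exactly the triangle evaluated at such a vector. For a general $p = e(x) \cdot g_0$, I would write $X = (r_{g_0})_* e_*(\pi_* X) + V$ with $V \in \ker \pi_{*, p}$ vertical, and use right equivariance of both connections: $(r_{g_0})^* \omega = \textup{Ad}(g_0^{-1}) \omega$ on $P$, and, because $\overline{\eta}$ intertwines the right $G$-action on $P$ with the right $G$-action on $Fr(E)$ through $\eta$ (cf. Eq.~(A.6)), $(r_{\eta(g_0)})^* \theta = \textup{Ad}(\eta(g_0)^{-1}) \theta$ on $Fr(E)$. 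The two outputs then match via the standard identity $\eta_* \circ \textup{Ad}(g) = \textup{Ad}(\eta(g)) \circ \eta_*$, obtained by differentiating $\eta(g h g^{-1}) = \eta(g) \eta(h) \eta(g)^{-1}$.

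For the vertical part, write $V = \xi^*_p$ for the fundamental vector field associated to some $\xi \in \mathfrak{g}$. By definition of a principal connection, $\omega(V) = \xi$. The $G$-equivariance of $\overline{\eta}$ forces $\overline{\eta}_* \xi^* = (\eta_* \xi)^*$ on $Fr(E)$ (i.e., fundamental vector fields are carried to fundamental vector fields with generator transported by $\eta_*$), so $\theta(\overline{\eta}_* V) = \theta((\eta_* \xi)^*) = \eta_* \xi = \eta_*(\omega(V))$, closing the vertical case. Together with the section-piece computation and right-equivariance, this proves Eq.~(A.9).

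For Eq.~(A.10), I would pull back the structure equation $\Omega_\theta = d\theta + \tfrac{1}{2}[\theta, \theta]$ along $\overline{\eta}$, substitute $\overline{\eta}^* \theta = \eta_* \omega$ just established, and use the facts that $\eta_*$ commutes with exterior derivative (being a linear map between the target vector spaces) and is a Lie algebra homomorphism (so $[\eta_* \omega, \eta_* \omega] = \eta_*[\omega, \omega]$). Comparing with $\Omega_\omega = d\omega + \tfrac{1}{2}[\omega, \omega]$ then yields $\overline{\eta}^* \Omega_\theta = \eta_* \Omega_\omega$. The main obstacle, if any, is just keeping track of the two distinct right $G$-actions (the native one on $P$ and the $\eta$-twisted one on $Fr(E)$) and ensuring the intertwining properties are correctly invoked; no deeper difficulty arises because Lemma~\ref{lemma:A.3} already carries the essential local content and the rest is $G$-equivariance bookkeeping.
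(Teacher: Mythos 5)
Your proposal is correct and follows essentially the same route as the paper: the paper likewise verifies the triangle separately on vertical vectors (using that the $G$-equivariance of $\overline{\eta}$ sends fundamental vector fields to fundamental vector fields with generators transported by $\eta_*$) and on the complementary directions via Lemma~\ref{lemma:A.3}, then derives Eq.~(\ref{eq:A.10}) by pulling back the structure equation and using that $\eta_*$ commutes with $d$ and with the bracket. The only difference is that you make explicit the right-equivariance bookkeeping needed to pass from points on the image of a local section to arbitrary points of $P$, a step the paper leaves implicit.
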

\begin{proof}
 Let $A \in \mathfrak{g}$ and denote the fundamental vector field on $P$ as $\underline{A}$ (i.e., $\underline{A}_p = \left. \frac{d}{dt} \right|_{t=0} (p \cdot \exp (tA))$ for $p \in P$). Then, by the $G$-equivariance of $\overline{\eta}$, we have 
\begin{equation*}
\overline{\eta}_{*, p} (\underline{A}_p) = \underline{ \eta_* (A) }_{\overline{\eta} (p)}.
\end{equation*}

So, we see
\begin{equation*}
\theta \overline{\eta}_* (\underline{A}) = \theta ( \underline{\eta_* A} ) = \eta_* (A) = \eta_* \omega (\underline{A})
\end{equation*}
since $\theta$ and $\omega$ are connection $1$-forms on $Fr(E)$ and $P$, respectively. So, $\theta \overline{\eta}_* = \eta_* \omega $ on the vertical bundle $\mathcal{V}P \leq TP$.

Now, choose a local section $e : U \rightarrow P$ which induces a local section $\overline{\eta} e : U \rightarrow Fr(E)$. By Eq.~(\ref{eq:A.8}),
\begin{equation*}
 e^*  (\theta \overline{\eta}_* ) =  (\overline{\eta} \circ e )^* \theta = \eta_* e^* \omega = e^* \eta_* \omega
\end{equation*} 
which shows that $\theta \overline{\eta_*} = \eta_* \omega$ on the horizontal bundle over $P$ induced by $\omega$ as well.

For Eq.~(\ref{eq:A.10}), observe that
\begin{align*}
\overline{\eta}^* \Omega_\theta = \overline{\eta}^* \big( d \theta + \frac{1}{2} [ \theta , \theta] \big) =  d (\theta  \overline{\eta}_*) + \frac{1}{2} [ (\theta \overline{\eta}_*) , (\theta \overline{\eta}_*)]  \\
=d (\eta_* \omega) + \frac{1}{2} [ (\eta_* \omega) , (\eta_* \omega)]  = \eta_* \Omega_\omega.
\end{align*}  
\end{proof}

\begin{proposition}\label{proposition:A.5}
Let $f$ be an invariant polynomial on $\mathfrak{gl}(V)$. Then, $f (\Omega_{\theta})$ and $f ( \eta_* \Omega_{\omega} ) $, which are forms on $Fr(E)$ and $P$ respectively, induce the same cohomology class on $M$.
\end{proposition}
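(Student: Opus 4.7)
The plan is to exploit the relation $\overline{\eta}^* \Omega_\theta = \eta_* \Omega_\omega$ supplied by Proposition~\ref{proposition:A.4}, together with the fact that $\pi_{Fr} \circ \overline{\eta} = \pi$ (see Eq.~(\ref{eq:A.6})), to show that the two forms in question actually descend to the same form on $M$, from which the cohomological statement follows a fortiori.

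First, I would verify that both forms are basic, i.e.\ horizontal and invariant under the respective principal actions, so that each descends to a form on $M$. For $f(\Omega_\theta)$ on $Fr(E)$, this is the standard Chern--Weil observation: $\Omega_\theta$ is horizontal and satisfies $(r_A)^* \Omega_\theta = \mathrm{Ad}(A^{-1}) \Omega_\theta$ for $A \in GL(V)$, so by $GL(V)$-invariance of $f$, the form $f(\Omega_\theta)$ descends to a closed form $\alpha \in \Omega^*(M)$. For $f(\eta_* \Omega_\omega)$ on $P$, horizontality is clear since $\Omega_\omega$ is horizontal, and for equivariance I would use
\begin{equation*}
(r_g)^*(\eta_* \Omega_\omega) = \eta_* \mathrm{Ad}(g^{-1}) \Omega_\omega = \mathrm{Ad}(\eta(g^{-1}))\, \eta_* \Omega_\omega,
\end{equation*}
which follows from $\eta$ being a group homomorphism. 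Since $f$ is $GL(V)$-invariant and $\eta(g) \in GL(V)$, the form $f(\eta_* \Omega_\omega)$ is $G$-invariant, hence basic, and descends to a closed form $\beta \in \Omega^*(M)$.

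Next, I would identify $\alpha$ and $\beta$ pointwise. Applying $\overline{\eta}^*$ to the equation $\pi_{Fr}^* \alpha = f(\Omega_\theta)$ and using $\pi_{Fr} \circ \overline{\eta} = \pi$ together with Eq.~(\ref{eq:A.10}) and naturality of $f$ under pullback gives
\begin{equation*}
\pi^* \alpha = \overline{\eta}^* \pi_{Fr}^* \alpha = \overline{\eta}^* f(\Omega_\theta) = f(\overline{\eta}^* \Omega_\theta) = f(\eta_* \Omega_\omega) = \pi^* \beta.
\end{equation*}
Since $\pi : P \to M$ is a surjective submersion, $\pi^*$ is injective on differential forms, so $\alpha = \beta$ in $\Omega^*(M)$. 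In particular they represent the same de Rham cohomology class.

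The main obstacle is the second part of the basicness check, namely verifying that $f(\eta_* \Omega_\omega)$ is genuinely basic on $P$ even though $\eta_* \omega$ is not itself a connection on $P$ (it is $\mathfrak{gl}(V)$-valued rather than $\mathfrak{g}$-valued); the key point is that the $G$-equivariance of $\Omega_\omega$ gets transported through $\eta_*$ to $GL(V)$-equivariance via the relation above, at which level the invariance of $f$ applies. Once this is in place, everything else is a direct consequence of Proposition~\ref{proposition:A.4} and functoriality.
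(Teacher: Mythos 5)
Your proposal is correct and follows essentially the same route as the paper: both rest on the identity $\overline{\eta}^*\Omega_\theta = \eta_*\Omega_\omega$ from Proposition~\ref{proposition:A.4} and the chain $f(\eta_*\Omega_\omega) = \overline{\eta}^* f(\Omega_\theta) = \overline{\eta}^*\pi_{Fr}^*\Lambda = \pi^*\Lambda$. Your separate verification that $f(\eta_*\Omega_\omega)$ is basic is sound but not needed in the paper's version, since that form is exhibited directly as $\pi^*\Lambda$, which already makes it basic and identifies its class.
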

\begin{proof}
Since $f(\Omega_{\theta})$ is a basic form, there exists a closed form $[\Lambda] \in H^* (M)$ such that $f(\Omega_{\theta} ) = \pi_{Fr} ^* \Lambda$ (cf. \cite{tu}). By Eqs.~(\ref{eq:A.6}) and (\ref{eq:A.10}),
\begin{equation*}
f( \eta_* \Omega_{\omega} ) = f( \overline{\eta}^* \Omega_\theta)= \overline{\eta}^* f( \Omega_{\theta})= \overline{\eta}^* \pi_{Fr} ^* \Lambda = \pi^* \Lambda
\end{equation*}
which proves the claim. 
\end{proof}

This has the following implication.

\begin{theorem}\label{theorem:A.6}
The characteristic class of $E = P\times_\eta V$ associated with an invariant polynomial $f$ on $\mathfrak{gl}(V)$ is equal to the characteristic class of $P$ associated with the $\textup{Ad}(G)$-invariant polynomial $f \circ \eta_* $ on $\mathfrak{g}$.
\end{theorem}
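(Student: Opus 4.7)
The plan is to observe that Theorem~\ref{theorem:A.6} is essentially a repackaging of Proposition~\ref{proposition:A.5}, once one checks that the objects on both sides are the standard Chern--Weil representatives of the two characteristic classes involved. The overall strategy is: (i) verify that $f \circ \eta_*$ is indeed an $\textup{Ad}(G)$-invariant polynomial on $\mathfrak{g}$ so that the right-hand side of the claimed equality is well defined, (ii) identify $(f \circ \eta_*)(\Omega_\omega)$ with $f(\eta_* \Omega_\omega)$ pointwise, and (iii) quote Proposition~\ref{proposition:A.5} to equate the induced cohomology class on $M$ with that of $f(\Omega_\theta)$, which by definition represents the characteristic class of $E$.

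For step (i), given $h \in G$ and $X \in \mathfrak{g}$, differentiating the identity $\eta(hgh^{-1}) = \eta(h)\eta(g)\eta(h)^{-1}$ at $g=e$ in the direction $X$ gives $\eta_*(\textup{Ad}_h X) = \textup{Ad}_{\eta(h)}(\eta_* X)$, where the right-hand side is conjugation in $GL(V)$. Since $f$ is $GL(V)$-invariant on $\mathfrak{gl}(V)$, we get
\begin{equation*}
(f \circ \eta_*)(\textup{Ad}_h X) = f\bigl(\textup{Ad}_{\eta(h)} \eta_* X\bigr) = f(\eta_* X),
\end{equation*}
so $f \circ \eta_*$ is $\textup{Ad}(G)$-invariant, as required. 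Step (ii) is automatic because Chern--Weil evaluates a polynomial on the curvature 2-form componentwise and $\eta_*$ is a Lie algebra map, so extending $\eta_*$ to $\mathfrak{gl}(V)$-valued forms commutes with polynomial evaluation: $(f \circ \eta_*)(\Omega_\omega) = f(\eta_* \Omega_\omega)$ as scalar-valued forms on $P$.

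For step (iii), by the standard Chern--Weil definition the characteristic class of the associated vector bundle $E$ attached to $f$ is the class of $f(\Omega_\theta) \in \Omega^*(Fr(E))$, which is basic and therefore descends to a closed form on $M$; analogously, the characteristic class of the principal bundle $P$ attached to the $\textup{Ad}(G)$-invariant polynomial $f \circ \eta_*$ is the class of $(f \circ \eta_*)(\Omega_\omega)$ descended to $M$. Proposition~\ref{proposition:A.5} tells us precisely that these two descended closed forms represent the same cohomology class on $M$, which is the assertion of the theorem.

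There is no substantial obstacle; the only point that could conceivably trip one up is making sure that the independence-of-connection aspect of Chern--Weil theory is being used consistently on both sides, and that the identification of $\theta$ (a connection on $Fr(E)$) with a connection induced from $\omega$ via $\overline{\eta}$ is being used correctly. Both of these are handled by Lemma~\ref{lemma:A.3} and Proposition~\ref{proposition:A.4}, which have already been established, so the argument is essentially a one-line deduction from Proposition~\ref{proposition:A.5} together with the invariance check in step (i).
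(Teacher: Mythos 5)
Your proposal is correct and follows essentially the same route as the paper, which presents Theorem~\ref{theorem:A.6} as an immediate implication of Proposition~\ref{proposition:A.5} without spelling out the details. Your steps (i)--(iii) simply make explicit the checks (Ad-invariance of $f\circ\eta_*$, the identity $(f\circ\eta_*)(\Omega_\omega)=f(\eta_*\Omega_\omega)$, and connection-independence of Chern--Weil classes) that the paper leaves implicit.
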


\subsection{The computation}\label{sec:A.2}

Now, let's embark on the computation of the first Chern classes of the line bundles $\mathcal{E}_s := \mathcal{E}_{\eta_s} = H \times_{\eta_s} \mathbb{C}$ for $s \in \frac{1}{2} \mathbb{Z}$ where $H = SL(2, \mathbb{C})$ and $\eta_s : K \rightarrow U(1)$ is given by Eq.~(\ref{eq:2.7}).

\paragraph{The strategy}

\hfill

Since $K$ is quite complicated, it is not so easy to construct a connection on the principal $K$-bundle $H \rightarrow H/K$. Instead, we consider a simpler subgroup
\begin{equation}\label{eq:A.11}
L := \left\{ \begin{pmatrix} z & 0 \\ 0 & \overline{z} \end{pmatrix} : z \in \mathbb{T} \right\} \leq K \leq H.
\end{equation}

We have a fiber bundle $ \rho : H/L \rightarrow H/ K \cong X_0 ^\pm$ given by $\rho (hL) = hK$ for $h \in H$, whose fiber is $K/L$ (cf. \cite{switzer}, Ch.~4).

\begin{proposition}\label{proposition:A.7}
$\rho$ is a homotopy equivalence.
\end{proposition}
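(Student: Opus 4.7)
The plan is to exhibit $\rho$ as the projection of a locally trivial fiber bundle with contractible fiber, and then invoke standard homotopy-theoretic machinery. Concretely, since $L \leq K$ are closed subgroups of $H = SL(2,\mathbb{C})$ and $H \to H/K$ is a principal $K$-bundle (with local sections, because $K$ is a closed Lie subgroup), the map $\rho$ arises as the associated bundle $H \times_K (K/L)$; equivalently, it fits into a fiber bundle
\[
K/L \hookrightarrow H/L \xrightarrow{\rho} H/K.
\]

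The next step is to identify the fiber $K/L$ explicitly. The matrix factorization
\[
\begin{pmatrix} z & b \\ 0 & \overline{z}\end{pmatrix} = \begin{pmatrix} 1 & bz \\ 0 & 1\end{pmatrix}\begin{pmatrix} z & 0 \\ 0 & \overline{z}\end{pmatrix}
\]
shows that every left coset of $L$ in $K$ has a unique representative inside the subgroup $M := \left\{\begin{pmatrix} 1 & c \\ 0 & 1\end{pmatrix} : c \in \mathbb{C}\right\}$. This yields a homeomorphism $K/L \cong M \cong \mathbb{C}$, so the fiber is contractible.

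From here I would apply the long exact sequence of homotopy groups associated to the fibration above. Since all homotopy groups of the contractible fiber vanish, $\rho_{\ast} : \pi_n(H/L) \to \pi_n(H/K)$ is an isomorphism for every $n \geq 0$ (and both spaces are path connected since $H = SL(2,\mathbb{C})$ is connected). Because $H/L$ and $H/K$ are smooth manifolds, they have the homotopy type of CW complexes, so Whitehead's theorem promotes this weak equivalence to a genuine homotopy equivalence.

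I do not anticipate any serious obstacle. The only point that requires slight care is confirming that $\rho$ is a locally trivial fiber bundle rather than just a continuous surjection; this is, however, a standard consequence of the associated-bundle construction applied to the principal $K$-bundle $H \to H/K$, or equivalently of the existence of local slices for the $K$-action on $H$.
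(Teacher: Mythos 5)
Your proof is correct and follows essentially the same route as the paper: exhibit $\rho$ as a fiber bundle with fiber $K/L$, show $K/L\cong\mathbb{C}$ is contractible, conclude via the long exact sequence of homotopy groups that $\rho$ is a weak equivalence, and upgrade to a homotopy equivalence by Whitehead's theorem. The only (cosmetic) difference is in identifying the fiber: you use the unique factorization $K = M\cdot L$ to get $K/L\cong M\cong\mathbb{C}$ directly, whereas the paper realizes $\mathbb{C}$ as a transitive $K$-space through the double cover $K\to E(2)$ with stabilizer $L$; both are valid.
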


\begin{proof}
First, let's prove that $K/L$ is contractible. Since the map $K \rightarrow E(2)$ given by $\begin{pmatrix} z & b \\ 0 & \overline{z} \end{pmatrix} \mapsto (zb , z^2)$ is a double covering homomorphism with kernel $\{ \pm 1 \}$ (cf. Proposition~\ref{proposition:2.1}), $\mathbb{C}$ becomes a transitive $K$-space with the action
\begin{equation*}
\begin{pmatrix} z & b \\ 0 & \overline{z} \end{pmatrix} \cdot \lambda = zb + z^2 \lambda
\end{equation*}
which is induced from the natural transitive action of $E(2)$ on $\mathbb{C}$. Thus, fixing $\lambda = 0 \in \mathbb{C}$, the map $K \rightarrow \mathbb{C}$ given by $\begin{pmatrix} z & b \\ 0 & \overline{z} \end{pmatrix} \mapsto zb$ is a $K$-equivariant submersion with stabilizer group $L$. So, we see that $K/L \cong \mathbb{C}$, which is contractible.

Therefore, the map $\rho : H/ L \rightarrow H/K$ is a weak homotopy equivalence as can be seen by looking at the long exact sequence of homotopy groups for this fiber bundle. Being a weak homotopy equivalence between two smooth manifolds, it is a homotopy equivalence by Whitehead's theorem (cf. \cite{switzer}). 
\end{proof}

Note that there is a vector bundle morphism
\begin{equation}\label{eq:A.12}
\begin{tikzcd}[baseline=(current  bounding  box.center), column sep=1.5em]
    H\times_{\eta_s |_{L}} \mathbb{C} \arrow{r} \arrow{d} & H \times_{\eta_s} \mathbb{C} \arrow{d} \\ 
    H/L \arrow[r, "\rho"] & H/K 
    \end{tikzcd} 
\end{equation}
given by $[A, z] \mapsto [A, z]$. So, we see that, to obtain the Chern class of $H\times_{\eta_s} \mathbb{C} \rightarrow H/K$, it is sufficient to calculate the Chern class of the bundle $H\times_{\eta_s |_{L}} \mathbb{C} \rightarrow H/L$ and then send it back to $H/K$ via the homotopy equivalence $\rho$.

\paragraph{A connection and a curvature on the principal $L$-bundle \texorpdfstring{$H \rightarrow H/L$}{TEXT}}

\hfill

Let $\theta : TH \rightarrow \mathfrak{h}$ be the Maurer-Cartan form. I.e., $\theta_A (X) := {l_A ^{-1} }_* X = A^{-1} X $ for $X \in T_A H $. Recall from \cite{lee2022b}~Eq.~(4.20) that $\mathfrak{h} := \mathfrak{sl}(2, \mathbb{C}) = \text{span}_{\mathbb{R}} (J^1, J^2 , J^3, K^1 , K^2 , K^3 )$. Note that the Lie algebra $\mathfrak{l}$ of the Lie group $L$ is given by $\mathfrak{l} = \mathbb{R} J^3 \leq \mathfrak{h}$.

Denote the projection map $\beta : \mathfrak{h} \rightarrow \mathfrak{l}$ which is given by $\beta (K^j) = 0 \hspace{0.1cm} {}^\forall j$, $\beta (J^1 ) = \beta (J^2) = 0 $, and $\beta (J^3) = J^3 $. Checking separately the cases $X= K^j$ and $X= J^j$, one can easily see $\beta ( \text{Ad} (B) X ) = \beta (X) = \text{Ad} (B) \beta (X) $ for all $X \in \mathfrak{sl}(2, \mathbb{C})$ and $B \in L$.

Thus, the map $\omega : = \beta \circ \theta : TH \rightarrow \mathfrak{l}$ is a connection on the principal $L$-bundle $H \rightarrow H/L$ by the definition of $\theta$ and the preceding observation. Form a curvature form $\Omega := d \omega + \frac{1}{2} [\omega,\omega]$. Since $[\cdot, \cdot]$ vanishes on the abelian Lie algebra $\mathfrak{l}$ and $d \theta + \frac{1}{2} [ \theta , \theta] = 0$, we see, for $X, Y \in TH$,
\begin{equation}\label{eq:A.13}
\Omega (X,Y) = d \omega (X,Y) = \beta d \theta (X,Y) = - \beta \big( [ \theta (X) , \theta (Y)] \big).
\end{equation}

Therefore, for $X, Y \in \mathfrak{h}$ so that $AX, AY \in T_A H$,
\begin{equation}\label{eq:A.14}
\Omega_A (AX,AY) = -\beta \big([ X , Y ] \big) = 2 \text{Im} (X_{12} Y_{21} - X_{21} Y_{12} ) J^3
\end{equation}
where $X_{ij}, Y_{ij}$ denote their matrix entries as elements of $\mathfrak{h} \leq \mathfrak{gl} (2, \mathbb{C})$.

Note that the representation $\eta_{-1/2} |_L : L \rightarrow U(1)$ is a Lie group isomorphism given by $\begin{pmatrix} z & 0 \\ 0 & \overline{z} \end{pmatrix} \mapsto \overline{z}$ (cf. Eq.~(\ref{eq:2.7})) and hence the map $(\eta_{-1/2} |_L )_* : \mathfrak{l} \rightarrow \mathfrak{u}(1)$ is a Lie algebra isomorphism sending $J^3$ to $\frac{i}{2}$. Therefore, by Theorem~\ref{theorem:A.6}, the first Chern class of the bundle $H \times_{\eta_{-1/2} |_L} \mathbb{C} \rightarrow H/L$ is represented by the following basic form on the principal $L$-bundle $H \xrightarrow{\pi_H} H/L$.
\begin{equation}\label{eq:A.15}
\phi_A (AX, AY) := \frac{i}{2\pi} (\eta_{-1/2})_* \Omega _A (AX, AY) = - \frac{1}{2\pi} \text{Im} ( X_{12} Y_{21} - X_{21} Y_{12} ),
\end{equation}
i.e.,
\begin{equation}\label{eq:A.16}
c_1 (H \times_{\eta_{-1/2} |_L} \mathbb{C}) = [\phi^\flat] \in H^2 (H/L; \mathbb{Z}) \lesssim H_{\text{dR}} ^2 (H/L; \mathbb{R})
\end{equation}
with the $\flat$ understood as indicating the fact that $\phi = \pi_H ^* \phi^\flat$. (This is indeed a special case of Eq.~(\ref{eq:A.2}). See \cite{tu}.)

\paragraph{A cohomology generator of the space \texorpdfstring{$H/K \cong X_0 ^\pm$}{TEXT}}

\hfill

Recall that we are identifying $X_0 ^\pm \cong \mathbb{R}^3 \setminus \{0 \}$ via the map Eq.~(\ref{eq:3.2}). Thus, the $2$-form $\zeta$ (Eq.~(\ref{eq:3.6})) is a generator the cyclic group $H^2 (X_0 ^\pm ; \mathbb{Z}) \cong \textup{Hom} \big( H_2 (X_0 ^\pm ; \mathbb{Z}) , \mathbb{Z} \big) \leq \textup{Hom} \big( H_2 (X_0 ^\pm ; \mathbb{Z}) , \mathbb{R} \big) \cong H_{\text{dR}}^2 (X_0 ^\pm ; \mathbb{R})$ since it gives $1$ when paired with the generating homology class $[\mathbb{S}^2] \in H_2 (X_0 ^\pm ; \mathbb{Z}) \cong \mathbb{Z}$ (that is, when integrated over $\mathbb{S}^2$). The following observation is essential.

\begin{proposition}\label{proposition:A.8}
The $2$-form $\zeta$ is Lorentz invariant. I.e., for ${}^\forall A \in SL(2, \mathbb{C})$,
\begin{equation}\label{eq:A.17}
A^* \zeta = \zeta
\end{equation}
\end{proposition}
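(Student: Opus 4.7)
The plan is to express $\zeta$ as the interior product of two manifestly Lorentz-invariant objects on $X_0^\pm$, so that Lorentz invariance of $\zeta$ follows by naturality of the interior product. Specifically, I will show that $\zeta = \frac{1}{4\pi}\,\iota_P\,\Omega_3$, where $P := p^\mu\,\partial_{p^\mu}$ is the Euler (radial) vector field on $\mathbb{R}^4$ restricted to $X_0^\pm$, and $\Omega_3 := dp^1 \wedge dp^2 \wedge dp^3 / |p^0|$ is the Lorentz-invariant $3$-form on $X_0^\pm$ corresponding to the measure $d\mu$ of Proposition~\ref{proposition:3.1}.

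First, I will verify that $P$ and $\Omega_3$ are genuinely $SO^\uparrow(1,3)$-invariant. For $P$, invariance under any linear automorphism of $\mathbb{R}^4$ is automatic (the Euler vector field satisfies $T_*P = P$ for every $T \in GL(4,\mathbb{R})$); and since $P_p\,\langle p,p\rangle = 2\langle p,p\rangle = 0$ on $X_0^\pm$, $P$ is tangent to $X_0^\pm$ and hence restricts to a smooth vector field there. For $\Omega_3$, Lorentz invariance of the measure $|\Omega_3| = d\mu$ forces $\Lambda^* \Omega_3 = \pm \Omega_3$ for each $\Lambda \in SO^\uparrow(1,3)$, and connectedness of $SO^\uparrow(1,3)$ pins down the sign to $+1$.

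Second, a short chain-rule computation in the chart $\mathbf{p} \mapsto (\pm|\mathbf{p}|, \mathbf{p})$ shows that the restriction of $P$ to $X_0^\pm$ is represented by the three-dimensional radial vector field $p^i\,\partial_{p^i}$ (the $\partial_{p^0}$-component works out thanks to the identity $|\mathbf{p}|^2 = (p^0)^2$ on $X_0^\pm$). Hence
\begin{equation*}
\iota_P\,\Omega_3 = \frac{1}{|p^0|}\,\iota_{p^i\partial_{p^i}}\bigl(dp^1 \wedge dp^2 \wedge dp^3\bigr) = \frac{1}{|p^0|}\bigl(p^1 dp^2 \wedge dp^3 + p^2 dp^3 \wedge dp^1 + p^3 dp^1 \wedge dp^2\bigr) = 4\pi\,\zeta.
\end{equation*}

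Finally, for any $A \in SL(2,\mathbb{C})$, naturality of the interior product gives $A^*\zeta = \frac{1}{4\pi}\,A^*(\iota_P\Omega_3) = \frac{1}{4\pi}\,\iota_{(A^{-1})_*P}\,(A^*\Omega_3) = \frac{1}{4\pi}\,\iota_P\,\Omega_3 = \zeta$, since $\kappa(A) \in SO^\uparrow(1,3)$ preserves both $P$ and $\Omega_3$. The only mildly subtle point is the sign argument for $\Omega_3$ in Step~1; everything else is routine, so I do not anticipate any serious obstacle.
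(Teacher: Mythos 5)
Your proof is correct, but it takes a genuinely different route from the paper's. The paper argues by brute force: it first observes that $4\pi\zeta$ restricts to the round volume form on every sphere, hence is rotation-invariant, then invokes the decomposition of an arbitrary $A \in SL(2,\mathbb{C})$ into three rotations and one boost along the $z$-axis, and finally verifies $B^*\zeta = \zeta$ for such a boost by an explicit pullback computation using $dp^0 = \frac{1}{p^0}(p^1\,dp^1 + p^2\,dp^2 + p^3\,dp^3)$ and the relation $p_\mu p^\mu = 0$. You instead exhibit $\zeta$ as $\frac{1}{4\pi}\,\iota_P\,\Omega_3$, the contraction of the Euler vector field (invariant under every linear map and tangent to the cone) with the invariant $3$-form underlying the measure $\mu$, and conclude by naturality of the interior product. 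Your identification $\iota_P\Omega_3 = 4\pi\zeta$ and the pushforward computation showing that the three-dimensional radial field represents $P$ in the chart both check out. Your argument is more conceptual --- it explains \emph{why} $\zeta$ is invariant rather than verifying it --- and it sidesteps the boost computation entirely; the trade-off is that it leans on the $G$-invariance of $\mu$ (Proposition~\ref{proposition:3.1}, which the paper cites to Folland without proof) and on the small continuity-plus-connectedness argument needed to promote $\Lambda^*\Omega_3 = \pm\Omega_3$ to $\Lambda^*\Omega_3 = \Omega_3$, whereas the paper's computation is self-contained. Both the sign argument and the rest of your reasoning are sound, so there is no gap.
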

\begin{proof}
Since $|p^0 | = |\mathbf{p}|$, we notice that $4 \pi \zeta$ is the standard volume form when restricted to a sphere of any radius. Consequently, we see that $R^* \zeta = \zeta$ for any rotation $R \in SU(2)$. Therefore, since any element $A \in SL( 2, \mathbb{C})$ can be written as a product of three rotations and one boosting along the $z$-axis (cf. \cite{lee2022b}~Eq.~(2.19)), it suffices to check $B^* \zeta = \zeta$ for the boostings along the $z$-axis.

Let $B := e^{u K^3}, \hspace{0.1cm} u \in \mathbb{R}$ be a boosting along the $z$-axis. Denoting $C:= \cosh u$ and $S:= \sinh u$, we have $\kappa \left(B \right) = \begin{pmatrix} C & 0 & 0 & S \\ 0 & 1 & 0 & 0 \\ 0 & 0 & 1 & 0 \\ S & 0 & 0 & C \end{pmatrix} \in SO^\uparrow (1,3)$.

Since $d p^0 = \frac{1}{p^0} \left( p^1 dp^1 + p^2 dp^2 + p^3 dp^3 \right)$, we have
\begin{align*}
B^* \zeta := \kappa(B) ^* \zeta &= \frac{1}{|C p^0 + S p^3|} \Big(  p^1 d p^2 \wedge d(S p^0 + C p^3 ) + p^2 d(S p^0 + C p^3) \wedge d p^1  \\
&\hspace{7cm} + (S p^0 + C p^3) d p^1 \wedge d p^2 \Big) \\
&= \frac{1}{| C p^0 + S p^3 |} \left( C |p^0| \zeta + \frac{S}{p^0}  \big( p_\mu p^\mu \big) dp^1 \wedge dp^2 + \frac{ |p^0|}{p^0} S p^3 \zeta \right) \\
&=\frac{ |p^0|}{p^0} \frac{Cp^0 + S p^3} { | C p^0 + S p^3 |} \zeta = \zeta .
\end{align*} 
\end{proof}

\paragraph{A formula for \texorpdfstring{$\zeta$ on $H$}{TEXT}}

\hfill

To compare the cohomology generator $\zeta$ with the Chern class-representing form Eq.~(\ref{eq:A.15}), we lift $\zeta$ to a form on $H$. Denote the projection map $H \rightarrow H/K \cong X_0 ^\pm$ as $\varphi$, which is thus given by $\varphi (A) = A p_0 ^\pm$.

\begin{lemma}\label{lemma:A.9}
The differential of $ \varphi$ is given by, for $X \in \mathfrak{h}$ (and hence $AX \in T_A H$, $\forall A \in H$),
\begin{equation}\label{eq:A.18}
\varphi_{*,A} (AX) = A x
\end{equation}
where $x \in T_{p_0 ^\pm } X_0 ^\pm \subseteq \mathbb{R}^4$ is given by, in the notation of Eq.~(\ref{eq:1.2}),
\begin{equation}\label{eq:A.19}
\tilde{x} = X (p_0 ^\pm)^\sim + (p_0 ^\pm)^\sim X^\dagger
\end{equation}
and hence $x^0 = x^3 = \pm 2 \textup{Re} X_{11} $, $ x^1 = \pm 2 \textup{Re} X_{21} 
$ and $ x^2 = \pm 2 \textup{Im} X_{21}$.
\end{lemma}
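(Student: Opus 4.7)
The approach is to use the group-homomorphism property of $\kappa$ to reduce the computation of $\varphi_{*,A}$ to the derivative at the identity of $H$, and then to transfer the resulting derivative into elementary matrix algebra via the defining identity (\ref{eq:1.3}a) for $\kappa$.

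For the reduction to the identity, I would write $\varphi(Ae^{tX}) = \kappa(A)\kappa(e^{tX})p_0^\pm$ for $A \in H$ and $X \in \mathfrak{h}$; the chain rule, together with the linearity of $\kappa(A)$ acting on $\mathbb{R}^4$, immediately gives $\varphi_{*,A}(AX) = \kappa(A)\cdot y$, where $y := \frac{d}{dt}\big|_{t=0}\kappa(e^{tX})p_0^\pm \in T_{p_0^\pm}X_0^\pm \subseteq \mathbb{R}^4$. So it suffices to identify $y$ with the vector $x$ defined by (\ref{eq:A.19}).

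For this identification, observe that the $\sim$-transform from (\ref{eq:1.2}a) is an $\mathbb{R}$-linear isomorphism of $\mathbb{R}^4$ onto the Hermitian $2\times 2$ matrices, hence commutes with $\frac{d}{dt}$. Applying it to $y$ and using (\ref{eq:1.3}a) together with the ordinary Leibniz rule for matrix-valued functions yields $\tilde y = \frac{d}{dt}\big|_{t=0}\bigl(e^{tX}\tilde p_0^\pm\, e^{tX^\dagger}\bigr) = X\tilde p_0^\pm + \tilde p_0^\pm X^\dagger$, which is precisely (\ref{eq:A.19}). This establishes (\ref{eq:A.18}). For the explicit components I would then use $p_0^\pm = (\pm 1, 0, 0, \pm 1)$, so that $\tilde p_0^\pm = \begin{pmatrix}\pm 2 & 0 \\ 0 & 0\end{pmatrix}$; a routine $2\times 2$ multiplication computes the entries of $\tilde x$, and inverting (\ref{eq:1.2}a) reads off the stated formulas for $x^0, x^1, x^2, x^3$.

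There is no substantive obstacle: the whole proof reduces to the chain rule plus the definition of $\kappa$. The one point worth flagging is the role of the $\sim$-transform, which linearizes the twisted conjugation $B \mapsto B\,\tilde{(\cdot)}\,B^\dagger$ into the Leibniz expression $X\tilde p_0^\pm + \tilde p_0^\pm X^\dagger$; after this observation is made, everything becomes immediate.
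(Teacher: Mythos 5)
Your proposal is correct and follows essentially the same route as the paper: both reduce the computation to a derivative along a curve through $A$ with velocity $AX$ (the paper uses $A(I+tX)$, you use $Ae^{tX}$), factor out $\kappa(A)$, and apply the $\sim$-transform together with the defining relation $(\kappa(\Lambda)y)^\sim = \Lambda\tilde{y}\Lambda^\dagger$ and the Leibniz rule to obtain $X(p_0^\pm)^\sim + (p_0^\pm)^\sim X^\dagger$. The component extraction at the end also matches.
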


\begin{proof}
By the definition of $\varphi$,
\begin{align*}
\big(\varphi_{*, A} (AX) \big)^\sim = \big( \kappa_* (AX) p_0 ^\pm \big)^\sim = \big( \left. \frac{d}{dt} \right|_{t = 0} \kappa (A (I + tX ) ) p_0 ^\pm \big)^\sim  \\ = A \left(X (p_0 ^\pm)^\sim + (p_0 ^\pm)^\sim X^\dagger \right) A^\dagger = ( \kappa(A) x )^\sim = (A x )^\sim.
\end{align*} 
\end{proof}

\begin{proposition}\label{proposition:A.10}
For $A \in H$ and $X, Y \in \mathfrak{h}$, we have
\begin{equation}\label{eq:A.20}
\left( \varphi^* \zeta \right)_A (AX, AY) =  \frac{1}{2\pi} \textup{Im} ( \overline{X_{21}} Y_{21} - \overline{Y_{21}} X_{21}).
\end{equation}
\end{proposition}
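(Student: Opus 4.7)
The plan is to exploit the Lorentz invariance of $\zeta$ from Proposition~\ref{proposition:A.8} to reduce this two-parameter identity to an evaluation at a single base point. Because $\varphi(B) = Bp_0^\pm = \kappa(B)p_0^\pm$, one has the intertwining relation $\varphi \circ l_A = \kappa(A) \circ \varphi$, where $l_A : H \to H$ denotes left translation. Combining this with $\kappa(A)^*\zeta = \zeta$ gives $l_A^*(\varphi^*\zeta) = \varphi^* \kappa(A)^* \zeta = \varphi^*\zeta$, i.e.\ $\varphi^*\zeta$ is a left-invariant $2$-form on $H$. Consequently
\begin{equation*}
(\varphi^*\zeta)_A(AX, AY) \;=\; (\varphi^*\zeta)_I(X, Y),
\end{equation*}
and it suffices to verify the formula at the identity.

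At $A = I$ the map $\varphi_{*,I}$ is exactly the linear isomorphism described by Lemma~\ref{lemma:A.9}, so $\varphi_{*,I}(X) = x$ and $\varphi_{*,I}(Y) = y$ with $x^1 = \pm 2\,\mathrm{Re}\,X_{21}$, $x^2 = \pm 2\,\mathrm{Im}\,X_{21}$ and similarly for $y$. At the base point $p_0^\pm$ we have $p^1 = p^2 = 0$, so only the last summand in the definition of $\zeta$ survives, giving
\begin{equation*}
\zeta_{p_0^\pm} \;=\; \frac{p^3}{4\pi|p^0|}\, dp^1 \wedge dp^2.
\end{equation*}
Substituting the components of $x$ and $y$ into $dp^1 \wedge dp^2$, the common prefactor $(\pm 2)(\pm 2) = 4$ appears, and
\begin{equation*}
x^1 y^2 - x^2 y^1 \;=\; 4\bigl(\mathrm{Re}\,X_{21}\,\mathrm{Im}\,Y_{21} - \mathrm{Im}\,X_{21}\,\mathrm{Re}\,Y_{21}\bigr) \;=\; 4\,\mathrm{Im}(\overline{X_{21}}Y_{21}).
\end{equation*}
The elementary identity $2\,\mathrm{Im}(\overline{X_{21}}Y_{21}) = \mathrm{Im}(\overline{X_{21}}Y_{21} - \overline{Y_{21}}X_{21})$ then rewrites the result in the symmetric form of Eq.~(\ref{eq:A.20}).

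There is no substantive conceptual obstacle. Once the equivariance of $\varphi$ translates Proposition~\ref{proposition:A.8} into left-invariance of $\varphi^*\zeta$, the rest is an elementary $2 \times 2$ matrix computation. The only delicate point is sign bookkeeping between the $X_0^+$ and $X_0^-$ cases, governed by the factor $p^3/|p^0|$ at $p_0^\pm$ on one side and the squared $\pm$'s coming from Lemma~\ref{lemma:A.9} on the other; these are the only places where the chosen sheet of the mass shell enters, and tracking them carefully yields the stated formula.
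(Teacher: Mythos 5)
Your proposal is correct and follows essentially the same route as the paper: both reduce the evaluation to the base point $p_0^\pm$ by combining the Lorentz invariance of $\zeta$ (Proposition~\ref{proposition:A.8}) with the formula for $\varphi_*$ (Lemma~\ref{lemma:A.9}), and then carry out the same $2\times 2$ computation $x^1y^2-x^2y^1 = 4\,\mathrm{Im}(\overline{X_{21}}Y_{21})$; your left-invariance phrasing of the reduction is just a repackaging of the paper's chain $(\varphi^*\zeta)_A(AX,AY)=(A^*\zeta)_{p_0^\pm}(x,y)=\zeta_{p_0^\pm}(x,y)$. If anything, you are more explicit than the paper about the $p^3/|p^0|$ factor at $p_0^\pm$, which the paper silently sets to $1$.
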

\begin{proof}
Using the notation of the preceding lemma, observe
\begin{align*}
\left( \varphi^* \zeta \right)_A (AX, AY) = \zeta_{A p_0 ^\pm} ( Ax, Ay ) = \left(A^* \zeta \right)_{p_0 ^\pm} (x, y) = \zeta_{p_0 ^\pm} (x,y) \\
 = \frac{1}{4 \pi} (x^1 y^2 - x^2 y^1)
\end{align*}
where we have used Proposition~\ref{proposition:A.8} and Lemma~\ref{lemma:A.9}. By the definitions of $x, y$ in terms of $X,Y$ given in Lemma~\ref{lemma:A.9}, we see this implies
\begin{align*}
\left( \varphi^* \zeta \right)_A (AX, AY) = \frac{1}{\pi} \left( \text{Re} X_{21} \text{Im} Y_{21} - \text{Re} Y_{21} \text{Im} X_{21} \right) = \frac{1}{\pi} \text{Im} (\overline{X_{21}} Y_{21} ) \nonumber \\ =  \frac{1}{2\pi} \text{Im} ( \overline{X_{21}} Y_{21} - \overline{Y_{21}} X_{21}).
\end{align*} 
\end{proof}

\paragraph{Comparison between \texorpdfstring{$\phi$ and $\varphi^* \zeta$}{TEXT}}

\hfill

Let's compare the two $2$-forms $\phi$ (Eq.~(\ref{eq:A.15})) and $\varphi^* \zeta$ (Eq.~(\ref{eq:A.20})) on $H$. The following difference is also a $2$-form on $H$.

\begin{align}\label{eq:A.21}
\psi_A (AX, AY) &:= \big( \varphi^* \zeta - \phi \big)_A (AX, AY) \nonumber \\
 &=  \frac{1}{2 \pi} \text{Im} \left[ (X+ X^\dagger)_{12} Y_{21} - (Y+ Y^\dagger)_{12} X_{21} \right]
\end{align}

We claim that $\psi$ is a closed basic form on the principal $L$-bundle $H \rightarrow H/L$ and $[\psi^\flat] = 0 \in H^2 (H/L ; \mathbb{Z} )$. To show this, we need a lemma.

\begin{lemma}\label{lemma:A.11}
Let $J:= SU(2) \xhookrightarrow{\iota} H$. The map $r: H \rightarrow J$ defined by $ r(A) = (A^{\dagger -1} A^{-1} )^{\frac{1}{2}} A$ induces a deformation retract $\overline{r} : H/L \rightarrow J/L$.
\end{lemma}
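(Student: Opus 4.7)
The plan is to recognize $r$ as the projection onto the unitary factor in the polar decomposition of $SL(2,\mathbb{C})$, and then use a straight-line deformation of the Hermitian factor to the identity. For $A \in H$, the matrix $AA^\dagger$ is positive Hermitian with determinant $1$, so it has a unique positive Hermitian square root $P := (AA^\dagger)^{1/2}$, also of determinant $1$. Setting $U := P^{-1}A$, one checks that $UU^\dagger = I$ and $\det U = 1$, so $U \in SU(2) = J$; this gives the unique polar decomposition $A = PU$. Since $A^{\dagger-1}A^{-1} = (AA^\dagger)^{-1}$, we have $r(A) = (AA^\dagger)^{-1/2}A = U$, and $r$ is smooth by continuity of the functional calculus on positive Hermitian matrices.

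Next, I will define the homotopy $F: H \times [0,1] \to H$ by $F_t(A) := P^{1-t}U$, where $A = PU$ is the polar decomposition. Since $P^{1-t}$ is positive Hermitian with determinant $1$ (its eigenvalues are $\lambda_i^{1-t}$ for those of $P$), we have $F_t(A) \in H$. Evidently $F_0(A) = PU = A$, $F_1(A) = U = \iota(r(A))$, and for $A \in J$ we have $AA^\dagger = I$, hence $P = I$, so $F_t(A) = U = A$ throughout. Thus $F$ exhibits $J$ as a strong deformation retract of $H$ with retraction $r$.

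Finally, I verify compatibility with the right $L$-action and descend to the quotient. Because $L \leq SU(2)$, every $B \in L$ satisfies $BB^\dagger = I$, so $(AB)(AB)^\dagger = AA^\dagger$; the polar decomposition of $AB$ is therefore $AB = P \cdot (UB)$ with $UB \in SU(2)$. Consequently $r(AB) = UB = r(A)\,B$ and $F_t(AB) = P^{1-t}UB = F_t(A)\,B$, so both $r$ and the homotopy $F$ descend to continuous maps $\overline{r}: H/L \to J/L$ and $\overline{F}: (H/L) \times [0,1] \to H/L$ exhibiting $J/L$ as a strong deformation retract of $H/L$ via $\overline{r}$. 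The argument is essentially bookkeeping around polar decomposition; the only point requiring attention is preservation of the determinant-one constraint along the deformation, which holds because each $P^{1-t}$ still has determinant $1$.
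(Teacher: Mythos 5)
Your proof is correct and takes essentially the same approach as the paper: your homotopy $F_t(A) = P^{1-t}U = (AA^\dagger)^{-t/2}A$ is literally the same map as the paper's $F_t(A) = (A^{\dagger -1}A^{-1})^{t/2}A$, and the checks that $F_0 = \mathrm{id}$, $F_1 = \iota \circ r$, $F_t|_J = \mathrm{id}_J$, and $F_t(AB) = F_t(A)B$ for $B \in L$ are the same. The only point the paper treats more carefully is the joint continuity of $F$ in $(A,t)$, which you leave implicit but which is routine for $2\times 2$ positive definite matrices.
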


\begin{proof}
Consider the map $F : H \times I \rightarrow H$ defined as $F_t (A) = (A^{\dagger -1} A^{-1} )^{t/2} A$ where $I := [0,1]$ and $(A^{\dagger -1} A^{-1})^{t/2}$ is defined by the functional calculus. This map is continuous since, if $(t_i) \subseteq I$ and $(A_i) \subseteq H$ are two sequences converging to $t \in I$ and $A \in H$ respectively, then

\begin{align*}
\|F_{t_i} (A_i) - F_t (A)\| &\leq \|F_{t_i} (A_i) - F_t (A_i) \| + \| F_t (A_i) - F_t (A) \| \\ &\leq \|(A_i ^{\dagger -1} A_i ^{-1} )^{t_i /2} - (A_i ^{\dagger -1} A_i ^{-1} )^{t/2} \| \|A_i \| \\ & \hspace{0.5cm} +
\|(A_i ^{\dagger -1} A_i ^{-1} )^{t /2} A_i - (A ^{\dagger -1} A ^{-1} )^{t/2} A \|
\end{align*}
where $\| \cdot \|$ denotes the operator norm.

The second term tends to zero by the continuity of the functional calculus and the first term tends to zero because $A_i$ is bounded and the spectrums of the operators $(A_i ^{\dagger -1} A_i ^{-1})$ are eventually contained in a fixed compact set $ K \subseteq ( 0, \infty]$ on which the sequence of functions $f_i (x) = x^{t_i / 2}$ converges uniformly to the function $f(x) = x^{t/2}$. Hence, we conclude that $F$ is continuous on $H \times I$.

Note that $F_1 = \iota \circ r $ and $F_0 = \text{id}_H $. Also, $F_t |_J = \text{id}_J$ since $B^{\dagger -1} = B $ for $B \in J$. Finally, $F_t (AB) = (A^{\dagger -1} B^{\dagger -1} B^{-1} A^{-1 })^{t/2} AB = F_t (A)B$ for $A \in H$, $B \in L \leq J$ and hence the map $\overline{F} : H/L \times I \rightarrow  H/L $ defined by the commutative diagram
\begin{equation}\label{eq:A.22}
\begin{tikzcd}[baseline=(current  bounding  box.center), column sep=1.5em]
    H \times I \arrow["F"]{r} \arrow{d} &  H \arrow{d} \\
H/L \times I \arrow["\overline{F}"]{r} & H/L 
    \end{tikzcd} 
\end{equation}
is a well defined map such that $\overline{F}_t |_{J/L} = \text{id}_{J/L}$, $\overline{F}_0 = \text{id}_{H/L}$ and $\overline{F}_1 = \overline{\iota} \circ \overline{r}$, proving the lemma. 
\end{proof}

Denote the natural projections $H \rightarrow H/L $ and $ J \rightarrow J/L$ as $\pi_H$ and $\pi_J$, respectively. Note that $ \rho \circ \pi_H : H \rightarrow H/K$ is equal to the natural projection $\varphi$ considered in Lemma~\ref{lemma:A.9}.

\begin{proposition}\label{proposition:A.12}
$\psi$ is a closed basic form on the principal bundle $H \rightarrow H/L$.
\end{proposition}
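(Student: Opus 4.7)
The plan is to verify basicity and closedness separately, exploiting the factorization $\varphi = \rho \circ \pi_H$ together with the Chern--Weil origin of $\phi$.

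First, for basicity: since $\varphi = \rho \circ \pi_H$, the form $\varphi^* \zeta = \pi_H^* (\rho^* \zeta)$ is manifestly a $\pi_H$-pullback and hence basic on the principal $L$-bundle $H \to H/L$. The form $\phi$ is basic by its Chern--Weil construction: Eq.~(\ref{eq:A.16}) already exhibits $\phi$ as $\pi_H^* \phi^\flat$ for some 2-form $\phi^\flat$ on $H/L$, and Proposition~\ref{proposition:A.5} (applied to the invariant polynomial $\tfrac{i}{2\pi}(\eta_{-1/2})_*$ on $\mathfrak{l}$) is precisely the descent statement we need. Consequently $\psi = \pi_H^*(\rho^* \zeta - \phi^\flat)$, which shows that $\psi$ itself is basic.

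Second, for closedness: by naturality, $d(\varphi^* \zeta) = \varphi^* d\zeta$, and $d\zeta = 0$ since $\zeta$ is a de Rham representative of an integer class in $H^2(X_0^\pm; \mathbb{Z})$ (Theorem~\ref{theorem:3.3}). Closedness of $\phi$ is standard Chern--Weil: since $L \cong U(1)$ is abelian, the bracket term in the structure equation drops, so $\Omega_\omega = d\omega$ and hence $d\Omega_\omega = 0$; applying the $\mathbb{R}$-linear map $\tfrac{i}{2\pi}(\eta_{-1/2})_*$ preserves closedness, giving $d\phi = 0$. Subtracting yields $d\psi = 0$.

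I do not anticipate any substantive obstacle here: the conceptual machinery has already been installed in the preliminary results of Sect.~\ref{sec:A.1}, and what remains is just to assemble them via the chain $H \xrightarrow{\pi_H} H/L \xrightarrow{\rho} H/K$. The only item worth double-checking during the final write-up is that the Chern--Weil descent and the Bianchi-type closedness really apply to the particular connection $\omega = \beta\circ\theta$ constructed in Sect.~\ref{sec:A.2}; both follow immediately from $L$ being abelian, which collapses the curvature to $d\omega$.
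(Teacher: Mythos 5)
Your proposal is correct and follows essentially the same route as the paper's own proof: the paper likewise observes that $\phi$ is closed and basic as a characteristic-class-representing form, that $\varphi^*\zeta = \pi_H^*(\rho^*\zeta)$ is closed and basic as a pullback of the closed form $\rho^*\zeta$ on $H/L$, and concludes by taking the difference. Your extra detail (the abelian collapse $\Omega_\omega = d\omega$ giving closedness of $\phi$, and the explicit descent $\psi = \pi_H^*(\rho^*\zeta - \phi^\flat)$) merely spells out what the paper cites to \cite{tu}.
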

\begin{proof}
We know that $\phi$ (Eq.~\ref{eq:A.15}) is a closed basic form, being a characteristic-class-representing form (cf. \cite{tu}). Also, $\varphi^* \zeta = \pi_H ^* ( \rho^* \zeta)$ is a closed basic form being a pull-back of a closed form $\rho^* \zeta$ on $H/L$. Thus, we see $\psi = \varphi^* \zeta - \phi$ is also a closed basic form. 
\end{proof}

So, $\psi$ is pulled down to a closed form $\psi^\flat \in \Omega^2 (H/L)$ such that
\begin{equation}\label{eq:A.23}
\psi = \pi_H ^* (\psi^\flat).
\end{equation}
(cf. Ch.~6 of \cite{tu} for a discussion on basic forms and characteristic classes)

The situation is depicted in the following diagram.
\begin{equation}\label{eq:A.24}
\begin{tikzcd}[baseline=(current  bounding  box.center), column sep=1.5em]
    J  \arrow[r, hookrightarrow, "\iota"] \arrow["\pi_J"]{dd} &  H \arrow["\pi_H"]{dd} \arrow["\psi"]{dr} & \\
 & & \mathbb{R} \\
J/L  \arrow["\overline{\iota}"]{r} & H/L \arrow{ur}[swap]{\psi^\flat} & 
    \end{tikzcd} 
\end{equation}

\begin{theorem}\label{theorem:A.13}
We have $[\psi^\flat] = 0 \in H^2 (H/L ; \mathbb{Z})$. Hence we conclude
\begin{equation}\label{eq:A.25}
[\rho^* \zeta] = c_1 (H \times_{\eta_{-1/2} |_L} \mathbb{C} ) \in H^2 (H/L;\mathbb{Z}).
\end{equation}
\end{theorem}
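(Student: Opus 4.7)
The plan is to exploit the homotopy equivalence from Lemma~\ref{lemma:A.11} to reduce the problem to computing $\iota^*\psi$ on the compact subgroup $J = SU(2)$, and then to observe that the specific form of $\psi$ in Eq.~(\ref{eq:A.21}) vanishes identically on $\mathfrak{su}(2)$ because its expression involves only the Hermitian parts $X+X^\dagger$ of the tangent vectors.

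More concretely, I would proceed as follows. First, by Lemma~\ref{lemma:A.11}, the map $\bar{\iota}: J/L \to H/L$ is a homotopy equivalence, so the induced map $\bar{\iota}^*: H^2(H/L;\mathbb{Z}) \to H^2(J/L;\mathbb{Z})$ is an isomorphism. Therefore, to show $[\psi^\flat] = 0$, it is enough to show $\bar{\iota}^*[\psi^\flat] = 0$. Next, for $A \in J = SU(2)$ and $X, Y \in \mathfrak{j} := \mathfrak{su}(2)$, one has $X^\dagger = -X$ and $Y^\dagger = -Y$, so $X + X^\dagger = 0$ and $Y + Y^\dagger = 0$. Plugging these into the explicit formula (\ref{eq:A.21}) shows that $(\iota^*\psi)_A(AX, AY) = 0$ for all such $A, X, Y$, i.e., $\iota^*\psi \equiv 0$ on $J$.

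From the commutative diagram (\ref{eq:A.24}), we have $\iota^*\psi = \iota^*\pi_H^*\psi^\flat = \pi_J^*\bar{\iota}^*\psi^\flat$. Since $\pi_J: J \to J/L$ is a principal $L$-bundle (a submersion), $\pi_J^*$ is injective on differential forms, so $\bar{\iota}^*\psi^\flat = 0$ as a form on $J/L$, and in particular in cohomology. By the preceding paragraph, this yields $[\psi^\flat] = 0 \in H^2(H/L;\mathbb{Z})$. To deduce (\ref{eq:A.25}), note that $\varphi^*\zeta = \phi + \psi$ by the definition of $\psi$; since $\varphi = \rho \circ \pi_H$ and all three forms are basic for the bundle $H \to H/L$, we may descend the equality to $H/L$ to obtain $\rho^*\zeta = \phi^\flat + \psi^\flat$, and combined with (\ref{eq:A.16}) and $[\psi^\flat] = 0$, this gives (\ref{eq:A.25}).

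The main obstacle I expect is the verification that $\psi$ really does descend to a well-defined form $\psi^\flat$ on $H/L$ (already handled by Proposition~\ref{proposition:A.12}) and, more crucially, recognizing the algebraic simplification of Eq.~(\ref{eq:A.21}) on $\mathfrak{su}(2)$. Everything else is a formal consequence of the diagram (\ref{eq:A.24}) and the homotopy-invariance of cohomology; the computational content is entirely concentrated in the observation that the Hermitian-part combinations in (\ref{eq:A.21}) vanish on anti-Hermitian matrices, which is what makes the restriction $\iota^*\psi$ identically zero rather than merely cohomologous to zero.
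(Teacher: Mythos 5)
Your proposal is correct and follows essentially the same route as the paper's own proof: the key computation is the vanishing of $\iota^*\psi$ on $J=SU(2)$ because the combinations $X+X^\dagger$ in Eq.~(\ref{eq:A.21}) vanish for skew-Hermitian $X\in\mathfrak{su}(2)$, followed by descent via the injectivity of $\pi_J^*$ and the homotopy equivalence of Lemma~\ref{lemma:A.11}. The only cosmetic difference is that the paper writes $[\psi^\flat]=\overline{r}^*\overline{\iota}^*[\psi^\flat]$ using the retraction explicitly, whereas you invoke the induced isomorphism on $H^2$; these are equivalent.
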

\begin{proof}
From Eq.~(\ref{eq:A.21}), we see that $\psi_A (AX , AY) = 0 $ for $X , Y \in \mathfrak{j} := \mathfrak{su}(2)$ since all such $X,Y$ are skew-Hermitian. Hence, $\iota^* \psi = 0 $.

Also, since $0 = \iota^* \psi = \iota^* \pi_H ^* \psi^\flat = \pi_J ^* \overline{\iota}^* \psi^\flat$ and $(\pi_J)_*$ is a surjection at each fiber, we have
\begin{equation*}
\overline{\iota}^* \psi^\flat = 0.
\end{equation*}

By virtue of Lemma~\ref{lemma:A.11}, we see $[ \psi^\flat ] = \overline{r}^* \overline{\iota}^* [\psi^\flat] = 0 \in H^2 (H/L ; \mathbb{Z}) $. Hence, by the definition $\psi = \varphi^* \zeta - \phi$, we conclude
\begin{equation*}
[(\varphi^* \zeta)^\flat ] = [\phi^\flat] \in H^2 (H/L).
\end{equation*}

Finally, since $\varphi^* \zeta = \pi_H ^* (\rho^* \zeta)$ and hence $(\varphi^* \zeta)^\flat = \rho^* \zeta$, Eq.~(\ref{eq:A.16}) implies Eq.~(\ref{eq:A.25}).
\end{proof}

\paragraph{The Chern classes}

\hfill

\begin{proof}[Proof of Theorem~\ref{theorem:3.3}]
Since Eq.~(\ref{eq:A.12}) is a vector bundle morphism, we have $\rho^* c_1 (\mathcal{E}_{-1/2})=c_1 (H \times_{\eta_{-1/2} |_L} \mathbb{C} )$. Hence, by Proposition~\ref{proposition:A.7} and Theorem~\ref{theorem:A.13}, we conclude
\begin{equation}\label{eq:A.26}
[\zeta] = c_1 (\mathcal{E}_{-1/2}) \in H^2 (H/K ; \mathbb{Z}).
\end{equation}

Since $(\eta_s)_* (J^3) = -si = -2s (\eta_{-\frac{1}{2}})_* (J^3)$ (cf. Eq.~(\ref{eq:2.10})), Theorem \ref{theorem:A.6} and Eq.~(\ref{eq:A.26}) assert that, letting $\Omega'$ be an arbitrarily chosen connection 2-form on the principal bundle $H \rightarrow H/K$,

\begin{align*}
c_1 (\mathcal{E}_{s} ) = \Big[ \big(\frac{i}{2 \pi} (\eta_s)_* \Omega' \big)^\flat \Big] = -2s \Big[ \big( \frac{i}{2 \pi} (\eta_{- \frac{1}{2}} )_* \Omega' \big)^\flat \Big] \\
 = -2s \hspace{0.1cm} c_1 ( \mathcal{E}_{-1/2} ) = -2s [\zeta] \in H^2 (H/K ; \mathbb{Z}).
\end{align*} 
\end{proof}

With a little more Algebraic Topology, we obtain a classification of the line bundles over $H/K \cong X_0 ^\pm$.

\begin{theorem}\label{theorem:A.14}
The line bundles $\mathcal{E}_s = H \times_{\eta_s} \mathbb{C}$ are pairwise nonisomorphic and exhaust all isomorphism classes of the line bundles over $H/K \cong X_0 ^\pm$. In fact, the Chern class map
\begin{equation*}
c_1 : \textup{Line}(X_0 ^\pm) \rightarrow H^2 (X_0 ^\pm ,\mathbb{Z})
\end{equation*}
is a group isomorphism if the addition in $\textup{Line} (X_0 ^\pm)$ is given by the tensor product operation.
\end{theorem}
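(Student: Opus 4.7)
The plan is to reduce the theorem to the standard classification of complex line bundles over a paracompact space by $H^2(-;\mathbb{Z})$ via the first Chern class, and then use Theorem~\ref{theorem:3.3} to show that the particular family $\{\mathcal{E}_s\}_{s \in \frac{1}{2}\mathbb{Z}}$ realizes every cohomology class.

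First I would invoke the well-known fact from algebraic topology that for a paracompact Hausdorff space $X$ the set of isomorphism classes of complex line bundles $\textup{Line}(X)$, made into an abelian group via the tensor product with identity the trivial bundle and inverse given by the dual bundle, is classified by homotopy classes of maps to the classifying space $BU(1) = \mathbb{CP}^\infty$. Since $\mathbb{CP}^\infty$ is an Eilenberg--MacLane space $K(\mathbb{Z},2)$, one has a natural group isomorphism $[X, BU(1)] \cong H^2(X;\mathbb{Z})$, and the explicit identification is precisely given by the first Chern class. Consequently, $c_1 : \textup{Line}(X_0^\pm) \to H^2(X_0^\pm;\mathbb{Z})$ is a group isomorphism, where the multiplicativity $c_1(L_1 \otimes L_2) = c_1(L_1) + c_1(L_2)$ is the usual splitting principle statement.

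Next, I would combine this general fact with Theorem~\ref{theorem:3.3}. Since $X_0^\pm \cong \mathbb{R}^3 \setminus \{0\}$ is homotopy equivalent to $S^2$, the cohomology group $H^2(X_0^\pm;\mathbb{Z})$ is infinite cyclic, generated by $[\zeta]$. By Theorem~\ref{theorem:3.3}, $c_1(\mathcal{E}_s) = -2s[\zeta]$, and as $s$ ranges over $\frac{1}{2}\mathbb{Z}$, the integer $-2s$ ranges over all of $\mathbb{Z}$. Therefore the map $s \mapsto c_1(\mathcal{E}_s)$ is a bijection from $\frac{1}{2}\mathbb{Z}$ onto $H^2(X_0^\pm;\mathbb{Z})$. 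Injectivity of $c_1$ now immediately yields that the $\mathcal{E}_s$ are pairwise non-isomorphic, and surjectivity of $c_1$ together with the above bijection yields that every line bundle over $X_0^\pm$ is isomorphic to some $\mathcal{E}_s$.

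There is essentially no hard step in this argument; the substantive content is already packaged in Theorem~\ref{theorem:3.3}, whose proof occupies the bulk of Appendix~\ref{sec:A}. The only thing to be mildly careful about is citing the classification of complex line bundles in a form appropriate to the smooth (or at least paracompact Hausdorff) category, which applies without issue since $X_0^\pm$ is a smooth manifold. If one wished to avoid citing the full classifying space statement, one could alternatively argue directly: surjectivity of $c_1$ onto $H^2(X_0^\pm;\mathbb{Z}) \cong \mathbb{Z}$ is already established by exhibiting the $\mathcal{E}_s$, and injectivity can be verified using the exponential sheaf sequence on $X_0^\pm$, which via its long exact cohomology sequence identifies $\textup{Pic}(X_0^\pm)$ with $H^2(X_0^\pm;\mathbb{Z})$ since $H^1(X_0^\pm;\mathcal{O}) = H^2(X_0^\pm;\mathcal{O}) = 0$ by the contractibility of the additive sheaf of smooth $\mathbb{C}$-valued functions in the $C^\infty$ setting. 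Either route closes the proof.
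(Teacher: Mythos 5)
Your proposal is correct and follows essentially the same route as the paper: the paper's proof likewise cites the multiplicativity of $c_1$ (the splitting formula) together with the identification $BU(1) = \mathbb{CP}^\infty = K(\mathbb{Z},2)$ to get that $c_1$ is a group isomorphism, and then the exhaustion follows from Theorem~\ref{theorem:3.3} since $-2s$ runs over all of $\mathbb{Z}$ as $s$ runs over $\tfrac{1}{2}\mathbb{Z}$. The alternative closing argument you sketch via the exponential sheaf sequence is a valid substitute but is not what the paper uses.
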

\begin{proof}
This follows from the splitting formula for Chern classes (cf. \cite{tu}) and the fact that the classifying space $\mathbb{CP}^\infty$ for complex line bundles is simultaneously the Eilenberg-Maclane space $K(\mathbb{Z}, 2)$. For more details on these concepts, see \cite{switzer}. 
\end{proof}

\section{A rigorous discussion on the parity inversion symmetry}\label{sec:B}

Theorem~\ref{theorem:3.5} is a well-known fact. However, we couldn't have found an easily accessible source that contains a rigorous proof of this fact. So, we record one here. To do that, we first need to define the enlarged symmetry group $\tilde{G}$ (cf. Sect.~\ref{sec:3.2}).

Embed $SL(2, \mathbb{C})$ into $GL(4, \mathbb{C})$ via a Lie group homomorphism $\Phi :SL(2, \mathbb{C}) \hookrightarrow GL(4, \mathbb{C})$ given by
\begin{equation}\label{eq:B.1}
\Phi(\Lambda) = \begin{pmatrix} \Lambda & 0 \\ 0 & \Lambda^{\dagger -1} \end{pmatrix}.
\end{equation}

Define
\begin{equation}\label{eq:B.2}
\tilde{H}:= SL(2, \mathbb{C}) \hspace{0.05cm} \amalg \hspace{0.05cm} \tilde{\mathcal{P}} \hspace{0.1cm} SL(2,\mathbb{C}) \leq GL(4, \mathbb{C})
\end{equation}
where
\begin{equation}\label{eq:B.3}
\tilde{\mathcal{P}} := \begin{pmatrix} 0 & I_2 \\ I_2 & 0 \end{pmatrix} \in GL (4, \mathbb{C}).
\end{equation}

$\tilde{H}$ is a closed subgroup of $GL(4, \mathbb{C})$ consisting of two simply connected components diffeomorphic to $SL(2, \mathbb{C})$ and doubly covers the two-component group $O^\uparrow (1,3) := SO^{\uparrow} (1,3) \hspace{0.05cm} \amalg \hspace{0.05cm} \mathcal{P} \hspace{0.1cm} SO^{\uparrow} (1,3) \leq O(1,3)$ (cf. \cite{lee2022b}~Theorem~2.8). To write out the covering map explicitly, we need to introduce some notations.

For $ x \in \mathbb{R}^4$, we consider a linear embedding $\gamma : \mathbb{R}^4 \rightarrow \mathfrak{gl}(4, \mathbb{C})$ defined by
\begin{equation}\label{eq:B.4}
\gamma(x) := \begin{pmatrix} 0 & \tilde{x} \\ \utilde{x} & 0 \end{pmatrix} \in \mathfrak{gl} (4, \mathbb{C}).
\end{equation}

Note that $\gamma^\mu := \gamma( \eta^{\mu \nu} e_\nu) = \begin{pmatrix} 0 & \tau_\mu \\ \tau^\mu & 0\end{pmatrix}$ is the Weyl representation of the Dirac matrices (cf. \cite{bleecker, folland2008}). (Here $( e_0 , e_1, e_2, e_3) = \big( (1,0,0,0), (0,1,0,0) , (0, 0, 1, 0 ) , (0, 0, 0, 1) \big)$ is the standard basis of $\mathbb{R}^4$.)

Notice that the covering map $\kappa : SL(2,\mathbb{C}) \rightarrow SO^\uparrow (1,3)$ defined in Eq.~(\ref{eq:1.3}) satisfies
\begin{equation*}
\gamma (\kappa (\Lambda) x ) = \begin{pmatrix} 0 & \Lambda \tilde{x} \Lambda^\dagger \\ \Lambda^{\dagger -1} \utilde{x} \Lambda^{-1} & 0 \end{pmatrix} = \Phi(\Lambda) \gamma (x) \Phi(\Lambda)^{-1}
\end{equation*}
for $\Lambda \in SL(2, \mathbb{C})$ and $ x \in \mathbb{R}^4$.

Therefore, the map $\tilde{\kappa} : \tilde{H} \rightarrow O^\uparrow (1,3)$ defined by
\begin{equation}\label{eq:B.5}
\gamma \big( \tilde{\kappa} (A) x \big) := A \gamma (x)  A^{-1}  \quad A \in \tilde{H} , \hspace{0.1cm}  x \in \mathbb{R}^4
\end{equation}
extends the map $\kappa$,\footnote{Here, we are identifying $SL(2, \mathbb{C})$ as a subgroup of $\tilde{H}$ via the embedding $\Phi$. So, more precisely, this means $\tilde{\kappa}\circ \Phi = \kappa$.} is a Lie group homomorphism, and $\tilde{\kappa} (\tilde{\mathcal{P}}) = \mathcal{P}$ since
\begin{equation*}
\gamma( \tilde{\kappa} ( \tilde{\mathcal{P}})  x ) = \tilde{\mathcal{P}} \begin{pmatrix} 0 & \tilde{x} \\ \utilde{x} & 0 \end{pmatrix} \tilde{\mathcal{P}}^{-1} = \begin{pmatrix} 0 & \utilde{x} \\ \tilde{x} & 0 \end{pmatrix} = \begin{pmatrix} 0 & (\mathcal{P} x )^\sim \\ (\mathcal{P} x )_{\sim} & 0 \end{pmatrix} = \gamma ( \mathcal{P} x ).
\end{equation*}
(Recall that $\mathcal{P} := \text{diag} ( 1, -1, -1, -1) \in O(1,3)$.)

From the fact that $\kappa$ is a double covering homomorphism, it is easy to check that $\tilde{\kappa}$ is indeed a double covering homomorphism. This map gives an action of $\tilde{H}$ on $\mathbb{R}^4$ and we define a semidirect product $\tilde{G}:= \mathbb{R}^4 \ltimes \tilde{H}$ via this action. (Hence, for example, $(0, \tilde{\mathcal{P}}) (a, A)  = (\mathcal{P}a , \tilde{\mathcal{P}} A)$ in $\tilde{G}$ for $a \in \mathbb{R}^4$, $A \in \tilde{H}$.)

Obviously, we have a natural Lie group embedding $1 \times \Phi : G \hookrightarrow \tilde{G}$ given by $(a, \Lambda) \mapsto (a, \Phi(\Lambda))$. We have a decomposition into connected components
\begin{equation}\label{eq:B.6}
\tilde{G} = G \hspace{0.05cm} \amalg \hspace{0.05cm}  (0, \tilde{\mathcal{P}}) \cdot G.
\end{equation}

To prove Theorem~\ref{theorem:3.5}, we need some preparation.

\begin{lemma}\label{lemma:B.1}
Recall that we write $\Lambda x := \kappa(\Lambda)x $ for $\Lambda \in SL(2,\mathbb{C})$, $ x \in \mathbb{R}^4$. Then, we have
\begin{equation}\label{eq:B.7}
\mathcal{P} \Lambda x = \Lambda^{\dagger -1} \mathcal{P} x.
\end{equation}
\end{lemma}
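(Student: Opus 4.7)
The proof should amount to a direct verification using the defining Eqs.~(\ref{eq:1.2})--(\ref{eq:1.3}), together with the key observation that the parity inversion matrix $\mathcal{P} = \textup{diag}(1,-1,-1,-1)$ interchanges the two representations $\tilde{x}$ and $\utilde{x}$ of a four-vector. So the plan is first to establish this interchange property as a lemma-within-the-proof, then apply the two halves of Eq.~(\ref{eq:1.3}) to the two sides of Eq.~(\ref{eq:B.7}), and finally read off the equality.

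More concretely, the first step is to observe that for any $x \in \mathbb{R}^4$, one has
\begin{equation*}
(\mathcal{P} x)^\sim = \utilde{x}, \qquad (\mathcal{P} x)_\sim = \tilde{x},
\end{equation*}
which follows immediately from the definitions (\ref{eq:1.2}) once one notes $x_0 = x^0$ and $x_j = -x^j$ for $j=1,2,3$, so that flipping the spatial components of $x$ turns the matrix $\tilde{x}$ into the matrix $\utilde{x}$.

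The second step is to compute both sides of Eq.~(\ref{eq:B.7}) via their $(\hspace{0.1cm}\cdot\hspace{0.1cm})^\sim$-representation. For the left-hand side, using the step-one identity followed by Eq.~(\ref{eq:1.3}b),
\begin{equation*}
(\mathcal{P}\Lambda x)^\sim = (\Lambda x)_\sim = \Lambda^{\dagger -1}\, \utilde{x}\, \Lambda^{-1}.
\end{equation*}
For the right-hand side, applying Eq.~(\ref{eq:1.3}a) with $\Lambda^{\dagger -1}$ in place of $\Lambda$ to the vector $\mathcal{P} x$, and then invoking step one,
\begin{equation*}
(\Lambda^{\dagger -1} \mathcal{P} x)^\sim = \Lambda^{\dagger -1} (\mathcal{P} x)^\sim (\Lambda^{\dagger -1})^\dagger = \Lambda^{\dagger -1}\, \utilde{x}\, \Lambda^{-1}.
\end{equation*}
Since the map $x \mapsto \tilde{x}$ is a linear isomorphism of $\mathbb{R}^4$ onto the space of $2 \times 2$ Hermitian matrices, the two four-vectors agree, giving Eq.~(\ref{eq:B.7}).

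There is no real obstacle here: the content of the lemma is simply that parity intertwines the two inequivalent two-dimensional representations of $SL(2,\mathbb{C})$ (the defining representation $\Lambda$ and its dual $\Lambda^{\dagger -1}$), and this intertwining is built into the Weyl-matrix formalism of Eq.~(\ref{eq:B.4}). One minor care-point is to keep the indices straight when raising/lowering via $\eta$, but once the step-one identity is recorded, the rest is a one-line manipulation.
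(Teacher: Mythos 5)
Your proof is correct, but it takes a different route from the paper's. The paper proves Eq.~(\ref{eq:B.7}) by working inside the enlarged group $\tilde{H}$: it writes $\mathcal{P}\Lambda x = \tilde{\kappa}(\tilde{\mathcal{P}})\kappa(\Lambda)x$, uses the block-matrix identity $\tilde{\mathcal{P}}\,\Phi(\Lambda) = \Phi(\Lambda^{\dagger-1})\,\tilde{\mathcal{P}}$ together with the fact that $\tilde{\kappa}$ is a homomorphism extending $\kappa$ with $\tilde{\kappa}(\tilde{\mathcal{P}})=\mathcal{P}$, and reads off the result. You instead verify the identity directly in the $2\times 2$ formalism of Eqs.~(\ref{eq:1.2})--(\ref{eq:1.3}), via the observation $(\mathcal{P}x)^\sim = \utilde{x}$ and $(\mathcal{P}x)_\sim = \tilde{x}$ and the injectivity of $x \mapsto \tilde{x}$. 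The two arguments rest on the same underlying fact --- indeed the paper's verification that $\tilde{\kappa}(\tilde{\mathcal{P}})=\mathcal{P}$ (just before the lemma) is exactly your step-one identity, packaged into the $4\times 4$ Weyl matrices $\gamma(x)$. What your version buys is self-containedness: it needs none of the Appendix~B machinery ($\tilde{H}$, $\tilde{\mathcal{P}}$, $\gamma$, $\tilde{\kappa}$) and could have been stated and proved already in Sect.~\ref{sec:1}. What the paper's version buys is economy in context: since $\tilde{\kappa}$ has already been constructed as a homomorphism on $\tilde{H}$, the lemma becomes a one-line conjugation computation, and the same mechanism is reused throughout the rest of Appendix~B. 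All steps in your argument check out, including the appeal to Eq.~(\ref{eq:1.3}a) with $\Lambda^{\dagger-1}$ in place of $\Lambda$ (legitimate since $\Lambda^{\dagger-1}\in SL(2,\mathbb{C})$).
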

\begin{proof}
\begin{align*}
\mathcal{P} \Lambda x = \tilde{\kappa} (\tilde{\mathcal{P}}) \kappa (\Lambda) x = \tilde{\kappa} \left( \tilde{\mathcal{P}} \begin{pmatrix} \Lambda & 0 \\ 0 & \Lambda^{\dagger -1} \end{pmatrix} \right)x
= \tilde{\kappa} \left( \begin{pmatrix} \Lambda^{\dagger -1} & 0 \\ 0 & \Lambda \end{pmatrix} \tilde{\mathcal{P}} \right) x  \\
= \kappa (\Lambda^{\dagger -1} ) \tilde{\kappa} (\tilde{\mathcal{P}}) x = \Lambda^{\dagger -1} \mathcal{P} x.
\end{align*} 
\end{proof}

\begin{lemma}\label{lemma:B.2}
The map $\delta: G \rightarrow G$ given by
\begin{equation}\label{eq:B.8}
\delta( a, \Lambda) = (\mathcal{P} a , \Lambda^{\dagger -1})
\end{equation}
is a Lie group automorphism such that $\delta \circ \delta = \textup{Id}_G$. Thus, if $\pi$ is an irreducible unitary representation of $G$, then so is $\pi \circ \delta$.
\end{lemma}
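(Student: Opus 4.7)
The plan is straightforward: verify the homomorphism property using Lemma~\ref{lemma:B.1}, check the involution identity $\delta \circ \delta = \textup{Id}_G$ directly, note smoothness from the explicit formula, and then transfer irreducibility through the automorphism.

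First I would spell out the semidirect product multiplication on $G = \mathbb{R}^4 \ltimes SL(2,\mathbb{C})$, namely $(a,\Lambda)(b,M) = (a + \Lambda b, \Lambda M)$, and compute both sides of the would-be homomorphism identity. On the one hand,
\begin{equation*}
\delta\bigl((a,\Lambda)(b,M)\bigr) = (\mathcal{P}a + \mathcal{P}\Lambda b,\ \Lambda^{\dagger -1} M^{\dagger -1}),
\end{equation*}
while on the other,
\begin{equation*}
\delta(a,\Lambda)\,\delta(b,M) = (\mathcal{P}a + \Lambda^{\dagger -1}\mathcal{P}b,\ \Lambda^{\dagger -1} M^{\dagger -1}).
\end{equation*}
The two expressions agree exactly because of Lemma~\ref{lemma:B.1}, which gives $\mathcal{P}\Lambda b = \Lambda^{\dagger -1}\mathcal{P}b$. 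This is really the only content of the homomorphism property; the $SL(2,\mathbb{C})$-component matches trivially since $M \mapsto M^{\dagger -1}$ is an anti-involution-style map whose reversal is absorbed by the order of composition.

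Next I would verify $\delta\circ\delta = \textup{Id}_G$ by direct computation: $\delta(\mathcal{P}a, \Lambda^{\dagger -1}) = (\mathcal{P}^2 a, (\Lambda^{\dagger -1})^{\dagger -1}) = (a,\Lambda)$, using $\mathcal{P}^2 = I_4$ and $(\Lambda^{\dagger -1})^{\dagger -1} = \Lambda$. This immediately gives bijectivity of $\delta$, and smoothness of both $\delta$ and $\delta^{-1} = \delta$ is visible from the formula (the map $\Lambda \mapsto \Lambda^{\dagger -1}$ is smooth on $SL(2,\mathbb{C})$, and $\mathcal{P}$ acts linearly on $\mathbb{R}^4$). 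Hence $\delta$ is a Lie group automorphism.

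For the final claim, if $\pi : G \to U(\mathcal{H})$ is a unitary representation, then $\pi\circ\delta : G \to U(\mathcal{H})$ is again a unitary representation because composition of a homomorphism with an automorphism is a homomorphism, and $\pi(\delta(g))$ is unitary whenever $\pi(\delta(g))$ is. Irreducibility is preserved: a closed subspace $V \leq \mathcal{H}$ is $\pi\circ\delta$-invariant if and only if it is $\pi$-invariant, since $\delta$ is a bijection of $G$ onto itself; thus $\pi\circ\delta$ has no nontrivial closed invariant subspaces precisely when $\pi$ does not.

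The only substantive step is the homomorphism check, and its entire nontriviality is concentrated in the intertwining identity of Lemma~\ref{lemma:B.1}; everything else is formal. I do not anticipate any obstacle beyond carefully tracking where the dagger-inverse appears.
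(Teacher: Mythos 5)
Your proof is correct and takes essentially the same route as the paper, whose entire proof is the one-line remark that the result ``follows from Lemma~\ref{lemma:B.1} and the definition of semidirect product''; you have simply written out exactly those details (the homomorphism check reducing to $\mathcal{P}\Lambda b = \Lambda^{\dagger-1}\mathcal{P}b$, the involution, and the transfer of irreducibility). The only blemish is the tautological sentence ``$\pi(\delta(g))$ is unitary whenever $\pi(\delta(g))$ is,'' which you presumably meant to read ``whenever $\pi(h)$ is unitary for every $h \in G$.''
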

\begin{proof}
The proof follows from Lemma~\ref{lemma:B.1} and the definition of semidirect product. 
\end{proof}

Note that the parity inversion $\mathcal{P}$ preserves all the $SO^{\uparrow} (1,3)$-orbits in $\mathbb{R}^4$ listed in \cite{lee2022b}~Proposition~4.3. So, we can apply the same technique as in \cite{lee2022b}~Sect.~4 to obtain all irreducible unitary representations of the group $\tilde{G}$. However, this would take us too far afield and therefore we content ourselves with the following theorem which is sufficient to produce all the single-particle state spaces with parity inversion symmetry that will be discussed in the rest of the paper.

\begin{theorem}\label{theorem:B.3}
Let $\pi : G \rightarrow U(\mathcal{H})$ be an irreducible unitary representation of $G$ such that $\pi \circ \delta$ is not unitarily equivalent to $\pi$. Then, the representation $\tilde{\pi} : \tilde{G} \rightarrow U(\mathcal{H} \oplus \mathcal{H})$ defined by
\begin{equation}\label{eq:B.9}
\tilde{\pi} (a, \Lambda) := [\pi \oplus (\pi \circ \delta) ] (a, \Lambda) = \pi(a, \Lambda) \oplus \pi( \mathcal{P} a, \Lambda^{\dagger -1}) \quad \forall (a, \Lambda) \in G
\end{equation}
and
\begin{equation}\label{eq:B.10}
\tilde{\pi} ( (0, \tilde{\mathcal{P}}) (a, \Lambda) ) := \begin{pmatrix} 0 & I_{\mathcal{H}} \\ I_{\mathcal{H}} & 0 \end{pmatrix} \tilde{\pi} (a, \Lambda) \quad \forall (a, \Lambda) \in G
\end{equation}
is an irreducible representation of $\tilde{G}$.
\end{theorem}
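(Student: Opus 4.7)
The plan is to exploit the coset decomposition $\tilde{G} = G \amalg (0, \tilde{\mathcal{P}})\,G$ of Eq.~(\ref{eq:B.6}), which splits the problem into two tasks: (i) verify that the prescription (\ref{eq:B.9})--(\ref{eq:B.10}) defines a unitary group homomorphism, and (ii) establish irreducibility via Schur's lemma.

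For (i), the essential preliminary is to identify $\delta$ with the automorphism of $G$ induced by conjugation by $(0,\tilde{\mathcal{P}})$ inside $\tilde{G}$. I would establish this by combining Lemma~\ref{lemma:B.1} with the elementary matrix identity $\tilde{\mathcal{P}}\,\Phi(\Lambda)\,\tilde{\mathcal{P}}^{-1} = \Phi(\Lambda^{\dagger-1})$, which is immediate from the block forms of $\tilde{\mathcal{P}}$ and $\Phi$ and the relation $\tilde{\mathcal{P}}^{2} = I$. Setting $J := \begin{pmatrix} 0 & I_{\mathcal{H}} \\ I_{\mathcal{H}} & 0 \end{pmatrix}$ and using $\delta \circ \delta = \mathrm{Id}_G$ (Lemma~\ref{lemma:B.2}), a direct block computation then gives the intertwining relation
\[
    \tilde{\pi}(g)\,J \;=\; J\,\tilde{\pi}(\delta(g)) \qquad \forall\, g \in G.
\]
With this identity in hand, the four compositional cases -- products of the form $g_1 g_2$, $\bigl((0,\tilde{\mathcal{P}})g_1\bigr)\,g_2$, $g_1\bigl((0,\tilde{\mathcal{P}})g_2\bigr)$, and $\bigl((0,\tilde{\mathcal{P}})g_1\bigr)\bigl((0,\tilde{\mathcal{P}})g_2\bigr)$ -- each reduce to a one-line verification using $(0,\tilde{\mathcal{P}})^{2} = e$. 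Unitarity is automatic since $\pi$ is unitary, $\delta$ is a Lie group automorphism (Lemma~\ref{lemma:B.2}), and $J$ is self-adjoint with $J^{2}=I$.

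For (ii), suppose $T \in \mathcal{B}(\mathcal{H}\oplus\mathcal{H})$ commutes with every $\tilde{\pi}(\tilde{g})$, and write $T = \begin{pmatrix} A & B \\ C & D \end{pmatrix}$. Commutation with $\tilde{\pi}(g) = \pi(g) \oplus (\pi\circ\delta)(g)$ for every $g \in G$ yields, block by block, that $A$ commutes with $\pi$, $D$ commutes with $\pi\circ\delta$, and that $B$, $C$ intertwine $\pi$ with $\pi\circ\delta$ in opposite directions. Since $\pi$ and $\pi\circ\delta$ are both irreducible (Lemma~\ref{lemma:B.2}) and mutually inequivalent by hypothesis, Schur's lemma forces $A = \alpha I$, $D = \beta I$, and $B = C = 0$. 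Finally, commutation of $T = \alpha I \oplus \beta I$ with $J = \tilde{\pi}(0,\tilde{\mathcal{P}})$ forces $\alpha = \beta$, so $T$ is a scalar and $\tilde{\pi}$ is irreducible.

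I expect the only real obstacle to be the careful bookkeeping of the conjugation identity $(0,\tilde{\mathcal{P}})^{-1}\,(a,\Phi(\Lambda))\,(0,\tilde{\mathcal{P}}) = (\mathcal{P}a, \Phi(\Lambda^{\dagger-1}))$; once the induced operator relation $\tilde{\pi}(g)J = J\tilde{\pi}(\delta(g))$ is secured, both the homomorphism check and the Schur argument are purely mechanical.
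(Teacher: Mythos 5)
Your proposal is correct and follows essentially the same route as the paper's own proof: the key intertwining relation $\tilde{\pi}(g)\,J = J\,\tilde{\pi}(\delta(g))$ (with $J = \tilde{\pi}(0,\tilde{\mathcal{P}})$) drives the homomorphism check, and irreducibility is obtained by the same Schur's-lemma block analysis using the hypothesis $\pi \ncong \pi\circ\delta$. The only differences are cosmetic orderings (the paper first extracts $T_{11}=T_{22}$, $T_{12}=T_{21}$ from commutation with $J$ and then applies Schur, whereas you apply Schur first and use $J$ last), which changes nothing of substance.
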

\begin{proof}
For each $x \in \tilde{G}$, $\tilde{\pi}(x)$ is a unitary operator on $\mathcal{H} \oplus \mathcal{H}$ since $\pi$ is a unitary representation. $\tilde{\pi}$ is strongly continuous due to the strong continuity of $\pi$. It is a group homomorphism on the subgroup $G$ by Lemma~\ref{lemma:B.2}. To see that $\tilde{\pi}$ is a group homomorphism on the entire group $\tilde{G}$, it is enough to observe the followings: For $ (a,A) , (b,B) \in G$,
\begin{align*}
\tilde{\pi} \left( (a,A) (0,\tilde{\mathcal{P}}) (b,B)  \right) &=\tilde{\pi} \left( a + \mathcal{P} A^{\dagger -1} b , \Phi (A) \tilde{\mathcal{P}} \Phi(B) \right) \\
&= \tilde{\pi} \left( \mathcal{P} (\mathcal{P} a + A^{\dagger -1} b) , \tilde{\mathcal{P}} \Phi(A^{\dagger -1}) \Phi(B)  \right)  \\
&= \tilde{\pi} \left( (0, \tilde{\mathcal{P}}) (\mathcal{P} a + A^{\dagger -1}b ,  A^{\dagger -1} B  \right) \\
&= \begin{pmatrix} 0 & I_{\mathcal{H}} \\ I_{\mathcal{H}} & 0 \end{pmatrix} \tilde{\pi}  ( \mathcal{P}a , A^{\dagger -1} ) \tilde{\pi} (b,B) \\
&= \begin{pmatrix} 0 & I_{\mathcal{H}} \\ I_{\mathcal{H}} & 0 \end{pmatrix} \left[ \pi ( \mathcal{P}a , A^{\dagger -1} ) \oplus \pi ( a, A) \right] \tilde{\pi} (b,B)  \\
&= \left[ \pi(a,A) \oplus \pi ( \mathcal{P}a , A^{\dagger -1} )  \right] \begin{pmatrix} 0 & I_{\mathcal{H}} \\ I_{\mathcal{H}} & 0 \end{pmatrix} \tilde{\pi} (b,B) \\
&= \tilde{\pi} (a,A) \tilde{\pi} \left( ( 0, \tilde{\mathcal{P}}) (b,B) \right)
\end{align*}

and
\begin{align*}
\tilde{\pi} \left( (0, \tilde{\mathcal{P}}) (a,A) (0,\tilde{\mathcal{P}}) (b,B)  \right) &=\tilde{\pi} \left( \mathcal{P} a +  A^{\dagger -1} b , A^{\dagger -1}B \right) \\
&= \tilde{\pi} (\mathcal{P}a , A^{\dagger -1} ) \tilde{\pi} (b, B)  \\
&= \left[\pi(\mathcal{P}a , A^{\dagger -1} ) \oplus \pi(a, A) \right] \tilde{\pi} (b,B) \\
&= \begin{pmatrix} 0 & I_{\mathcal{H}} \\ I_{\mathcal{H}} & 0 \end{pmatrix} \tilde{\pi}  ( a , A ) \begin{pmatrix} 0 & I_{\mathcal{H}} \\ I_{\mathcal{H}} & 0 \end{pmatrix} \tilde{\pi} (b,B) \\
&= \tilde{\pi} \left( (0, \tilde{\mathcal{P}}) (a,A) \right) \tilde{\pi} \left( (0, \tilde{\mathcal{P}}) (b,B) \right).
\end{align*}

Let $T = \begin{pmatrix} T_{11} & T_{12} \\ T_{21} & T_{22} \end{pmatrix} $ be a bounded operator on $\mathcal{H} \oplus \mathcal{H}$ which commutes with $\tilde{\pi} (x)$ for all $x \in \tilde{G}$. So, a posteriori, $T$ commutes with $\tilde{\pi} (0, \tilde{\mathcal{P}}) = \begin{pmatrix} 0 & I_{\mathcal{H}} \\ I_{\mathcal{H}} & 0 \end{pmatrix}$. Hence, we have $T_{11} = T_{22}$ and $T_{12} = T_{21}$, i.e., $T = \begin{pmatrix} T_{11} & T_{12} \\ T_{12} & T_{11} \end{pmatrix}$.

From the fact that $T  \tilde{\pi} (a,A)  = \tilde{\pi} (a,A) T$ for all $(a, A) \in G$, we get
\begin{equation*}
T_{11} \pi(a,A) = \pi (a,A) T_{11} \text{ and } T_{12} [\pi \circ \delta (a,A)] = [\pi ( a, A)] T_{12}, \quad \forall (a,A) \in G.
\end{equation*}

Since $\pi$ and $\pi \circ \delta$ are two inequivalent irreducible unitary representations of $G$ by assumption, we see that these two identities imply $T_{12} = 0$ and $T_{11} = \lambda I_\mathcal{H}$ for some $\lambda \in \mathbb{C}$ by Schur's lemma (cf. \cite{folland2015}).

Therefore, we have $T = \lambda I_{\mathcal{H} \oplus \mathcal{H}}$, and hence, again by Schur's lemma, we conclude that $\tilde{\pi}$ is an irreducible unitary representation.
\end{proof}

Denote $\overline{p}_0 ^\pm := \mathcal{P} p_0 ^\pm \in X_0 ^\pm$. We will go through the same process as in Sect.~\ref{sec:2} to get a list of irreducible representations associated with the orbit $X_0 ^\pm$, but, this time with a different representative point $\overline{p}_0 ^\pm$.

\begin{proposition}\label{proposition:B.4}
The little group for $\overline{p}_0 ^\pm$ is
\begin{equation}\label{eq:B.11}
\overline{K}:=H_{\overline{p}_0 ^\pm} = \left\{ \begin{pmatrix} z & 0 \\ b & \overline{z} \end{pmatrix} : z \in \mathbb{T} , b \in \mathbb{C} \right\}
\end{equation}
and for each $s \in \frac{1}{2} \mathbb{Z}$, the map $\overline{\eta_s} : \overline{K} \rightarrow U(1)$ defined by
\begin{equation}\label{eq:B.12}
\overline{\eta}_s \begin{pmatrix} z & 0 \\ b & \overline{z} \end{pmatrix} = z^{2s}
\end{equation}
is an irreducible representation of the group $\overline{K}$.
\end{proposition}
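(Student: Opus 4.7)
The plan is to exploit the relation $\overline{p}_0^{\pm} = \mathcal{P} p_0^{\pm}$ together with Lemma~\ref{lemma:B.1}. The latter can be rewritten as $\mathcal{P}\Lambda\mathcal{P} = \Lambda^{\dagger -1}$ (as operators on $\mathbb{R}^4$, using $\mathcal{P}^2 = I$), so $\Lambda \in H$ fixes $\overline{p}_0^{\pm}$ if and only if $\Lambda^{\dagger -1}$ fixes $p_0^{\pm}$. Consequently
\[
\overline{K} = H_{\overline{p}_0^{\pm}} = \{\Lambda^{\dagger -1} : \Lambda \in K\},
\]
and since $K$ is already described in Proposition~\ref{proposition:2.1}, the first claim reduces to the one-line calculation
\[
\begin{pmatrix} z & b \\ 0 & \overline{z} \end{pmatrix}^{\dagger -1} = \begin{pmatrix} z & 0 \\ -\overline{b} & \overline{z} \end{pmatrix},
\]
which (using $z \in \mathbb{T}$ so that $\det = 1$) shows that $\Lambda \mapsto \Lambda^{\dagger -1}$ is a bijection from $K$ onto the lower-triangular group displayed in Eq.~(\ref{eq:B.11}).

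For the second statement, since the target $U(1)$ is one-dimensional, it suffices to verify that $\overline{\eta}_s$ is a well-defined continuous group homomorphism; irreducibility will then follow automatically. A direct product computation
\[
\begin{pmatrix} z_1 & 0 \\ b_1 & \overline{z_1} \end{pmatrix}\begin{pmatrix} z_2 & 0 \\ b_2 & \overline{z_2} \end{pmatrix} = \begin{pmatrix} z_1 z_2 & 0 \\ b_1 z_2 + \overline{z_1} b_2 & \overline{z_1 z_2} \end{pmatrix}
\]
shows that the upper-left entry is multiplicative in the factor $z \in \mathbb{T}$. Combined with the fact that $2s \in \mathbb{Z}$ (so that $z \mapsto z^{2s}$ is a well-defined continuous character of $\mathbb{T}$), this yields the homomorphism property $\overline{\eta}_s(M_1 M_2) = \overline{\eta}_s(M_1)\overline{\eta}_s(M_2)$, completing the argument.

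No step here presents a genuine obstacle: the proposition is essentially the mirror image of Proposition~\ref{proposition:2.1} under the involution $\Lambda \mapsto \Lambda^{\dagger -1}$ induced by parity inversion. The only mild subtlety is that $\tilde{\mathcal{P}}$ does not lie in $SL(2,\mathbb{C})$, so one cannot literally conjugate $K$ inside $H$ to produce $\overline{K}$; Lemma~\ref{lemma:B.1} supplies the correct algebraic substitute, and the rest is bookkeeping.
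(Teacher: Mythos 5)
Your proof is correct. It does, however, take a slightly different route from the paper's: the paper simply says the proof is "exactly the same as in Sect.~2," i.e.\ one recomputes the stabilizer directly from the defining relation $\Lambda\,(\overline{p}_0^{\pm})^{\sim}\,\Lambda^{\dagger}=(\overline{p}_0^{\pm})^{\sim}$, where now $(\overline{p}_0^{\pm})^{\sim}=\begin{pmatrix}0&0\\0&\pm 2\end{pmatrix}$, which forces the upper-right entry to vanish and yields the lower-triangular group of Eq.~(\ref{eq:B.11}); the character $\overline{\eta}_s$ is then checked exactly as $\eta_s$ was. You instead transport the already-established description of $K$ from Proposition~\ref{proposition:2.1} through the involution $\Lambda\mapsto\Lambda^{\dagger-1}$, justified by Lemma~\ref{lemma:B.1} (which does appear before Proposition~\ref{proposition:B.4}, so the appeal is legitimate), and your identity $\overline{K}=\{\Lambda^{\dagger-1}:\Lambda\in K\}$ together with the computation $\begin{pmatrix} z & b \\ 0 & \overline{z}\end{pmatrix}^{\dagger-1}=\begin{pmatrix} z & 0 \\ -\overline{b} & \overline{z}\end{pmatrix}$ is exactly right. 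What your approach buys is that it makes explicit the structural fact $\overline{K}=K^{\dagger-1}$, which the paper only uses implicitly later (in Lemma~\ref{lemma:B.6}, where the map $[\Lambda,\lambda]\mapsto[\Lambda^{\dagger-1},\lambda]$ is shown to be well defined); what the paper's direct computation buys is independence from Lemma~\ref{lemma:B.1} and uniformity with the treatment of $p_0^{\pm}$ in Sect.~2. Your handling of the second claim — verifying the homomorphism property via the product formula for lower-triangular matrices and noting that irreducibility is automatic in dimension one — matches what the paper intends.
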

\begin{proof}
The proof is exactly the same as in Sect.~\ref{sec:2}
\end{proof}

Following \cite{lee2022b}~Remark~4.4, we form
\begin{equation}\label{eq:B.13}
\rho_{\overline{p}_0 ^\pm , \overline{\eta}_s} (a, \Lambda) := e^{-i \langle  \overline{p}_0 ^\pm  ,  a \rangle} \overline{\eta}_s (\Lambda), \quad (a, \Lambda) \in G_{\overline{p}_0 ^\pm},
\end{equation}
which yields an irreducible unitary representation of $G$ given by
\begin{equation}\label{eq:B.14}
\overline{\pi}_{0, s} ^\pm := \pi_{\overline{ p}_0 ^\pm, \overline{\eta}_s} = \textup{Ind}_{G_{\overline{p}_0 ^\pm}} ^G (\rho_{\overline{p}_0 ^\pm, \overline{\eta}_s} ).
\end{equation}

Denote the primitive bundles formed by $\overline{\eta}_s$ as $\overline{\mathcal{E}}_s$, i.e., $\overline{\mathcal{E}}_s := (H \times_{\overline{\eta}_s} \mathbb{C} , \overline{g} , \overline{\Lambda} )$ associated with $(H, \overline{\eta}_s)$ (cf. \cite{lee2022b}~Propoistion~A.5). The Hermitian metric $\overline{g}$ and the $G$-action $\overline{\Lambda}$ are given by
\begin{equation}\label{eq:B.15}
\overline{g} ( [ B, \zeta_1] , [B, \zeta_2] ) = \overline{\zeta_1 } \zeta_2
\end{equation}
for $B \in H$, $\zeta_1, \zeta_2 \in \mathbb{C}$, and
\begin{equation}\label{eq:B.16}
\overline{\Lambda} (a, A)[B, \zeta] = [AB , e^{-i \langle AB \overline{p}_0 ^\pm , a \rangle} \zeta],
\end{equation}
respectively (cf. \cite{lee2022b}~Lemma~A.9). It is an Hermitian $G$-bundle over $H/\overline{K}$. Note that $\mu$ given in Proposition~\ref{proposition:3.1} is also a $G$-invariant measure on $H/\overline{K} \cong X_0 ^\pm$. By \cite{lee2022b}~Lemma~A.10, we have the following proposition, which is an anlogue of Theorem~\ref{theorem:3.2}.

\begin{theorem}\label{theorem:B.5}
The irreducible representation $\overline{\pi}_{0,s} ^\pm$ is equivalent to the induced representation $\overline{u} : G \rightarrow U\big(L^2 (H/\overline{K}, \overline{\mathcal{E}}_{s}; \mu , g)\big)$ defined as, for $(a, A) \in G$ and $\psi \in L^2(H/K , \mathcal{E}_{\eta_s} ; \mu , g)$,
\begin{equation}\label{eq:B.17}
\overline{u} (a,A) \psi = \overline{\Lambda} (a,A) \circ \psi \circ l_A ^{-1}.
\end{equation} 
\end{theorem}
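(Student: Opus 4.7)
The argument is a direct replay of the proof of Theorem~\ref{theorem:3.2} (and ultimately of \cite{lee2022b}~Lemma~A.10) with the base point $p_0^\pm$ replaced by $\overline{p}_0^\pm=\mathcal{P}p_0^\pm$. The plan is as follows. First, I would verify the input hypotheses of \cite{lee2022b}~Lemma~A.10 for the representation $\overline{\eta}_s$ of the little group $\overline{K}=H_{\overline{p}_0^\pm}$ identified in Proposition~\ref{proposition:B.4}: namely, that $\overline{K}$ is a closed subgroup of $H$, that $\overline{\eta}_s:\overline{K}\to U(1)$ is continuous, and that the measure $\mu$ of Proposition~\ref{proposition:3.1} (which is $G$-invariant on the entire orbit $X_0^\pm$, independently of which base point one chooses) corresponds under $H/\overline{K}\cong X_0^\pm$, $A\overline{K}\mapsto A\overline{p}_0^\pm$, to a quasi-invariant measure on $H/\overline{K}$ with trivial Radon--Nikodym cocycle. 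This is exactly the content verified for $p_0^\pm$ in the massive/massless analyses preceding Theorem~\ref{theorem:3.2}, and the proof is unchanged since $\mathcal{P}$ is a Lorentz isometry.

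Next, I would unwind the definition of the primitive Hermitian $G$-bundle $\overline{\mathcal{E}}_s=H\times_{\overline{\eta}_s}\mathbb{C}$ (Eqs.~(\ref{eq:B.15})--(\ref{eq:B.16})) and note that the formula for the $G$-action $\overline{\Lambda}$ incorporates precisely the character $e^{-i\langle AB\overline{p}_0^\pm,a\rangle}$ that appears in the Mackey definition $\rho_{\overline{p}_0^\pm,\overline{\eta}_s}(a,\Lambda)=e^{-i\langle\overline{p}_0^\pm,a\rangle}\overline{\eta}_s(\Lambda)$ after being transported along the fiber via $B\mapsto AB$. Concretely, \cite{lee2022b}~Lemma~A.10 provides a canonical unitary isomorphism
\begin{equation*}
\textup{Ind}_{G_{\overline{p}_0^\pm}}^G(\rho_{\overline{p}_0^\pm,\overline{\eta}_s})\;\xrightarrow{\ \cong\ }\;L^2(H/\overline{K},\overline{\mathcal{E}}_s;\mu,\overline{g})
\end{equation*}
that intertwines the abstract induced representation on the left with the geometric representation $\overline{u}$ on the right, sending the standard Mackey representative $f:H\to\mathbb{C}$ (equivariant under $\rho_{\overline{p}_0^\pm,\overline{\eta}_s}$) to the section $H\overline{K}\ni A\overline{K}\mapsto[A,f(A)]$.

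Finally, I would check that under this identification the action $\overline{u}(a,A)\psi=\overline{\Lambda}(a,A)\circ\psi\circ l_A^{-1}$ of Eq.~(\ref{eq:B.17}) corresponds to the standard translation-by-$A^{-1}$ action on the Mackey side; this is a direct computation substituting Eq.~(\ref{eq:B.16}) into the defining formula and using the equivariance of $\psi$. Combined with $\overline{\pi}_{0,s}^\pm=\textup{Ind}_{G_{\overline{p}_0^\pm}}^G(\rho_{\overline{p}_0^\pm,\overline{\eta}_s})$ from Eq.~(\ref{eq:B.14}), this yields the desired equivalence. No step is substantive; the only potential subtlety is bookkeeping of the phase $e^{-i\langle AB\overline{p}_0^\pm,a\rangle}$ when $B$ varies over the $\overline{K}$-fiber, but this is precisely the reason the induced-representation phase is attached to $AB\overline{p}_0^\pm$ rather than to $\overline{p}_0^\pm$ in Eq.~(\ref{eq:B.16}), so the match is automatic.
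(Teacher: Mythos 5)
Your proposal is correct and follows essentially the same route as the paper, which simply invokes \cite{lee2022b}~Lemma~A.10 for the data $(\overline{p}_0^\pm, \overline{K}, \overline{\eta}_s)$ exactly as it was invoked for $(p_0^\pm, K, \eta_s)$ in Theorem~\ref{theorem:3.2}; your additional verification of the hypotheses and the phase bookkeeping is just an explicit unwinding of that citation.
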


\begin{lemma}\label{lemma:B.6}
There is an Hermitian bundle isomorphism $\tilde{\varepsilon}: \mathcal{E}_s  \rightarrow \overline{\mathcal{E}}_s$ that covers a diffeomorphism $\varepsilon: H/K \rightarrow H/\overline{K}$ given by
\begin{equation}\label{eq:B.18}
\begin{tikzcd}[baseline=(current  bounding  box.center), column sep=1.5em]
    \mathcal{E}_s \arrow ["{[\Lambda, \lambda] \mapsto [\Lambda^{\dagger -1}, \lambda] }"]{rrrr} \arrow{d} & & & & \overline{\mathcal{E}}_s \arrow{d}
\\
   H/K \arrow ["{\Lambda K \mapsto \Lambda^{\dagger -1} \overline{K} }"]{rrrr} & & & & H/\overline{K}
\end{tikzcd}.
\end{equation}
\end{lemma}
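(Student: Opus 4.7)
The statement is essentially a matter of unwinding the definitions: check well-definedness of the base map, then of the fibered map, then verify the isometry property. The whole thing hinges on one observation: the involution $\Lambda \mapsto \Lambda^{\dagger -1}$ on $H$ carries $K$ onto $\overline{K}$ in a way that is compatible with $\eta_s$ and $\overline{\eta}_s$.

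First I would verify that the map on bases is a diffeomorphism. For $B = \begin{pmatrix} z & b \\ 0 & \overline{z}\end{pmatrix} \in K$, using $\det B = z\overline{z} = 1$ one computes
\begin{equation*}
B^{\dagger -1} = (B^{-1})^\dagger = \begin{pmatrix} z & 0 \\ -\overline{b} & \overline{z}\end{pmatrix} \in \overline{K},
\end{equation*}
so $B \mapsto B^{\dagger -1}$ defines a smooth group isomorphism $K \xrightarrow{\cong} \overline{K}$. Consequently the assignment $\Lambda K \mapsto \Lambda^{\dagger -1}\overline{K}$ descends to a well-defined smooth map $\varepsilon : H/K \to H/\overline{K}$, and since $\Lambda \mapsto \Lambda^{\dagger -1}$ is an involution on $H$, the map $\varepsilon$ is a diffeomorphism with inverse of the same form.

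Next I would check well-definedness of $\tilde{\varepsilon}([\Lambda, \lambda]) := [\Lambda^{\dagger -1}, \lambda]$. Suppose $[\Lambda_1, \lambda_1] = [\Lambda_2, \lambda_2]$ in $\mathcal{E}_s$, so that $\Lambda_2 = \Lambda_1 B$ and $\lambda_2 = \eta_s(B)^{-1}\lambda_1$ for some $B \in K$. Setting $C := B^{\dagger -1} \in \overline{K}$ as above, we get $\Lambda_2^{\dagger -1} = \Lambda_1^{\dagger -1}\, C$, whence
\begin{equation*}
[\Lambda_2^{\dagger -1}, \lambda_2] = [\Lambda_1^{\dagger -1},\, \overline{\eta}_s(C)\lambda_2] \quad \text{in } \overline{\mathcal{E}}_s.
\end{equation*}
By Eqs.~(\ref{eq:2.7}) and (\ref{eq:B.12}), both $\eta_s(B)$ and $\overline{\eta}_s(C)$ equal $z^{2s}$, so $\overline{\eta}_s(C)\lambda_2 = \eta_s(B)\cdot \eta_s(B)^{-1}\lambda_1 = \lambda_1$, giving $[\Lambda_1^{\dagger -1},\lambda_1]$ as required. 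Smoothness and fiberwise $\mathbb{C}$-linearity of $\tilde{\varepsilon}$ are immediate from the formula, and a smooth inverse is constructed identically.

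Finally, comparing the Hermitian metrics Eq.~(\ref{eq:3.3}) for $g$ and Eq.~(\ref{eq:B.15}) for $\overline{g}$, both are given by $\overline{\zeta_1}\zeta_2$ on lifts; since $\tilde{\varepsilon}$ preserves the fiber coordinate $\lambda$, it is automatically a fiberwise isometry. The only step requiring any calculation is the identification $B^{\dagger -1} \in \overline{K}$ together with the fact that both $\eta_s$ and $\overline{\eta}_s$ see only the diagonal entry $z$; no genuine obstacle arises.
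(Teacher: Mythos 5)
Your proof is correct and follows essentially the same route as the paper's: compute $B^{\dagger -1}$ for $B \in K$ to see that conjugate-inverse-transpose carries $K$ onto $\overline{K}$ while preserving the diagonal entry $z$ (hence $\eta_s(B) = \overline{\eta}_s(B^{\dagger -1})$), use multiplicativity of $(\cdot)^{\dagger -1}$ for well-definedness on the quotient, and note that both metrics are $\overline{\zeta_1}\zeta_2$ on representatives. You actually spell out the key compatibility $\overline{\eta}_s(B^{\dagger -1})\eta_s(B)^{-1} = 1$ more explicitly than the paper does, and your matrix $\begin{pmatrix} z & 0 \\ -\overline{b} & \overline{z}\end{pmatrix}$ corrects a typo in the paper's displayed formula, whose lower-right entry should be $\overline{z}$ rather than $0$.
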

\begin{proof}
The two maps are well-defined since the operation $(\cdot)^{\dagger -1}$ preserves the order of products and $\begin{pmatrix} z & b \\ 0 & \overline{z} \end{pmatrix}^{\dagger -1} = \begin{pmatrix} z & 0 \\ -\overline{b} & 0 \end{pmatrix}$. That the diagram Eq.~(\ref{eq:B.18}) commutes is obvious.

So, $\tilde{\varepsilon}$ is indeed a vector bundle isomorphism covering $\varepsilon$. By Eqs.~(\ref{eq:3.3}) and (\ref{eq:B.15}), it is also an Hermitian bundle isomorphism.
\end{proof}

\begin{lemma}\label{lemma:B.7}
For $(a, A) \in G$ and $[B, \zeta] \in \mathcal{E}_s$, the following holds.
\begin{equation}\label{eq:B.19}
\tilde{\varepsilon} \big( \Lambda(a,A) [B, \zeta] \big) = \overline{\Lambda} (\mathcal{P}a , A^{\dagger -1} ) \tilde{\varepsilon} \big([B, \zeta] \big)
\end{equation}
\end{lemma}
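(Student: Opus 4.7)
The plan is to reduce the claimed bundle-theoretic identity to a scalar identity about the Minkowski pairing by unwinding the definitions of $\tilde{\varepsilon}$, $\Lambda$, and $\overline{\Lambda}$ given in Eqs.~(\ref{eq:B.18}), (\ref{eq:3.4}), and (\ref{eq:B.16}), respectively. Both sides turn out to have the same base-point component $A^{\dagger -1}B^{\dagger -1}$ in $H$, so the whole content of the lemma is squeezed into comparing the two scalar exponents multiplying $\zeta$.

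Concretely, I will first compute the left-hand side by applying $\Lambda(a,A)$ to $[B,\zeta]$ to get $[AB,\,e^{-i\langle ABp_0^\pm,\,a\rangle}\zeta]$, and then apply $\tilde{\varepsilon}$, which sends $[C,\lambda]$ to $[C^{\dagger -1},\lambda]$, arriving at $[A^{\dagger -1}B^{\dagger -1},\,e^{-i\langle ABp_0^\pm,\,a\rangle}\zeta]$. Next I will compute the right-hand side by applying $\tilde{\varepsilon}$ to $[B,\zeta]$ first, obtaining $[B^{\dagger -1},\zeta]$, and then acting with $\overline{\Lambda}(\mathcal{P}a,A^{\dagger -1})$ via Eq.~(\ref{eq:B.16}), yielding $[A^{\dagger -1}B^{\dagger -1},\,e^{-i\langle A^{\dagger -1}B^{\dagger -1}\overline{p}_0^\pm,\,\mathcal{P}a\rangle}\zeta]$.

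The remaining step is the scalar identity $\langle ABp_0^\pm,\,a\rangle = \langle A^{\dagger -1}B^{\dagger -1}\overline{p}_0^\pm,\,\mathcal{P}a\rangle$. This is the only nontrivial ingredient, but it drops out quickly: by Lemma~\ref{lemma:B.1} applied with $\Lambda=AB$ (and using that $\kappa$ is a homomorphism so $(AB)^{\dagger -1}=A^{\dagger -1}B^{\dagger -1}$) together with the definition $\overline{p}_0^\pm=\mathcal{P}p_0^\pm$, one has $A^{\dagger -1}B^{\dagger -1}\overline{p}_0^\pm=\mathcal{P}(ABp_0^\pm)$. Since $\mathcal{P}\in O(1,3)$ preserves the Minkowski bilinear form, $\langle \mathcal{P}(ABp_0^\pm),\mathcal{P}a\rangle=\langle ABp_0^\pm,a\rangle$, which finishes the comparison.

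I do not anticipate any serious obstacle here; the main thing to be careful about is the order in which the anti-involution $(\cdot)^{\dagger -1}$ interacts with products in $H$ and with the parity operator $\mathcal{P}$ on $\mathbb{R}^4$, which is precisely what Lemma~\ref{lemma:B.1} is built to control. Once the definitions are expanded cleanly, the proof is a single invocation of Lemma~\ref{lemma:B.1} followed by Lorentz-invariance of $\langle\cdot,\cdot\rangle$ under $\mathcal{P}$.
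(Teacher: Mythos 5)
Your proposal is correct and follows essentially the same route as the paper's own proof: both unwind the definitions of $\tilde{\varepsilon}$, $\Lambda$, and $\overline{\Lambda}$, and reduce the claim to the identity $(AB)^{\dagger-1}\overline{p}_0^\pm=\mathcal{P}(ABp_0^\pm)$ from Lemma~\ref{lemma:B.1} combined with the $\mathcal{P}$-invariance of the Minkowski pairing. The only difference is cosmetic (you compute both sides separately and compare exponents, while the paper rewrites the left-hand side in a single chain), and your parenthetical remark about $(AB)^{\dagger-1}=A^{\dagger-1}B^{\dagger-1}$ is just matrix algebra rather than a property of $\kappa$.
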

\begin{proof}
\begin{align*}
\tilde{\varepsilon} \big( \Lambda(a,A) [B, \zeta] \big) = \tilde{\varepsilon} ( [AB, e^{-i \langle AB p_0 ^\pm , a \rangle} \zeta] ) = [(AB)^{\dagger -1} , e^{-i \langle \mathcal{P} AB p_0 ^\pm , \mathcal{P} a \rangle} \zeta] \\
= [A^{\dagger -1} B^{\dagger -1} , e^{-i \langle (AB)^{\dagger -1} \mathcal{P} p_0 ^\pm , \mathcal{P} a \rangle } \zeta] = \overline{\Lambda} (\mathcal{P}a, A^{\dagger -1} ) [B^{\dagger -1} , \zeta] \\
= \overline{\Lambda} ( \mathcal{P}a, A^{\dagger -1} ) \tilde{\varepsilon} \big([B, \zeta] \big)
\end{align*}
where we used Lemma~\ref{lemma:B.1} and the fact that $\mathcal{P} \in O(1,3)$.
\end{proof}

Via Theorems~\ref{theorem:3.2} and \ref{theorem:B.5}, let's identify the representations $\pi_{0,s} ^\pm$ and $\overline{\pi}_{0,s} ^\pm$ with the formulae Eqs.~(\ref{eq:3.5}) and (\ref{eq:B.17}), respectively. Then,

\begin{theorem}\label{theorem:B.8}
The map $V : L^2 (H/K , \mathcal{E}_s ; \mu , g) \rightarrow L^2 (H/ \overline{K} , \overline{\mathcal{E}}_s ; \mu , \overline{g})$ defined by
\begin{equation}\label{eq:B.20}
V(\psi) := \tilde{\varepsilon} \circ \psi \circ \varepsilon^{-1}, \quad \psi \in L^2 (H/K , \mathcal{E}_s ; \mu , g)
\end{equation}
is a unitary map intertwining the representation $\pi_{0,s} ^\pm \circ \delta $ with $\overline{\pi}_{0, s} ^\pm$.
\end{theorem}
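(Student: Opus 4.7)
The plan is to verify the three claims of well-definedness, unitarity, and intertwining in sequence, each reducing to ingredients already assembled in Lemmas~\ref{lemma:B.6} and \ref{lemma:B.7}.

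First, well-definedness of $V$ is immediate: since $\tilde{\varepsilon}$ is a vector bundle isomorphism covering the diffeomorphism $\varepsilon$, the composition $\tilde{\varepsilon} \circ \psi \circ \varepsilon^{-1}$ sends each $x \in H/\overline{K}$ into the fiber $(\overline{\mathcal{E}}_s)_x$ and is Borel measurable whenever $\psi$ is. For unitarity I would use two facts. The pointwise isometric property of $\tilde{\varepsilon}$ (Lemma~\ref{lemma:B.6}) gives $\overline{g}(\tilde{\varepsilon}(v),\tilde{\varepsilon}(w)) = g(v,w)$ fiber by fiber. The remaining ingredient is that $\varepsilon$ preserves $\mu$: under the identifications $H/K \cong X_0^\pm \cong H/\overline{K}$ given by $\Lambda K \mapsto \Lambda p_0^\pm$ and $\Lambda \overline{K} \mapsto \Lambda \overline{p}_0^\pm$, Lemma~\ref{lemma:B.1} yields $\Lambda^{\dagger -1} \overline{p}_0^\pm = \Lambda^{\dagger -1} \mathcal{P} p_0^\pm = \mathcal{P} \Lambda p_0^\pm$, so $\varepsilon$ is precisely the parity inversion $\mathbf{p} \mapsto -\mathbf{p}$ on $\mathbb{R}^3 \setminus \{0\}$, under which $d^3\mathbf{p}/|p^0|$ is manifestly invariant. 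Combining these two ingredients yields $\|V \psi\|_{L^2} = \|\psi\|_{L^2}$, and surjectivity follows from the bijectivity of both $\tilde{\varepsilon}$ and $\varepsilon$.

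For the intertwining relation $V \circ (\pi_{0,s}^\pm \circ \delta)(a,A) = \overline{\pi}_{0,s}^\pm(a,A) \circ V$, I would compute both sides directly using Eqs.~(\ref{eq:3.5}) and (\ref{eq:B.17}). The left-hand side unfolds to
\begin{equation*}
V\big(\Lambda(\mathcal{P}a, A^{\dagger -1}) \circ \psi \circ l_{A^{\dagger -1}}^{-1}\big) = \tilde{\varepsilon} \circ \Lambda(\mathcal{P}a, A^{\dagger -1}) \circ \psi \circ l_{A^{\dagger -1}}^{-1} \circ \varepsilon^{-1}.
\end{equation*}
Applying Lemma~\ref{lemma:B.7} with $(a,A)$ replaced by $(\mathcal{P}a, A^{\dagger -1})$, and using $\mathcal{P}^2 = I$ together with $(A^{\dagger -1})^{\dagger -1} = A$, I can pull $\tilde{\varepsilon}$ through to obtain $\overline{\Lambda}(a,A) \circ \tilde{\varepsilon} \circ \psi \circ l_{A^{\dagger -1}}^{-1} \circ \varepsilon^{-1}$. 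The right-hand side expands to $\overline{\Lambda}(a,A) \circ \tilde{\varepsilon} \circ \psi \circ \varepsilon^{-1} \circ l_A^{-1}$, so agreement reduces to the base-space identity $l_A \circ \varepsilon = \varepsilon \circ l_{A^{\dagger -1}}$. This in turn is the elementary computation $A \Lambda^{\dagger -1} = (A^{\dagger -1} \Lambda)^{\dagger -1}$ on coset representatives.

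Honestly, I do not expect any serious obstacle: Lemma~\ref{lemma:B.7} already packages the nontrivial interplay between the $G$-action via $\delta$ and the bundle isomorphism $\tilde{\varepsilon}$, and the rest is bookkeeping. The point most likely to trip me up is getting the direction of the left-translation identity correct, i.e., remembering that $\varepsilon$ conjugates $l_A$ on $H/\overline{K}$ into $l_{A^{\dagger -1}}$ on $H/K$ because the dagger-inverse reverses the order of matrix multiplication before the coset is formed.
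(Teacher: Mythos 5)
Your proposal is correct and follows essentially the same route as the paper: unitarity from the fiberwise isometry of $\tilde{\varepsilon}$, and the intertwining relation obtained by applying Lemma~\ref{lemma:B.7} (with $(a,A)$ replaced by $(\mathcal{P}a,A^{\dagger-1})$) together with the base-space identity $\varepsilon\circ l_{A^{\dagger-1}}=l_A\circ\varepsilon$. The only difference is that you make explicit the $\mu$-invariance of $\varepsilon$ (as parity inversion on $\mathbb{R}^3\setminus\{0\}$), a point the paper leaves implicit in its one-line appeal to Lemma~\ref{lemma:B.6}.
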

\begin{proof}
$V$ is unitary by Lemma~\ref{lemma:B.6}. Let $\psi \in L^2 (H/K , \mathcal{E}_s ; \mu , g)$ and $(a,A) \in G$. To see that $V$ intertwines the two representations, we observe
\begin{align*}
V\big( \pi_{0,s} ^\pm \circ \delta (a,A) \psi \big) = V \big( \pi_{0,s} ^\pm (\mathcal{P} a, A^{\dagger -1} ) \psi \big) = \tilde{\varepsilon} \circ \big( \Lambda( \mathcal{P} a , A^{\dagger -1} ) \circ \psi \circ l_{A^{\dagger} } \big) \circ \varepsilon^{-1} \\
= \overline{\Lambda} (a, A) \circ \tilde{\varepsilon} \circ \psi \circ \varepsilon^{-1} \circ l_{A^{-1}} = \overline{\pi}_{0,s} ^\pm (a, A) V (\psi)
\end{align*}
where we used Lemma~\ref{lemma:B.7} and the obvious fact $\varepsilon \circ l_{A} = l_{A^{\dagger -1}} \circ \varepsilon$.
\end{proof}

\begin{theorem}\label{theorem:B.9}
\begin{equation}\label{eq:B.21}
\pi_{0,s} ^\pm \circ \delta \cong \pi_{0, -s} ^\pm
\end{equation}
\end{theorem}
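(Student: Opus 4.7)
The plan is to combine Theorem~\ref{theorem:B.8}, which already gives $\pi_{0,s}^\pm \circ \delta \cong \overline{\pi}_{0,s}^\pm$, with part (iii) of the Mackey machine (Theorem~\ref{theorem:2.2}). It therefore suffices to establish the intrinsic equivalence $\overline{\pi}_{0,s}^\pm \cong \pi_{0,-s}^\pm$ among irreducible representations of $G$ itself. Both sides are induced from irreducible characters of the little groups attached to the distinct orbit points $\overline{p}_0^\pm$ and $p_0^\pm$ on the common orbit $X_0^\pm$, so Theorem~\ref{theorem:2.2}(iii) reduces the task to producing some $\Lambda_0 \in H = SL(2,\mathbb{C})$ with $\Lambda_0 p_0^\pm = \overline{p}_0^\pm$ and then checking that the pulled-back character $B \mapsto \overline{\eta}_s(\Lambda_0 B \Lambda_0^{-1})$ of $K = H_{p_0^\pm}$ agrees with $\eta_{-s}$.

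The natural candidate I would take is
\[
\Lambda_0 := \begin{pmatrix} 0 & 1 \\ -1 & 0 \end{pmatrix} \in SL(2,\mathbb{C}).
\]
Using the identification Eq.~(\ref{eq:1.2}) together with the covering map formula Eq.~(\ref{eq:1.3}), the conjugation $\tilde{p} \mapsto \Lambda_0 \tilde{p} \Lambda_0^\dagger$ simply swaps the two diagonal entries of $\tilde{p}$, so this one $\Lambda_0$ sends $p_0^+$ to $\overline{p}_0^+$ and $p_0^-$ to $\overline{p}_0^-$ simultaneously. A direct matrix multiplication then gives
\[
\Lambda_0 \begin{pmatrix} z & b \\ 0 & \overline{z} \end{pmatrix} \Lambda_0^{-1} = \begin{pmatrix} \overline{z} & 0 \\ -b & z \end{pmatrix} \in \overline{K},
\]
which incidentally confirms $\Lambda_0 K \Lambda_0^{-1} = \overline{K}$, as it must.

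Substituting this into Eq.~(\ref{eq:B.12}) yields $\overline{\eta}_s(\Lambda_0 B \Lambda_0^{-1}) = \overline{z}^{2s} = z^{-2s}$, which by Eq.~(\ref{eq:2.7}) is exactly $\eta_{-s}(B)$. Theorem~\ref{theorem:2.2}(iii) then produces $\overline{\pi}_{0,s}^\pm \cong \pi_{0,-s}^\pm$, and chaining this with Theorem~\ref{theorem:B.8} concludes the argument. I do not anticipate a serious obstacle: the only real care required is bookkeeping of conventions, namely the factor of $2$ linking the helicity $s$ to the exponent in Eqs.~(\ref{eq:2.7}) and (\ref{eq:B.12}), and verifying that a single choice of $\Lambda_0$ handles both signs simultaneously, which the diagonal-entry-swap picture makes transparent.
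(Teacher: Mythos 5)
Your proposal is correct and follows essentially the same route as the paper: the paper also produces a single rotation ($R=e^{\pi J^2}=\left(\begin{smallmatrix}0&-1\\1&0\end{smallmatrix}\right)$, which differs from your $\Lambda_0$ only by the central element $-I$ and so gives the identical conjugation $\left(\begin{smallmatrix}z&b\\0&\overline{z}\end{smallmatrix}\right)\mapsto\left(\begin{smallmatrix}\overline{z}&0\\-b&z\end{smallmatrix}\right)$), checks $\overline{\eta}_s(RBR^{-1})=z^{-2s}=\eta_{-s}(B)$, and invokes the Mackey orbit criterion (cited there as \cite{lee2022b}~Theorem~4.2.(c), the same statement as Theorem~\ref{theorem:2.2}(iii)) before chaining with Theorem~\ref{theorem:B.8}. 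No gaps.
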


\begin{proof}
Observe that for
\begin{equation*}
R := e^{\pi J^2} = \begin{pmatrix} 0 & -1 \\ 1 & 0 \end{pmatrix} \in SL(2, \mathbb{C}),
\end{equation*}
we have
\begin{equation*}
R p_0 ^\pm = \kappa( R) p_0 ^\pm = \overline{p}_0 ^\pm \in X_0 ^\pm.
\end{equation*}

Now, if we observe
\begin{equation*}
\overline{\eta}_s \left( R \begin{pmatrix} z & b \\ 0 & \overline{z} \end{pmatrix} R^{-1} \right) = \overline{\eta}_s \begin{pmatrix} \overline{z} & 0 \\ -b & z \end{pmatrix} = z^{-2s} = \eta_{-s} \begin{pmatrix} z & b \\ 0 & \overline{z} \end{pmatrix},
\end{equation*}
we can apply \cite{lee2022b}~Theorem~4.2.(c) to show that
\begin{equation*}
\overline{\pi}_{0, s} ^\pm \cong \pi_{0, -s} ^\pm
\end{equation*}

Now, the theorem follows from Theorem~\ref{theorem:B.8}.
\end{proof}

\begin{theorem}[The precise statement of Theorem~\ref{theorem:3.5}]\label{theorem:B.10}
Fix $0 \neq s \in \frac{1}{2} \mathbb{Z}$ and identify $\pi_{0,-s} ^\pm \cong \pi_{0,s} ^\pm \circ \delta$ using Theorem~\ref{theorem:B.9}. Then, the representation $\pi_{0,s} ^\pm \oplus \pi_{0, -s} ^\pm$, extended to $\tilde{G}$ as in Eqs.~(\ref{eq:B.9})--(\ref{eq:B.10}) is an irreducible representation of $\tilde{G}$.
\end{theorem}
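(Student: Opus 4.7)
The plan is to apply the general criterion of Theorem~\ref{theorem:B.3} with $\pi = \pi_{0,s}^\pm$. That theorem says that the $\tilde{G}$-extension defined by Eqs.~(\ref{eq:B.9})--(\ref{eq:B.10}) of a given irreducible unitary representation $\pi$ of $G$ is automatically an irreducible unitary representation of $\tilde{G}$ \emph{provided} that $\pi \circ \delta$ is not unitarily equivalent to $\pi$. So the entire task reduces to verifying this single non-equivalence condition for $\pi = \pi_{0,s}^\pm$.

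For this verification, I would chain together two results already established in the paper. First, Theorem~\ref{theorem:B.9} identifies $\pi_{0,s}^\pm \circ \delta \cong \pi_{0,-s}^\pm$; this is precisely the reason the hypotheses were set up as they were, so I can invoke it directly. Second, Theorem~\ref{theorem:2.7} asserts that the irreducible representations $\{\pi_{0,s}^\pm : s \in \tfrac{1}{2}\mathbb{Z}\}$ are \emph{pairwise inequivalent}, being classified by the helicity invariant. Since we have assumed $s \neq 0$, we have $s \neq -s$, and therefore $\pi_{0,s}^\pm \not\cong \pi_{0,-s}^\pm$, which gives $\pi_{0,s}^\pm \not\cong \pi_{0,s}^\pm \circ \delta$. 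With the hypothesis of Theorem~\ref{theorem:B.3} thus verified, the desired conclusion follows immediately.

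There is essentially no ``hard part'' left at this stage: the serious work has already been absorbed into Theorem~\ref{theorem:B.3} (which checks irreducibility by a Schur-lemma argument on the off-diagonal blocks) and Theorem~\ref{theorem:B.9} (which uses the intertwiner of Theorem~\ref{theorem:B.8} together with a rotation $R = e^{\pi J^2}$ to swap the representative little-group points $p_0^\pm$ and $\overline{p}_0^\pm$). The only mild subtlety worth flagging in the proof is the role of the hypothesis $s \neq 0$: at $s=0$, the representations $\pi_{0,0}^\pm$ and $\pi_{0,-0}^\pm$ coincide, the equivalence $\pi_{0,0}^\pm \circ \delta \cong \pi_{0,0}^\pm$ holds, and Theorem~\ref{theorem:B.3} no longer applies, so the extension procedure used here genuinely requires helicity to be nonzero (consistent with the fact that spin-$0$ massless particles admit no nontrivial parity-doubling structure). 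The proof itself can therefore be written as a short paragraph that cites Theorems~\ref{theorem:2.7}, \ref{theorem:B.3}, and \ref{theorem:B.9} in this order.
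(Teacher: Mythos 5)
Your proposal is correct and follows exactly the same route as the paper's own proof: invoke Theorem~\ref{theorem:B.3} with $\pi = \pi_{0,s}^\pm$, use Theorem~\ref{theorem:B.9} to identify $\pi_{0,s}^\pm \circ \delta \cong \pi_{0,-s}^\pm$, and conclude the required non-equivalence from the helicity classification of Theorem~\ref{theorem:2.7} since $s \neq -s$ when $s \neq 0$. Your added remark on why the hypothesis $s \neq 0$ is essential is accurate and a nice clarification, though the paper does not spell it out.
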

\begin{proof}
The proof is now obvious by Theorems~\ref{theorem:B.3} and \ref{theorem:B.9}. One only needs to observe that $\pi_{0,s} ^\pm \ncong \pi_{0, -s} ^\pm$ by Theorem~\ref{theorem:2.7}.
\end{proof}

\section{Proof of Theorem~\ref{theorem:5.2}}\label{sec:C}

Let $\tilde{\mathcal{E}}_s$ be the bundle associated with the principal bundle $H \rightarrow H/K$ and the representation $\tilde{\eta}_s : K \rightarrow U(2)$ (cf. \cite{lee2022b}~Proposition~A.2). Then, \cite{lee2022b}~Lemma~A.9 shows that $\tilde{\mathcal{E}}_{s}$ is an Hermitian $G$-bundle, called the \textit{primitive bundle associated with $\tilde{\eta}_s$}, with the metric
\begin{equation}\label{eq:C.1}
\tilde{g}\big( [B, v] , [B, w] ) = v \cdot w
\end{equation}
where $B \in H $ and $v, w \in \mathbb{C}^2$, and the $G$-action
\begin{equation}\label{eq:C.2}
\tilde{\Lambda} (a, A) [B , v] = [AB , e^{-i \langle A B p_0 , a \rangle }  v]
\end{equation}
where $A, B \in H$, $a \in \mathbb{R}^4$, and $v \in \mathbb{C}^2$.

Eq.~(\ref{eq:4.3}) and \cite{lee2022b}~Lemma~A.10 yield the following theorem.

\begin{theorem}\label{theorem:C.1}
The representation $\Pi_s$ of $G$, which represents spin-$s$ massless particles with parity inversion symmetry (cf. Definition~\ref{definition:3.6}), is equivalent to the induced representation $\tilde{U} : G \rightarrow U\big(L^2 (H/K, \tilde{\mathcal{E}}_{s}; \mu , \tilde{g})\big)$ defined as, for $(a, A) \in G$ and $\psi \in L^2(H/K , \tilde{\mathcal{E}}_{s} ; \mu , \tilde{g})$,
\begin{equation}\label{eq:C.3}
\tilde{U} (a,A) \psi = \tilde{\Lambda}(a,A) \circ \psi \circ l_A ^{-1}.
\end{equation}
\end{theorem}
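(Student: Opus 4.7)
The statement packages four assertions: (i) $F$ with $\vartheta$ is a $G$-vector bundle, (ii) $E=F/D$ with $h$ and $\lambda$ is an Hermitian $G$-bundle, (iii) $\mathcal{Q}$ is $G$-equivariant, and (iv) the induced representation $U$ is unitarily equivalent to $\Pi_s$. I would handle them in this order, deferring (iv) to a comparison with Appendix~\ref{sec:C}. The smoothness of $F$ (and hence of $E=F/D$) is already provided by Lemma~\ref{lemma:5.1}, so what remains is to verify well-definedness of the fiberwise data in Table~\ref{tab:2} and the claimed intertwining.

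For (i), one only needs to check that $\vartheta(a,\Lambda)$ sends $F_p$ into $F_{\Lambda p}$; this is immediate from $F_{\Lambda p} = \Phi(\Lambda A)R = \Phi(\Lambda) F_p$ when $p=A p_0$. The group law and fiberwise linearity are read off the formula. For (ii), the key point is the well-definedness of $h$. Independence from the choice of $\Lambda$ with $\Lambda p_0 = p$ reduces, via $\Lambda' = \Lambda B$ with $B\in K$, to showing that $\Phi(B)^{-1}$ induces an isometry on $(R/W,\langle\cdot,\cdot\rangle_{R/W})$. Using the orthogonal decomposition $R=W\oplus\epsilon(\mathbb{C}^2)$ to identify $R/W$ isometrically with $\mathbb{C}^2$ via $\epsilon$, Eq.~(\ref{eq:5.1}) shows that the induced map corresponds to the unitary $\tilde{\eta}_s(B)^{-1}$. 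Independence from the representative $z\in F_p$ is automatic because $\Phi(\Lambda)^{-1}D_p = W$. An entirely parallel check shows that $\lambda$ descends from $\vartheta$ through $\mathcal{Q}$ to a well-defined action on $E$. Claim (iii) is then a one-line verification from the definition (\ref{eq:5.3}) of $\overline{\Phi(\Lambda)}$.

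For (iv), I would introduce the primitive bundle $\tilde{\mathcal{E}}_s$ of Appendix~\ref{sec:C}, for which Theorem~\ref{theorem:C.1} identifies the induced representation with $\Pi_s$, and then construct an Hermitian $G$-bundle isomorphism
\begin{equation*}
\Psi:\tilde{\mathcal{E}}_s \longrightarrow E, \qquad \Psi([A,v]) := \bigl(A p_0,\ \Phi(A)\epsilon(v) + D_{A p_0}\bigr).
\end{equation*}
Well-definedness on equivalence classes (i.e.\ invariance under $(A,v)\mapsto (AB,\tilde{\eta}_s(B)^{-1}v)$ for $B\in K$) is exactly Eq.~(\ref{eq:5.1}) rewritten, noting that $\Phi(AB)\epsilon(\tilde{\eta}_s(B)^{-1}v)\equiv \Phi(A)\epsilon(v)\pmod{\Phi(A)W = D_{Ap_0}}$. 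That $\Psi$ is an isometry follows from the orthogonal identification $R/W\cong \mathbb{C}^2$ and the isometric nature of $\epsilon$, and the intertwining $\lambda(a,\Lambda)\circ\Psi = \Psi\circ \tilde{\Lambda}(a,\Lambda)$ is a direct comparison of Eq.~(\ref{eq:C.2}) with the formula for $\lambda$ in Table~\ref{tab:2}. Passing to $L^2$-sections then yields a unitary intertwiner between $\tilde{U}$ and $U$, and Theorem~\ref{theorem:C.1} finishes the job.

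The main obstacle is the well-definedness of $h$ in (ii): the inner product $\langle\cdot,\cdot\rangle_{R/W}$ must transform compatibly under $\Phi(K)$, and this is delicate precisely because the isomorphism $R/W\cong\mathbb{C}^2$ is canonical only as a quotient, not as a subspace. Eq.~(\ref{eq:5.1}) is exactly the hypothesis crafted to resolve this; without it, the congruence $\Phi(B)\epsilon(v)\equiv\epsilon(\tilde{\eta}_s(B)v)\pmod W$ would fail and the metric would genuinely depend on the choice of $\Lambda$ above $p$. The same subtlety reappears in verifying well-definedness of $\Psi$ in (iv), so once the $K$-invariance of the quotient inner product is carefully pinned down, both (ii) and (iv) fall into place.
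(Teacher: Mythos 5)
Your proposal does not prove the stated theorem. What you have written is a proof sketch of Theorem~\ref{theorem:5.2} (the perception-bundle theorem: $F$ with $\vartheta$ is a $G$-vector bundle, $E=F/D$ with $h$ and $\lambda$ is an Hermitian $G$-bundle, $\mathcal{Q}$ is equivariant, and the induced representation $U$ on sections of $E$ is equivalent to $\Pi_s$), not of Theorem~\ref{theorem:C.1}. The statement you were asked to prove concerns the \emph{primitive} bundle $\tilde{\mathcal{E}}_s = H\times_{\tilde{\eta}_s}\mathbb{C}^2$ and the representation $\tilde{U}$ on its $L^2$-sections; none of the data $\Phi$, $\epsilon$, $W$, $R$, $D$, $F$ or the quotient construction enters into it. Worse, your step (iv) explicitly invokes Theorem~\ref{theorem:C.1} (``for which Theorem~\ref{theorem:C.1} identifies the induced representation with $\Pi_s$ \dots\ and Theorem~\ref{theorem:C.1} finishes the job''), so as an argument for Theorem~\ref{theorem:C.1} itself it is circular.

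The actual content of Theorem~\ref{theorem:C.1} is much lighter and lies entirely upstream of the perception-bundle machinery. Two ingredients suffice. First, Eq.~(\ref{eq:4.3}): the induced-representation construction respects direct sums, so
\begin{equation*}
\Pi_s=\pi_{0,s}\oplus\pi_{0,-s}\cong \textup{Ind}_{G_{p_0}}^G\big(e^{-i\langle p_0,\cdot\rangle}(\eta_s\oplus\eta_{-s})\big)=\textup{Ind}_{G_{p_0}}^G\big(e^{-i\langle p_0,\cdot\rangle}\tilde{\eta}_s\big).
\end{equation*}
Second, the general fact (\cite{lee2022b}~Lemma~A.10, used in exactly the same way for Theorem~\ref{theorem:3.2}) that for a unitary representation $\tilde{\eta}_s$ of $K$, the representation $\textup{Ind}_{G_{p_0}}^G\big(e^{-i\langle p_0,\cdot\rangle}\tilde{\eta}_s\big)$ is realized on $L^2(H/K,\tilde{\mathcal{E}}_s;\mu,\tilde{g})$ by precisely the formula Eq.~(\ref{eq:C.3}), where $\tilde{\mathcal{E}}_s$ carries the metric Eq.~(\ref{eq:C.1}) and the $G$-action Eq.~(\ref{eq:C.2}). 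Your quotient-bundle analysis is not wasted --- it closely parallels the paper's Theorem~\ref{theorem:C.2} and belongs in the proof of Theorem~\ref{theorem:5.2} --- but it must be preceded by, not substituted for, the present statement, which is the input that makes your step (iv) legitimate.
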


Therefore, in view of \cite{lee2022b}~Proposition~5.4, the following theorem gives a proof for Theorem~\ref{theorem:5.2}.

\begin{theorem}\label{theorem:C.2}
The following map is a vector bundle isomorphism between the primitive bundle $\tilde{\mathcal{E}}_s$ and the perception bundle $E$ as defined in Theorem~\ref{theorem:5.2}, via which the Hermitian $G$-structure of the primitive bundle (Eqs.~(\ref{eq:C.1})--(\ref{eq:C.2})) is translated into the Hermitian $G$-structure of $E$ as listed in Table~\ref{tab:2}.
\begin{equation}\label{eq:C.4}
\begin{tikzcd}[baseline=(current  bounding  box.center)]
\tilde{\mathcal{E}}_s \arrow[rrrr, hookrightarrow, "{[A,v] \mapsto \left(p, \Phi(A) \epsilon(v) + D_{A p_0}\right)}" ] \arrow{drr}
& & & & E \arrow{dll} \\
& & X_0 & &
\end{tikzcd}.
\end{equation}
\end{theorem}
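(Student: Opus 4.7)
The plan is to construct the stated map $\Psi : \tilde{\mathcal{E}}_s \to E$ explicitly and verify four things in order: well-definedness on equivalence classes, fiber-wise linear isomorphism, compatibility with the Hermitian metrics, and $G$-equivariance. The unitary equivalence between the induced representation on $E$ and the representation on $\tilde{\mathcal{E}}_s$ (which is $\Pi_s$ by Theorem~\ref{theorem:C.1}) then follows automatically from \cite{lee2022b}~Proposition~5.4.

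For well-definedness, the relation $[A, v] = [AB, \tilde{\eta}_s(B)^{-1} v]$ for $B \in K$ forces me to check that $\Phi(AB)\epsilon(\tilde{\eta}_s(B)^{-1} v) - \Phi(A)\epsilon(v)$ lies in $D_{Ap_0}$ (noting $ABp_0 = Ap_0$, hence $D_{ABp_0} = D_{Ap_0}$). Factoring $\Phi(A)$ out, the problem reduces to showing that $\Phi(B)\epsilon(\tilde{\eta}_s(B)^{-1} v) - \epsilon(v) \in W$, which is precisely Eq.~(\ref{eq:5.1}) applied to $\tilde{\eta}_s(B)^{-1} v$. Over a fixed $p = Ap_0$, the target fiber is $E_p = \Phi(A)R/\Phi(A)W$, and by the orthogonal splitting $R = W \oplus \epsilon(\mathbb{C}^2)$ of Lemma~\ref{lemma:5.1} together with the injectivity of both $\Phi(A)$ and $\epsilon$, the fiber map $v \mapsto \Phi(A)\epsilon(v) + D_p$ is a linear isomorphism $\mathbb{C}^2 \xrightarrow{\cong} E_p$. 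Local smoothness follows by composing with any smooth local section $\tau : U \to H$ of $H \to H/K$, in which $\Psi$ takes the visibly smooth form $(p, v) \mapsto \Phi(\tau(p))\epsilon(v) + D_p$.

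For the metric, I must first check that $h_p$ in Table~\ref{tab:2} is well-defined, i.e., independent of the choice of $\Lambda$ with $\Lambda p_0 = p$. If $\Lambda' = \Lambda B$ for some $B \in K$, then Eq.~(\ref{eq:5.1}) combined with the unitarity of $\tilde{\eta}_s$ implies that the induced action of $\Phi(B)$ on $R/W$ is conjugate, via the splitting $R/W \cong \epsilon(\mathbb{C}^2)$, to the unitary operator $\tilde{\eta}_s(B)$ on $\mathbb{C}^2$; this gives the required invariance. Once this is established, a direct computation yields
\begin{equation*}
h_p\bigl(\Phi(A)\epsilon(v) + D_p,\, \Phi(A)\epsilon(w) + D_p\bigr) = \langle \epsilon(v) + W,\, \epsilon(w) + W \rangle_{R/W} = v \cdot w = \tilde{g}([A,v], [A,w]),
\end{equation*}
using that $\epsilon$ is isometric and that $\epsilon(\mathbb{C}^2) \perp W$ in $R$.

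Finally, $G$-equivariance is a direct unwinding: applying $\Psi$ to $\tilde{\Lambda}(a, A)[B, v] = [AB,\, e^{-i\langle ABp_0, a\rangle} v]$ produces $(ABp_0,\, e^{-i\langle ABp_0, a\rangle}\Phi(AB)\epsilon(v) + D_{ABp_0})$, which is precisely $\lambda(a, A)\Psi([B,v])$ by the definition of $\overline{\Phi(A)}$ in Eq.~(\ref{eq:5.3}). The main obstacle throughout is the pair of well-definedness checks (for $\Psi$ itself and for the metric $h_p$); both rest crucially on the compatibility condition Eq.~(\ref{eq:5.1}), and once they are in place, the remaining verifications are routine manipulations of the definitions.
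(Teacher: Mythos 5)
Your proposal is correct and follows essentially the same route as the paper's own proof: well-definedness of the map via Eq.~(\ref{eq:5.1}), fiberwise bijectivity from the splitting $R = W \oplus \epsilon(\mathbb{C}^2)$, well-definedness of $h_p$ by showing $\Phi(B)$ acts unitarily on $R/W$, the metric computation using $W \perp \epsilon(\mathbb{C}^2)$ and the isometry of $\epsilon$, and the $G$-equivariance by direct unwinding. The only cosmetic difference is that you phrase the well-definedness check through the equivalence $[A,v] = [AB, \tilde{\eta}_s(B)^{-1}v]$ while the paper computes $\Phi(AB)\epsilon(v) + D_p$ directly; these are the same verification.
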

\begin{proof}
First, since $\Phi(A) \epsilon(v) \in \Phi(A) R = F_{A p_0}$ for $A \in H$, $v \in \mathbb{C}^2$, we see that at least the expression of the map makes sense.

Observe that for $A \in H$, $p = Ap_0 \in X_0$, $B \in K$, and $v \in \mathbb{C}^2$,
\begin{align*}
\Phi(AB) \epsilon(v) + D_p = \Phi (A)\Phi(B) \epsilon (v) + \Phi(A)W = \Phi(A) \big(\Phi(B) \epsilon(v) + W \big) \\
= \Phi(A) \big(\epsilon( \tilde{\eta}_s (B) v) +W \big) = \Phi(A) \epsilon( \tilde{\eta}_s (B) v)  + D_p
\end{align*}
by the condition Eq.~(\ref{eq:5.1}). This equation implies that the map Eq.~(\ref{eq:C.4}) is a well-defined vector bundle homomorphism (cf. \cite{lee2022b}~Proposition~A.2).

If $\Phi(A) \epsilon(v) + D_p = 0 + D_p$, then by multiplying $\Phi(A)^{-1}$ from the left, we see $\epsilon(v) +W = 0 + W$. Since $W \cap \epsilon(\mathbb{C}^2) = 0$ by assumption, we see that this implies $\epsilon(v)=0$, which in turn implies $v=0$. Hence, the map Eq.~(\ref{eq:C.4}) is injective at each fiber. It is also surjective since $E_p \cong F_p / D_p \cong R /W \cong \epsilon(\mathbb{C}^2)$ by the assumption $R = W \oplus \epsilon(\mathbb{C}^2)$ and hence each fiber $E_p$ is 2-dimensional.

We now show the well-definededness of the Hermitian metric $h$ in Table~\ref{tab:2}. I.e., we claim that the expression
\begin{equation}\label{eq:C.5}
h_p (z+ D_p , w+ D_p) = \langle \Phi(A)^{-1} z + W, \Phi(A)^{-1} w + W \rangle_{R/W}
\end{equation}
for $z, w \in F_p$ and $A \in H$ such that $A p_0 = p$ does not depend on the choice of $A$. Let $B \in K$. Note that since $\tilde{\eta}_s (B)$ is a unitary on $\mathbb{C}^2$ and $\epsilon$ is an isometry, Eq.~(\ref{eq:5.1}) and the assumption $W \perp \epsilon(\mathbb{C}^2)$ imply that
\begin{align*}
\Big\langle \Phi(B) \epsilon(v) + W, \Phi(B) \epsilon(w) + W \Big\rangle_{R/W} = \Big\langle \epsilon( \tilde{\eta}_s (B) v) + W, \epsilon( \tilde{\eta}_s (B) w) + W \Big\rangle_{R/W} \\
 = \big\langle \epsilon( \tilde{\eta}_s (B) v) , \epsilon( \tilde{\eta}_s (B) w ) \big\rangle_{\epsilon(\mathbb{C}^2)} = ( \tilde{\eta}_s (B) v) \cdot ( \tilde{\eta}_s (B) w) 
= v \cdot w = \langle \epsilon(v) , \epsilon(w) \rangle_{\epsilon(\mathbb{C}^2)} \\
= \Big\langle \epsilon(v) +W , \epsilon(w) +W \Big\rangle_{R/W}
\end{align*}
for $v , w \in \mathbb{C}^2$. Thus, we see that $\Phi(B)$ acts as a unitary on $R/W$ and hence Eq.~(\ref{eq:C.5}) is well-defined.

The assumption $W \perp \epsilon(\mathbb{C}^2)$ also implies that, for $v, w \in \mathbb{C}^2$ and $A \in H$, 
\begin{align*}
v \cdot w &= \langle \epsilon(v) , \epsilon(w) \rangle_{\epsilon(\mathbb{C}^2)} = \big\langle \epsilon(v) +W , \epsilon(w) +W \big\rangle_{R/W} \\
&= \Big\langle \Phi(A)^{-1} \Phi(A) \epsilon(v) + W , \Phi(A)^{-1} \Phi(A) \epsilon(w) + W \Big \rangle_{R/W},
\end{align*}
which, together with the well-definededness, proves that the isomorphism Eq.~(\ref{eq:C.4}) indeed translate the metric $\tilde{g}$ of $\tilde{\mathcal{E}}_s$ into the metric $h$ of $E$.

Finally, the expression for the $G$-action $\lambda$ in Table~\ref{tab:2} is easily obtained from Eqs.~(\ref{eq:C.2}) and (\ref{eq:C.4}).
\end{proof}

\section*{Acknowledgments}

First and foremost, I want to express my deepest thanks to God for His unending love and mercy toward me.

Secondly, I would like to express my sincere gratitude to the anonymous reviewers of the paper \cite{lee2022b}. Their invaluable comments and suggestions to that paper have been very helpful in shaping this paper. I would also like to thank Dr. Alexei Deriglazov. He has read through the draft of this paper and given me some suggestions that have greatly helped me in revising the manuscript.

H. Lee was supported by the Basic Science Research Program through the National Research Foundation of Korea (NRF) Grant NRF-2022R1A2C1092320.

\section*{Data availability statement}

Data sharing is not applicable to this article as no new data were created or analyzed in this study.

\section*{References}
\bibliographystyle{plain}
\bibliography{bibliography}
\end{document}